\newtheorem{theorem}{Theorem}[section]
\newtheorem{definition}[theorem]{Definition}
\newtheorem{lemma}[theorem]{Lemma}
\newtheorem{corollary}[theorem]{Corollary}
\newtheorem{proposition}[theorem]{Proposition}
\renewcommand\appendix{\par
  \setcounter{section}{0}
  \setcounter{subsection}{0}
  \setcounter{figure}{0}
  \setcounter{table}{0}
  \renewcommand\thesection{\Alph{section}}
  \renewcommand\thefigure{\Alph{section}\arabic{figure}}
  \renewcommand\thetable{\Alph{section}\arabic{table}}
}
\newcommand{\C}{\mathrm{C}}
\newcommand{\IM}{\mathrm{IM}}
\newcommand{\Sym}{\mathrm{Sym}}
\newcommand{\nats}{\mathbb{N}}
\newcommand{\pr}{\mathrm{pr}}
\newcommand{\im}{\mathrm{Im}}
\newcommand{\Sol}{\mathrm{S}}
\newcommand{\PC}{\mathrm{PC}}
\newcommand{\MC}{\mathrm{MC}}
\newcommand{\NC}{\mathrm{NC}}
\begin{document}
	
\title{\textbf{On the relative power of linear algebraic approximations of graph isomorphism
}}
\author{Anuj Dawar \and Danny Vagnozzi}
\maketitle

\begin{abstract}
We compare the capabilities of two approaches to approximating graph isomorphism using linear algebraic methods: 
the \emph{invertible map tests} (introduced by Dawar and Holm in \cite{dho}) and  proof systems with algebraic rules, namely \emph{polynomial calculus}, \emph{monomial calculus} and \emph{Nullstellensatz calculus}.  In the case of fields of characteristic zero, these variants are all essentially equivalent to the the Weisfeiler-Leman algorithms.
In positive characteristic we show that the invertible map method can simulate the monomial calculus and identify a potential way to extend this to the monomial calculus.
\end{abstract}

\section{Introduction}

The \emph{graph isomorphism problem} consists in deciding whether there is an edge-preserving bijection between the vertex sets of two given graphs. Computationally, this problem is polynomial-time equivalent to finding the partition into orbits of the induced action of the automorphism group of a given graph on a fixed power $k$ of its vertex set \cite{mathon} (we shall refer to this partition as the $k$-\emph{orbit partitions} for a graph). The complexity of these problems is notoriously unresolved: while there are reasons to believe that they are not NP-complete, it is still an open problem as to whether they are in P. The best known upper bound to their computational time is quasi-polynomial, which follows from a breakthrough by Babai \cite{babai}.

There has been a surge of interest recently in linear-algebraic approaches to the graph isomorphism problem.  This is partly because the examples which give hard instances to standard graph isomorphism algorithms code systems of linear equations over finite fields and partly because linear-algebraic groups seem to be the barrier to improving Babai's algorithm.  In this paper, we consider two distinct methods for incorporating algorithms for solving linear systems into graph isomorphism solvers and compare them.  The first is based based on the use of algebraic proofs systems, such as the polynomial calculus and the second are generalizations of the Weisfeiler-Leman method, based on stability conditions and coherent algebras.  We next look at these two approaches more closely.

Let $\Gamma$ and $\Gamma'$ be graphs with vertex set $V$ and adjacency matrices $A$ and $A'$ respectively. Then, $\Gamma$ and $\Gamma'$ are isomorphic if, and only if, there is a $V\times V$ permutation matrix $X$ such that $AX=XA'$. The latter matrix equation can be viewed as a system of linear equations in the variables $x_{vw}$ which are the entries of $X$ for $v,w\in V$. If  we add the conditions that all rows and columns of $X$ must sum to one, and all variables must be non-negative integers, then such a linear program has an integer solution if, and only if, $\Gamma$ and $\Gamma'$ are isomorphic. The feasibility of a linear program over the integers is, however, not known to be decidable in polynomial time, so it is natural to investigate what happens when one searches for rational solutions. It has been shown by Tinhofer \cite{tinhofer} that such a solution exists if, and only if, $\Gamma$ and $\Gamma'$ are not distinguished by \emph{na\"{i}ve colour refinement}. The latter is a method consisting of iteratively refining a partition of the set of vertices of a graph until reaching a stable labelled partition, which can be seen as a \emph{canonical colouring} of said graph, approximating the orbits of the action of its automorphism group.

For every $k\in \nats$, the $k$-\emph{Weisfeiler-Leman} algorithm is a  generalization of the na\"{i}ve colour refinement, giving an approximation of the $k$-orbit partition. For a graph with vertex set $V$, each of these algorithms outputs in time $|V|^{O(k)}$ a canonical labelled partition of $V^k$ satisfying a stability condition and respecting local isomorphism. Informally, they can be seen as forming a family of algorithms, each defining a notion of equivalence on both graphs and tuples of vertices of graphs. A result by Cai, F\"{u}rer and Immerman \cite{cfi} shows how to construct graphs $\Gamma_k$ of size $O(k)$ for which the $k$-Weisfeiler-Leman algorithm fails to produce the $k$-orbit partition. In the same paper, it is shown that the equivalence classes of the output partition of the $k$-Weisfeiler-Leman algorithm coincide with the equivalence classes of $k$-tuples of vertices distinguished by \emph{counting logic} formulae with at most $k+1$ variables. Thus, one deduces from the tight connection made by Immerman and Lander (see Theorem 2.3 in \cite{daw1}), that the equivalence notions defined by the Weisfeiler-Leman family of algorithms delimit the expressive power of \emph{fixed point logic with counting} (FPC). Intuitively, one such limitation is the expressibility of solvability of systems of linear equations over finite fields, since the above mentioned constructions by Cai, F\"{u}rer and Immerman are essentially graph encodings of systems of linear equations over $\mathbb{Z}_2$ \cite{ats}. This has therefore prompted research into families of algorithms graded by the natural numbers, whose notion of equivalence on tuples of vertices of graphs is conceptually a linear algebraic invariance over some field $\mathbb{F}$. One such family is that of the \emph{invertible map tests} over $\mathbb{F}$, first defined in \cite{dho}. For any graph, the $k^{th}$ algorithm of this family also produces a canonical labelled partition of $k$-tuples of its vertices, satisfying a stability condition and respecting local isomorphism, thus giving another notion of equivalence on both graphs and $k$-tuples of vertices thereof. Furthermore, such equivalences delimit the expressive power of the extension of fixed-point logic with quantifiers over linear algebraic operators as shown in \cite{dgp}.

For a fixed characteristic, the output of the $k$-invertible map tests is independent of the field; as such, $\mathbb{F}$ will hereafter be taken to be a prime field without loss of generality. Informally, one can claim that the family of equivalences on tuples defined by the Weisfeiler-Leman algorithms and that defined by the invertible map tests over $\mathbb{Q}$ \emph{simulate} each other in the following sense. For every $k\in \nats$, there is some $k'\in \nats$ such that for any graph, any pair of $k$-tuples of its vertices distinguished by the $k$-Weisfeiler-Leman algorithm are distinguished by the $k'$-invertible map test over $\mathbb{Q}$. Conversely, for every $k\in \nats$ there is a $k'\in \nats$ such that for any graph, any pair of $k$-tuples of vertices distinguished by the $k$-invertible map test over $\mathbb{Q}$ is distinguished by the $k'$-Weisfeiler-Leman algorithm. If the characteristic of $\mathbb{F}$ is positive the former statement holds, but not the latter; indeed, it is shown in \cite{holm} and \cite{dgp} how one can construct graphs $\Gamma_{k,p}$ for each $k\in \nats$ and prime number $p$, for which the $3$-invertible map test over $\mathbb{Z}_p$ outputs the $3$-orbit partition, but the output of the $k$-invertible map test over $\mathbb{Z}_q$ with $q\neq p$ is strictly coarser than the $k$-orbit partition.

The second approach to approximating the orbit partition, relying on linear algebraic operations that we consider is that of algebraic proof systems~\cite{imp,clegg}.  These systems are the subject of very active study in the area of proof complexity.  They have been studied specifically in the context of graph isomorphism in~\cite{berkh} and~\cite{g2p2}.  In particular, the proof systems studied are the  \emph{polynomial calculus} (PC), and the weaker \emph{Nullstellensatz calculus} (NC) and \emph{monomial calculus} (MC).  Each of these gives, for a fixed field $\mathbb{F}$ a  set of rules $\mathcal{R}$ dictating how new polynomials, with coefficients in $\mathbb{F}$, may be derived from an initial set of polynomials, which we shall refer to as \emph{axioms}.   For a graph $\Gamma$ on $V$, we say that the corresponding calculus distinguishes $\vec{u},\vec{v}\in V^k$ if there is a $\mathcal{R}$-derivation of $x_{u_1v_1}x_{u_2v_2}\hdots x_{u_kv_k}$ from $\mathrm{Ax}(\Gamma)\subset\mathbb{F}[\{x_{uv}| u,v\in V\}]$, a set of axioms encoding the atomic types of the input graph $\Gamma$.  Say that a derivation has \emph{degree} $d$ if all polynomials occurring in the derivation have degree at most $d$.  For each of the calculi, fixing the degree $d$ gives us a polynomial-time algorithm for checking the existence of a derivation and hence a polynomial-time approximation of the orbit partition on graphs.  Again, we may restrict $\mathbb{F}$ to a prime field without loss of generality---this will become clear once we define properly the set $\mathrm{Ax}(\Gamma)$ and the derivation rules in Section~\ref{calculi}.

The question we address in this paper is how the approximations of the orbit partition obtained by these algebraic proof systems for a given field $\mathbb{F}$ compare with those we get from the invertible map test.  In the case of fields of characteristic zero, the answer is quite clear, as both approaches yield algorithms that are (up to constant factors) equivalent to the Weisfeiler-Leman algorithms.  This is shown for the invertible maps in~\cite{addv} and for the polynomial calculus in~\cite{g2p2}, giving the following statement.
\begin{theorem}
  For any $k\in \nats$ and $\vec{u},\vec{v}\in V^k$, there is a $k'\in \nats$ such that if $x_{u_1v_1}\hdots x_{u_kv_k}$ has a degree $k$ PC derivation over $\mathbb{Q}$ from $\mathrm{Ax}(\Gamma)$, then $\vec{u}$ and $\vec{v}$ are distinguished by the $k'$-invertible map test over $\mathbb{Q}$. Conversely, for any $k\in \nats$ and $\vec{u},\vec{v}\in V^k$, there is a $k'\in \nats$ such that if $\vec{u}$ and $\vec{v}$ are distinguished by the $k$-invertible map test over $\mathbb{Q}$, then $x_{u_1v_1}\hdots x_{u_kv_k}$ has a degree $k'$ PC derivation over $\mathbb{Q}$ from $\mathrm{Ax}(\Gamma)$.
\end{theorem}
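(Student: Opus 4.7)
The plan is to prove this result by composing two existing equivalences that bridge through the Weisfeiler--Leman (WL) family. In characteristic zero, both the invertible map test and the polynomial calculus are separately equivalent (up to constant factors in the defining parameter) to WL, so the simulation between them can be obtained by routing through WL as an intermediary.

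For the forward direction, I would begin with a degree-$k$ PC derivation of $x_{u_1v_1}\cdots x_{u_kv_k}$ from $\mathrm{Ax}(\Gamma)$ over $\mathbb{Q}$. Invoking the result of~\cite{g2p2}, which states that the existence of such a derivation implies distinguishability by some $k''$-WL where $k''$ depends only on $k$, gives that $\vec{u}\Cequiv^{k''}\vec{v}$ fails. Then I would invoke the result of~\cite{addv}: in characteristic zero, any pair distinguished by $k''$-WL is also distinguished by the $k'$-invertible map test over $\mathbb{Q}$ for a suitable $k'$ depending only on $k''$, and hence on $k$. Composing the two bounds yields the desired $k'$.

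For the converse, one applies the same two results in the opposite order. From distinguishability by the $k$-invertible map test over $\mathbb{Q}$, the result of~\cite{addv} produces a $k''$ such that the pair is distinguished by the $k''$-WL algorithm. Then the PC side of~\cite{g2p2} guarantees a degree-$k'$ PC derivation of the corresponding monomial from $\mathrm{Ax}(\Gamma)$, where $k'$ depends only on $k''$ and hence only on $k$.

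The substantive content of the theorem lies entirely in the two cited equivalences; the remaining work is essentially bookkeeping to verify that the quantitative bounds in the two simulations compose into a well-defined function $k\mapsto k'$. The only minor subtlety worth checking is that the axiom set $\mathrm{Ax}(\Gamma)$ and the notion of ``distinguishing'' used in~\cite{g2p2} agree with the conventions adopted here (in particular that $\vec{u},\vec{v}$ are distinguished precisely when the monomial $x_{u_1v_1}\cdots x_{u_kv_k}$ can be derived, as opposed to when the system augmented by its negation is refutable), but this is a straightforward translation and not a real obstacle.
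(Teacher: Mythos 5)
Your proposal matches the paper's intended argument: the paper itself offers no explicit proof of this theorem, presenting it instead as the immediate consequence of routing through the Weisfeiler--Leman hierarchy, with the IM--WL equivalence over $\mathbb{Q}$ taken from~\cite{addv} and the PC--counting-logic equivalence over $\mathbb{Q}$ taken from~\cite{g2p2} (and Theorem 4.4 of~\cite{berkh} for the MC direction). Your two-step composition, including the bookkeeping on how $k'$ depends on $k$, is exactly what the authors had in mind.
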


In the case of positive characteristic, we show (in Section~\ref{sec:monomial}) the definability of derivations of MC in FPS$(p)$, an extension of fixed-point logic with quantifiers for the solvability of systems of linear equations over fields of characteristic $p$.  This implies, in particular, that the approximation of the orbit partition obtained by MC in characteristic $p$ is no finer than that obtained by the invertible map test in characteristic $p$.
\begin{theorem}\label{thm:main1}
  For any prime number $p$, $k\in \nats$ and $\vec{u},\vec{v}\in V^k$, there is a $k'\in \nats$ such that if $x_{u_1v_1}\hdots x_{u_kv_k}$ has a degree $k$ MC derivation over $\mathbb{Z}_p$ from $\mathrm{Ax}(\Gamma)$, then $\vec{u}$ and $\vec{v}$ are distinguished by the $k'$-invertible map test over $\mathbb{Z}_p$.
\end{theorem}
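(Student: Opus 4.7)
The plan is to express degree-$k$ MC derivability in the logic FPS$(p)$---fixed-point logic extended with quantifiers for the solvability of systems of linear equations over $\mathbb{Z}_p$---and then to invoke the known capture theorem that any FPS$(p)$-definable property of graphs is preserved by the $k'$-invertible map test in characteristic $p$, for some $k'$ depending only on the formula. Since ``$x_{u_1v_1}\cdots x_{u_kv_k}$ has a degree-$k$ MC derivation from $\mathrm{Ax}(\Gamma)$'' will be definable by a single FPS$(p)$ formula with free variables for $\vec u$ and $\vec v$, this immediately yields the required $k'$.

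The first step is to index monomials of degree at most $k$ uniformly by tuples over $V$ (there are only $|V|^{O(k)}$ of them), and to characterise the MC-derivable set $D$ as the least set of monomials that contains the axioms and is closed under the one-step monomial-calculus rule. The crucial observation is that, over $\mathbb{Z}_p$, this one-step closure is a linear-algebra condition: a monomial $m$ lies in the closure of a current set $S$ iff $m$ can be written as a $\mathbb{Z}_p$-linear combination of products of the form $q\cdot a$ (with $a$ an axiom) and monomial-multiples of members of $S$, subject to the degree-$k$ bound. Solvability of the resulting linear system in the unknown coefficients is precisely what the solvability quantifier of FPS$(p)$ expresses, provided the coefficient matrix is itself definable as a relation on tuples over $V$.

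With the one-step rule available as an FPS$(p)$ formula, the set $D$ is built as the least fixed point of the corresponding inflationary operator; convergence occurs in polynomially many stages because the universe of bounded-degree monomials is polynomial in $|V|$, so the fixed-point operator of FPS$(p)$ suffices. Evaluating the resulting formula on the target monomial $x_{u_1v_1}\cdots x_{u_kv_k}$ gives the desired FPS$(p)$ predicate, and the capture theorem for FPS$(p)$ with respect to the invertible map tests in characteristic $p$ then completes the argument.

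The main technical obstacle is the uniform translation of the one-step MC rule into a single solvability statement interpretable in FPS$(p)$. One must index the unknowns and the equations of the linear system by tuples of vertices (so that the ambient vocabulary of the logic suffices), arrange the coefficient matrix so that it is itself FPS$(p)$-definable from the graph and the current stage of the fixed-point iteration, and keep careful track of the degree bound so that no intermediate monomial grows beyond degree $k$. Once this bookkeeping is in place, the closure under least fixed points and the appeal to the invariance of FPS$(p)$ formulas under invertible map equivalence are routine, and the theorem follows.
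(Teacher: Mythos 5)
Your approach follows the paper's proof closely in its first half: translating degree-$k$ monomial-calculus derivability into an FPS$(p)$-definable inflationary fixed point, with the crucial observation that the one-step closure of the derived set is decided by a solvability quantifier because the set of monomials in the $\mathbb{Z}_p$-span of the current stage is \emph{canonical}. This is precisely the content of Lemma~\ref{4.1} and Corollary~\ref{cory}. In the second half you appeal to a pre-existing capture theorem relating FPS$(p)$-definability to the invertible map test over $\mathbb{Z}_p$; such a theorem is indeed available in principle, since $\mathrm{slv}_p$ is a special case of the linear-algebraic operators of \cite{dgp}, whose associated finite-variable logic is captured by the invertible map test. The paper, however, deliberately avoids this shortcut. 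It introduces the solvability operators $\Sol_{k,r}^p$ and proves directly (Theorem~\ref{fund} together with Lemma~\ref{sec3}) that $\Sol_k^p$-stable partitions already absorb FOS$_{k-1}(p)$-distinguishability, and that $\IM_k^p$-stable partitions refine $\Sol_k^p$-stable ones. The reason for this detour is that the $\Sol_k^p$-stability characterisation is exactly what Lemma~\ref{secondone} needs in order to show the converse simulation of FPS$(p)$ by the Nullstellensatz calculus, which a black-box appeal to the capture theorem cannot deliver. So your route is sound for Theorem~\ref{thm:main1} in isolation, but it forgoes the intermediate machinery that powers the remaining results of the paper.
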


In the other direction, we are able to show that the distinguishing power of FPS$(p)$ can be simulated by NC.  Thus, this apparently weaker proof system is able to simulate (as far as the graph isomorphism problem is concerned) PC in characteristic zero and at least MC in positive characteristic.
\begin{theorem}\label{thm:main2}
\begin{enumerate}
\item For any $k\in \nats$ and $\vec{u},\vec{v}\in V^k$, there is a $k'\in \nats$ such that if $x_{u_1v_1}\hdots x_{u_kv_k}$ has a degree $k$ PC derivation of over $\mathbb{Q}$ from $\mathrm{Ax}(\Gamma)$, then there is also a degree $k'$ NC derivation over $\mathbb{Q}$ from $\mathrm{Ax}(\Gamma)$.
\item For any prime number $p$, $k\in \nats$ and $\vec{u},\vec{v}\in V^k$, there is a $k'\in \nats$ such that if $x_{u_1v_1}\hdots x_{u_kv_k}$ has a degree $k$ MC derivation over $\mathbb{Z}_p$ from $\mathrm{Ax}^{(k)}(\Gamma)$, then it also has a degree $k'$ NC derivation over $\mathbb{Z}_p$ from $\mathrm{Ax}(\Gamma)$.
\end{enumerate}
\end{theorem}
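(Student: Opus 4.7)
The plan is to factor both parts of the theorem through the invertible map test, reducing them to a single technical claim: if $\vec{u}$ and $\vec{v}$ are distinguished by the $k$-invertible map test over a field $\mathbb{F}$, then $x_{u_1v_1}\cdots x_{u_kv_k}$ admits a Nullstellensatz certificate from $\mathrm{Ax}(\Gamma)$ of degree bounded by some $k'$ depending only on $k$. For part (1), this suffices because the characteristic-zero equivalence of degree-$k$ PC over $\mathbb{Q}$ with $k$-Weisfeiler-Leman \cite{g2p2}, combined with the equivalence of $k$-Weisfeiler-Leman with the invertible map test over $\mathbb{Q}$ \cite{addv}, lets us route any PC derivation through the IM test. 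For part (2), Theorem \ref{thm:main1} already performs the corresponding step, taking us from an MC derivation over $\mathbb{Z}_p$ directly to distinguishability by the $k'$-invertible map test over $\mathbb{Z}_p$. The common reduction then amounts to showing that the distinguishing power of FPS$(p)$---and of FPC in the $\mathbb{Q}$ case---is absorbed by NC.

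To produce the NC certificate, I propose to work with the stable matrix algebra associated to the $k$-invertible map test on $\Gamma$, which is closed under Schur products, matrix products, transpose, and the solvability-of-systems operators that make the IM test stronger than Weisfeiler-Leman. When $\vec{u}$ and $\vec{v}$ lie in distinct IM-cells at termination, some invariant in this algebra separates them; I plan to translate that invariant into an explicit polynomial combination $\sum c_i f_i$ of axiom multiples. The Boolean axioms $x_{uv}^2 - x_{uv}$ together with the row/column sum and structure-preservation axioms make the common zero set of $\mathrm{Ax}(\Gamma)$ consist exactly of permutation matrices encoding $\Gamma$-automorphisms, so any polynomial that vanishes on all such automorphisms sending $\vec{u}$ to $\vec{v}$ must lie in the ideal; only the degree bound on $c_i$ remains to be established.

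The main obstacle is that NC, unlike PC or MC, is a static proof system: the whole certificate must appear as a single linear combination, while the IM test iterates for polynomially many refinement rounds before stabilizing, and naive unrolling would compound the degree at each round. I intend to sidestep this by exploiting stability at termination, pulling a single algebraic witness out of the final IM algebra rather than recording the refinement history round by round. A secondary difficulty is that IM-witnesses arise from ranks and null spaces of tuple-indexed matrices, which are linear-algebraic rather than syntactic objects; the Boolean and permutation-matrix axioms will need to be used to collapse higher-degree dependencies in these rank computations into coefficients of degree at most $O(k)$ in the $x_{uv}$, so that the resulting Nullstellensatz certificate has degree bounded uniformly by a function of $k$ alone.
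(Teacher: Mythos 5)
Your proposal routes both parts through the invertible map (IM) test and reduces to the claim that IM-distinguishability of $\vec{u},\vec{v}$ yields an NC certificate of bounded degree. This is not the paper's route, and the difference matters. The paper instead passes through the \emph{solvability} operators $\Sol_k^p$, which are strictly \emph{weaker} than $\IM_k^p$ (Lemma~\ref{sec3} shows $\IM$-stable implies $\Sol$-stable, so $[\alpha_{k,\Gamma}]^{\IM_k^p}$ refines $[\alpha_{k,\Gamma}]^{\Sol_k^p}$). The key technical ingredient is Lemma~\ref{secondone}: the $\equiv_{\NC_k}^p$ partition is $\Sol_k^p$-stable, hence refines $[\alpha_{k,\Gamma}]^{\Sol_k^p}$, so $\Sol$-distinguishability implies NC-distinguishability. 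Your claim ``IM-distinguishability implies NC-distinguishability'' is strictly stronger than this and is not established anywhere in the paper; since $\IM$ can in principle distinguish pairs $\Sol$ cannot, the claim may well be false, and your argument collapses without it. The reason the paper's proof succeeds with $\Sol$ and plausibly would not with $\IM$ is that the $\Sol$-stability condition is a \emph{linear} feasibility condition ($\chi_\sigma M = M\xi_\sigma$ with $M\in\mathfrak{P}_{V^r}(\mathbb{Z}_p)$), whose infeasibility translates directly into a linear Nullstellensatz combination (this is Lemmas~\ref{ll0}, \ref{ll1} and \ref{thirdone}); the $\IM$-stability condition requires an \emph{invertible} conjugating matrix, a non-linear constraint with no such translation. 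Relatedly, you misdescribe the IM test as being defined via ``solvability-of-systems operators''---it is defined via matrix conjugation; the solvability operator is the paper's deliberately weakened variant introduced precisely to make the NC connection work.

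A secondary issue is that the central claim (whatever intermediary is chosen, $\Sol$ or $\IM$) is only sketched, not proved. ``Pulling a single algebraic witness out of the final IM algebra'' and ``collapsing higher-degree dependencies in rank computations'' are not arguments; the actual certificate construction in Section~\ref{sec:nullstellensatz} requires deriving, from $\mathrm{Ax}(\Gamma_{\vec{v}\to\vec{u}})$, specific degree-$k$ polynomials encoding the system $\chi_\sigma M = M\xi_\sigma$, and showing that non-distinguishability forces a common root of these linear forms. Finally, even in part (1) your detour through the Weisfeiler-Leman test is unnecessary; the paper goes directly from degree-$k$ PC over $\mathbb{Q}$ to $\mathcal{C}_{k'}$-distinguishability (citing Theorem 6.6 of \cite{g2p2}) and then to $\Sol^0_{k'+1}$-distinguishability via Corollary~\ref{basiscase}.
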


 From this, a strengthening of Theorem 6.3 in \cite{berkh} also follows. Let $V$ be the vertex set of $\Gamma_{k,p}$ as above.
\begin{theorem}\label{thm:main3}
If $\vec{u},\vec{v}\in V^3$ are not in the same equivalence class of the $3$-orbit partition of $\Gamma_{k,p}$, then $x_{u_1v_1}x_{u_2v_2}x_{u_3v_3}$ has a degree $3$ NC derivation over $\mathbb{Z}_p$ from $\mathrm{Ax}(\Gamma_{k,p})$.
\end{theorem}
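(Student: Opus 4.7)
The plan is to concatenate two already-available ingredients. On one side, the constructions of Holm~\cite{holm} and Dawar--Gr\"adel--Pakusa~\cite{dgp} imply that the $3$-invertible map test over $\mathbb{Z}_p$ outputs exactly the $3$-orbit partition of $\Gamma_{k,p}$. Consequently, any two triples $\vec{u},\vec{v}\in V^3$ lying in distinct $3$-orbits are already separated by the $3$-IM test over $\mathbb{Z}_p$, and hence by an FPS$(p)$-formula of arity at most $3$, via the correspondence recalled in the introduction.

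On the other side, I would appeal to the FPS$(p)$-to-NC simulation that also underlies Theorem~\ref{thm:main2}: every pair distinguished in FPS$(p)$ gives rise to an NC derivation over $\mathbb{Z}_p$ of the corresponding monomial from $\mathrm{Ax}(\Gamma)$. Composing the two steps produces an NC derivation of $x_{u_1v_1}x_{u_2v_2}x_{u_3v_3}$ from $\mathrm{Ax}(\Gamma_{k,p})$; it then remains to argue that this derivation has degree exactly $3$.

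The main technical hurdle is precisely this degree control. Theorem~\ref{thm:main2} only states the existence of \emph{some} $k'$, while Theorem~\ref{thm:main3} insists on $k'=3$. To close this gap, I would inspect the FPS$(p)$-to-NC simulation used to prove Theorem~\ref{thm:main2} and verify that the degree of the NC derivation it produces is controlled by the arity of the starting FPS$(p)$-formula, rather than by some larger function of its quantifier depth or of an auxiliary Weisfeiler--Leman dimension. Because the $3$-IM test already refines to the full $3$-orbit partition on $\Gamma_{k,p}$, the distinguishing FPS$(p)$-formula here has arity $3$ and no higher-arity stability conditions need to be invoked along the way, so the NC derivation can be kept at degree $3$. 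This tight degree bookkeeping is exactly what yields the sharpening of Theorem~6.3 of~\cite{berkh}.
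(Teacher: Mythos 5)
There is a genuine gap in your step from IM-distinguishability to FPS$(p)$-distinguishability. The correspondence recalled in the introduction relates the invertible map tests to the extension of fixed-point logic with \emph{linear algebraic operators} (\cite{dgp}), of which the solvability quantifier is only a special, weaker, case. The relationship between the two in this paper goes in the opposite direction from what you need: Lemma~\ref{sec3} shows that $\IM_{k,r}^c$-stability implies $\Sol_{k,r}^c$-stability, so IM-equivalence refines $\Sol$-equivalence, and hence (via Theorem~\ref{fund}) tuples that are FOS$(p)$-distinguishable are IM-distinguishable --- but \emph{not} conversely. Knowing that the $3$-invertible map test over $\mathbb{Z}_p$ computes the $3$-orbit partition of $\Gamma_{k,p}$ (Theorem~\ref{thim}) does not by itself tell you that FPS$(p)$ or FOS$(p)$ formulae can separate those orbits, since the IM test may be strictly stronger. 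The paper closes this gap with a separate statement, Theorem~\ref{needed}, asserting directly that FOS$(q)$-formulae (with $q=p$) already separate the orbits of $\Gamma_{k,p}$; that theorem is not derivable from the IM result by a general simulation.

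Once that is supplied, the paper's mechanism for converting logical distinguishability into an NC derivation is Lemma~\ref{secondone} (the $\equiv_{\NC_k}^p$-partition is $\Sol_k^p$-stable) combined with the contrapositive of Theorem~\ref{fund}, rather than a generic ``FPS$(p)$-to-NC simulation'' extracted from the proof of Theorem~\ref{thm:main2}. Your degree bookkeeping is also left unexamined: Theorem~\ref{fund} relates distinguishability by FOS$_{k-1}(p)$-formulae to $\Sol_k^p$-stability, so to land at $\NC_3$ one needs distinguishability by FOS$_2(p)$-formulae, not merely by arity-$3$ FPS$(p)$-formulae; the latter would only yield $\Sol_4^p$-distinguishability and hence degree $4$. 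That two-variable bound is again precisely what Theorem~\ref{needed} provides in the $q=p$ case.
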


\paragraph{Structure of the paper.} We first give an overview of the relevant background (Sections~\ref{sec:prelim} and~\ref{sec:operators}).  Section~\ref{sec:monomial} is mostly concerned with the definability of monomial calculus refutations over $\mathbb{Z}_p$ in FPS$(p)$ and contains the proof of Theorem \ref{thm:main1}. Theorems \ref{thm:main2} and \ref{thm:main3} are proven in Section~\ref{sec:nullstellensatz}.  We conclude by discussing related open problems and ways to approach them.

\section{Preliminaries}\label{sec:prelim}
In this section, after introducing some notational conventions, we
recall some notions on partitions introduced in Sections 3 and 4 of
\cite{addv} and briefly describe the logics of interest.

\paragraph{Notational conventions.} All sets are finite unless stated otherwise. Given two sets $V$ and $I$, a tuple in $V^I$ is denoted by $\vec{v}$, and its $i^{th}$ entry by $v_i$, for each $i\in I$. We use the notation $(v_i)$ for $i\in I$ to denote the element of $V^I$ with $i^{th}$ element equal to $v_i$. For a field $\mathbb{F}$, the element of $\mathbb{F}^V$ with all its entries equal to $1$ is denoted by $\mathbf{1}_V$. We set $[k]=\{1,2,\hdots,k\}\subset\mathbb{N}$ and define $[k]^{(r)}=\{\vec{x}\in [k]^r \mid x_i\neq x_j\;\forall i,j\in [r],\;i\neq j\}$ for $r\leq k$.

For any $\vec{v}\in V^k$, $\vec{u}\in V^r$, and $\vec{i}\in [k]^{(r)}$ we define $\vec{v}\langle \vec{i},\vec{u}\rangle\in V^k$ to be the tuple with entries
$$
(\vec{v}\langle \vec{i},\vec{u}\rangle)_j=\begin{cases}
u_{i_s}\;\,\text{if $j=i_s$ for some $s\in [r]$}\\
v_{j}\;\,\text{otherwise.}\\
\end{cases}
$$
In other words, $\vec{v}\langle \vec{i},\vec{u}\rangle$ is the tuple obtained from $\vec{v}$ by substituting the elements of $\vec{u}$ in the positions specified by $\vec{i}$. 

Given two tuples $\vec{v}\in V^r$ and $\vec{w}\in V^s$, their \emph{concatenation} is denoted by $\vec{v}\cdot \vec{w}\in V^{r+s}$. More precisely, $\vec{v}\cdot\vec{w}$ is the tuple with entries
$$
(\vec{v}\cdot\vec{w})_i=\begin{cases}
v_i \;\,& \textrm{if $i\in [r]$}\\
w_j\;\,& \textrm{if $i=j+r$}.
\end{cases}
$$

For a relation $R\subseteq V^2$ we define the \emph{adjacency matrix} of $R$ to be the $V\times V$ matrix whose $(u,v)$ entry is $1$ if $(u,v)\in R$ and $0$ otherwise.

\subsection{Labelled partitions and refinement operators}

A \emph{labelled partition} of a set $A$ is a map $\gamma:A\rightarrow
X$, where the elements of $X$ are sometimes referred to as
\emph{colours}, and $\gamma(a)$ as the colour of $a$ in $\gamma$.
Denote the class of labelled partitions of $A$ by $\mathcal{P}(A)$.
For equivalence relations $\mathfrak{R}$ and $\mathfrak{S}$ on $A$ we
write $\mathfrak{R}\preceq _A \mathfrak{S}$ and say that
$\mathfrak{S}$ \emph{refines} $\mathfrak{R}$ if, whenever $a,b \in A$
are in the same equivalence class of $\mathfrak{S}$, they are also in
the same equivalence class of $\mathfrak{R}$. We extend the partial
order $\preceq_A$ to labelled paritions by writing $\gamma \preceq_A
\rho$ to mean that the equivalence relation $\{ (a,b) \mid\ \rho(a) =
\rho(b)\}$ refines the relation $\{(a,b)\mid \gamma(a) = \gamma(b)\}$.  Note that this does not require that the co-domains of $\gamma$ and $\rho$ are the same or indeed related in any way.  We omit the subscript $A$ where the set is clear from the context. 

Define an action of $\Sym(k)$ on $V^k$ by setting $\vec{v}^\pi$ to be the element of $V^k$ with $i^{th}$ entry $v_{\pi^{-1}(i)}$. $\gamma\in\mathcal{P}(V^k)$ is said to be \emph{invariant} if $\gamma(\vec{u})= \gamma(\vec{v})$ implies $\gamma(\vec{u}^\pi) = \gamma(\vec{v}^\pi)$ for all $\pi\in\Sym(k)$ and $\vec{u},\vec{v}\in V^k$. For $t\in[k]$, the $t$-\emph{projection} of $\gamma$ is defined to be the labelled partition $\pr_t\gamma\in\mathcal{P}(V^t)$ such that for all $\vec{u}\in V^t$
$$
\pr_t\gamma(\vec{u})=\gamma(u_1,u_2,\hdots,u_t,\hdots,u_t).
$$
For $\vec{v}\in V^k$, with $t,k$ as above, we similarly define $\pr_t\vec{v}$ to be the $t$-tuple $(v_1,v_2,\hdots,v_t)$.

For a labelled partition $\gamma\in\mathcal{P}(V^k)$ and $t\in [k]$, we say that $\vec{u},\vec{v}\in V^t$ are \emph{distinguished} by $\gamma$ if $\pr_t\gamma(\vec{u})\neq\pr_t\gamma(\vec{v})$.

\begin{definition}[Graph-like partition\footnote{The definition of a graph-like partition in \cite{addv} has an additional assumption (that of $k$-\emph{consistency}). One can show, however, that the results in said paper still hold true without the $k$-consistency assumption.}]
$\gamma$ is said to be \emph{graph-like} if it is invariant and for all $i,j\in[k]$
$$\gamma(\vec{u})=\gamma(\vec{v})\implies (u_i=u_j\implies v_i=v_j).$$
\end{definition}
In other words, a partition of $V^k$ is graph-like if it refines the
partition into equality types.

For $\vec{i}\in [k]^{(2r)}$ and $\vec{v}\in V^k$, define the $(\vec{i},\vec{v})$-\emph{character vector} of $\gamma\in\mathcal{P}(V^k)$ to be the tuple $\vec{\chi}=(\chi_\sigma)_{\sigma\in\im(\gamma)}$ where $\chi_\sigma$ is the adjacency matrix of the relation $\{(\vec{x},\vec{y})\in (V^{r})^2 \mid \gamma(\vec{v}\langle\vec{i},\vec{x}\cdot\vec{y}\rangle)=\sigma\}$.
\begin{definition}[Refinement operator]
A $k$-\emph{refinement operator} is a map $R$ which, for any set $V$, assigns to each $\gamma\in\mathcal{P}(V^k)$ a partition $R\circ\gamma\in\mathcal{P}(V^k)$ that satisfies the following properties:
\begin{itemize}
	\item[(R1).] Refinement: $\gamma\preceq R\circ\gamma$.
	\item[(R2).] Monotonicity: $\gamma\preceq\rho\implies R\circ\gamma\preceq R\circ\rho$.
\end{itemize}
\end{definition}

We say that a partition $\gamma \in \mathcal{P}(V^k)$ is
\emph{$R$-stable} if $R\circ\gamma = \gamma$.
Given an $X \in\mathcal{P}(V^k)$, define a sequence of labelled
partitions by $X^0=X$ and $X^{i+1}=R\circ X^i$. Then, there is some
$s$ such that for all $i$,  $i\geq s$ implies that $X^i$ is $R$-stable.  For the minimal such $s$ we denote $X^s$ by $[X]^R$. By the monotonocity property of $R$, it follows that up to re-labelling of its equivalence classes, $[X]^R$ is the unique minimal $R$-stable partition refining $X$.

A $k$-refinement operator $R$ is said to be \emph{graph-like} if $R\circ\gamma$ is graph-like for any graph-like $\gamma$.

\subsection{Extensions of first order and inflationary fixed point logics}
We assume the reader has some familiarity with first-order and
fixed-point logics, whose details can be found, for instance
in~\cite{EbbFlum}. Throughout the paper, for a logic $\mathcal{L}$, we
denote by $\mathcal{L}_k$ the class of all $\mathcal{L}$-formulae
(over some pre-specified vocabulary) with at most $k$ variables. Let
$\mathfrak{A}$ be a structure with universe $V$ and fix
$\vec{u},\vec{v}\in V^k$. We say that some $\mathcal{L}_k$ formula
$\phi(\vec{z})$ \emph{distinguishes}
$(\mathfrak{A},\vec{z}\mapsto\vec{u})$ from
$(\mathfrak{A},\vec{z}\mapsto\vec{v})$ if either
$\mathfrak{A}\models\phi(\vec{u})$ and
$\mathfrak{A}\not\models\phi(\vec{v})$ or $\mathfrak{A}\not\models\phi(\vec{u})$ and
$\mathfrak{A}\models\phi(\vec{v})$ .

\paragraph{A language for graphs.} We may view any graph as an
\emph{edge-coloured complete digraph} and hence, as a labelled
partition of the set of ordered pairs of its vertices. For example, a
loop-free uncoloured graph on $V$ is a partition of $V^2$ whose colour
set is $\{\texttt{vertex, edge, non-edge}\}$. From a logical
view-point, we may thus define a graph $\Gamma$ on $V$ to be a finite
relational structure $\mathfrak{A}_\Gamma$ with universe $V$, over a
vocabulary consisting of binary relation symbols.

\paragraph{Counting logic.} \emph{Counting logic} $\mathcal{C}$ is the
extension of first-order logic with \emph{counting quantifiers}
$\exists^m$, for each positive integer $m$. A structure satisfies the
formula $\exists^mx.\phi(x)$ if, and only if, there are at least $m$ elements in the universe for which the formula $\phi(x)$ is satisfied. Note that $\exists^mx.\phi(x)$ can be written as an equivalent first-order formula by using $m$ quantifiers $\exists$ and $m$ auxiliary variables. This, however, inconveniently increases the number of variables required.

\paragraph{Solvability quantifiers.} For a logic $\mathcal{L}$ we
denote its extension via \emph{solvability quantifiers}
$\mathrm{slv}_p$ over a finite field of characteristic $p$ by
$\mathcal{L}+$S$(p)$. Let $\phi(\vec{x},\vec{y},\vec{z})$ be a formula
of such a logic, where $\vec{x},\vec{y},\vec{z}$ are $i,j,k$-tuples of
variables respectively. Then
$\mathrm{slv}_p(\vec{x}\vec{y}.\phi(\vec{x},\vec{y},\vec{z}))$ is also
a $\mathcal{L}+$S$(p)$ formula.  See~\cite{defin,g2p2} for more about
these quantifiers.

The semantics of this quantifier is as follows. To each structure with universe $V$ and $k$-tuple $\vec{v}\in V^k$, we associate the $V^i\times V^j$ matrix $S_\phi^{\vec{v}}$ over $\{0,1\}\subseteq \mathbb{Z}_p$ with $(\vec{r},\vec{s})$-entry equal to $1$ if, and only if, $\mathfrak{A}\models\phi(\vec{r},\vec{s},\vec{v})$. Then, $(\mathfrak{A},\vec{z}\mapsto\vec{v})\models\mathrm{slv}_p(\vec{x}\vec{y}.\phi(\vec{x},\vec{y},\vec{z}))$ if, and only if, there is some $\vec{a}\in \mathbb{Z}_p^{V^i}$ such that $S^{\vec{v}}_\phi\vec{a}=\mathbf{1}$.

Note that the fact that the linear equations encoded by the solvability quantifier may seem to have a restricted form does not entail any loss of generality, as is explained in Section \ref{solsec}. Furthermore, we may take $i=j$ in the above (and hence, assume that $S_\phi^{\vec{v}}$ is a square matrix), by simply adding auxiliary rows or columns. For a more detailed explanation see Section 3.2 of \cite{dgp}, which deals with quantifiers over linear algebraic operators, of which the solvability quantifier is a special case.

When $\mathcal{L}$ is fixed-point logic or first order logic, we
denote $\mathcal{L}+$S$(p)$ by FPS$(p)$ or FOS$(p)$ respectively,
and FOS$_k(p)$ denotes the corresponding logic with formulae with at
most $k$ variables.

\paragraph{Embedding fixed-points in finite variable logics.} Fixed
point logics provide a compact framework for describing sets which are
defined recursively.  For the purpose of establishing limits on their
expressive power, it is easier to study them in terms of finite
variable logics. which provide bounds for their expressive power.  For
example, it is well known that if $\phi$ is a first-order formula in
$m$ variables, then $\mathrm{ifp}_{R,\vec{x}}\phi$ is equivalent, on
structures with at most $n$ elements (for arbitrary $n$) to a
first-order formula with $2m$ variables; see Section 11.1 in
\cite{libkin} for a detailed argument.  It follows that
$\mathrm{ifp}_{R,\vec{x}}\phi$ cannot distinguish between two
structures which are indistinguishable by any first-order formula with
at most $2m$ variables.
It is easy to see that a similar argument yields an analogous result for FPS$(p)$.  That is if $\phi$ is an FPS$(p)$ formula with at most $m$ variables, it cannot distinguish any pair of structures which agree on all FOS$(p)$ formulas with at most $2m$ variables.

\paragraph{Logical interpretations.} A logical interpretation $\mathcal{I}$ maps a structure $\mathfrak{A}$ to a structure $\mathcal{I}(\mathfrak{A})$ according to a rule dictated by a tuple of formulae from a certain logic $\mathcal{L}$. They can be seen as the logical counterpart of an algorithmic reduction. Formally, let $\mathtt{S}$ and $\mathtt{S}'$ be relational vocabularies, and let $\mathtt{T}_1\hdots\mathtt{T}_l$ be the symbols of $\mathtt{S}'$, with $t_i$ indicating the arity of $\mathtt{T}_i$. An $\mathcal{L}[\mathtt{S},\mathtt{S}']$-\emph{interpretation} is a tuple
$$
\mathcal{I}(\vec{z})=(\phi_\delta(\vec{x},\vec{z}),\phi_\equiv(\vec{x},\vec{y},\vec{z}),\phi_{\mathtt{T}_1}(\vec{x}_1,\hdots,\vec{x}_{t_1},\vec{z}),\hdots,\phi_{\mathtt{T}_l}(\vec{x}_1,\hdots,\vec{x}_{t_l},\vec{z}))
$$
where $\phi_\delta,\phi_\equiv,\phi_{\mathtt{T}_1},\hdots,\phi_{\mathtt{T}_l}$ are $\mathcal{L}$-formulae over $\mathtt{S}$, $\vec{x},\vec{y},\vec{x}_1,\hdots,\vec{x}_{t_l}$ are tuples of pairwise distinct variables of length $d$ and $\vec{z}$ is a tuple of variables pairwise distinct from the previous ones. We refer to $d$ and $\vec{z}$ as the \emph{arity} and the \emph{parameters} of the interpretation respectively, and to $\phi_{\equiv}$ as the \emph{equivalence formula}.

With $\mathcal{I}(\vec{z})$ a $d$-ary interpretation as above, let $\mathfrak{A}$ be a $\mathtt{S}$-structure with universe $V$, and $\vec{a}$ a tuple of elements of $V$ with same length as $\vec{z}$. Define a $\mathtt{S}'$-structure $\mathfrak{B}$ with universe $\{\vec{b}\in V^d\mid \mathfrak{A}\models\phi_\delta(\vec{b},\vec{z})\}$, and set $\mathfrak{B}\models \mathtt{T}_i(\vec{b}_1,\hdots,\vec{b}_{t_i})$ if, and only if, $\mathfrak{A}\models\phi_{\mathtt{T}_i}(\vec{b}_1,\hdots,\vec{b}_{t_i},\vec{a})$. Moreover, let $\mathcal{E}=\{(\vec{b}_1,\vec{b}_2)\in V^d\times V^d\mid \mathfrak{A}\models\phi_{\equiv}(\vec{b}_1,\vec{b}_2,\vec{a})\}$. We then define
$$
\mathcal{I}(\mathfrak{A},\vec{z}\mapsto\vec{a})=\begin{cases}
\mathfrak{B}/\mathcal{E}\;\;\textrm{ if $\mathcal{E}$ is a congruence relation on $\mathfrak{B}$}\\
\textrm{undefined otherwise}.
\end{cases}
$$

Hence, intrepretations can be seen to induce maps from $\mathtt{S}$-structures to $\mathtt{S}'$-structures. Their crucial property is that, in terms of $\mathcal{L}$-formulae, they can be reversed to give a map from $\mathcal{L}$-formulae over $\mathtt{S}'$ to $\mathcal{L}$-formulae over $\mathtt{S}$ as the next result illustrates. A proof can be found in most standard textbooks (see, for example \cite{flum}) with $\mathcal{L}$ being first order logic. Using induction on the structure of $\psi$, it is easy to adapt the proof to $\mathcal{L}$ being fixed point logics and/or its extension with solvability quantifiers.

\begin{lemma}[Interpretation Lemma]
Let $\mathcal{I}(\vec{z})$ be a $d$-ary $\mathcal{L}[\mathtt{S},\mathtt{S}']$ interpretation. Then, for every $\mathcal{L}$-formula $\psi(x_1,\hdots x_m)$ over $\mathtt{S}'$, there is a $\mathcal{L}$-formula $\psi^{-\mathcal{I}}(\vec{y}_1,\hdots,\vec{y}_m,\vec{z})$ over $\mathtt{S}$ (where $\vec{y}_i$ are $d$-tuples of pairwise distinct variables) such that for all $(\mathfrak{A},\vec{z}\mapsto\vec{a})$ with $\mathcal{I}(\mathfrak{A},\vec{z}\mapsto\vec{a})$ well defined, and for all $d$-tuples $\vec{q}_1,\hdots,\vec{q}_m$ of elements from the universe of $\mathfrak{A}$ such that $\vec{q}_1/_\equiv,\hdots,\vec{q}_m/_\equiv$ are elements of the universe of $\mathcal{I}(\mathfrak{A},\vec{z}\mapsto\vec{a})$
$$
\mathfrak{A}\models\psi^{-\mathcal{I}}(\vec{q}_1,\hdots,\vec{q}_m,\vec{a})\iff \mathcal{I}(\mathfrak{A},\vec{z}\mapsto\vec{a})\models\psi(\vec{q}_1/_\equiv,\hdots,\vec{q}_m/_\equiv).
$$
\end{lemma}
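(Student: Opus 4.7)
The plan is a structural induction on $\psi$, defining $\psi^{-\mathcal{I}}$ compositionally so that each variable $x_i$ ranging over the universe of $\mathcal{I}(\mathfrak{A},\vec{a})$ is replaced by a fresh $d$-tuple $\vec{y}_i$ ranging over $V$ relativised by $\phi_\delta$. A standing invariant to maintain alongside the translation is that $\psi^{-\mathcal{I}}$ is $\equiv$-invariant in each of its block arguments; this is exactly what lets us identify representative $d$-tuples with classes in the quotient structure, and at the atomic level it reduces to the assumed congruence property of $\mathcal{E}$ (which is built into the hypothesis that $\mathcal{I}(\mathfrak{A},\vec{a})$ is defined).

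For the atomic case, equality $x_i=x_j$ over $\mathtt{S}'$ is translated to $\phi_\equiv(\vec{y}_i,\vec{y}_j,\vec{z})$, and a relation atom $\mathtt{T}_i(x_{j_1},\ldots,x_{j_{t_i}})$ to $\phi_{\mathtt{T}_i}(\vec{y}_{j_1},\ldots,\vec{y}_{j_{t_i}},\vec{z})$; boolean connectives are translated homomorphically. First-order quantifiers are relativised in the standard way, with $\exists x.\psi$ becoming $\exists\vec{y}.(\phi_\delta(\vec{y},\vec{z})\wedge\psi^{-\mathcal{I}})$ and $\forall x.\psi$ becoming $\forall\vec{y}.(\phi_\delta(\vec{y},\vec{z})\to\psi^{-\mathcal{I}})$. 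In each case, preservation of the $\equiv$-invariance invariant in the remaining free arguments follows from the inductive hypothesis and the congruence property of $\mathcal{E}$ for $\phi_\delta$, $\phi_\equiv$, and the $\phi_{\mathtt{T}_i}$.

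The substantive cases are the fixed-point and solvability operators. For $\mathrm{ifp}_{R,\vec{x}}\phi$ with $R$ of arity $s$ over $\mathtt{S}'$, introduce a fresh $sd$-ary relation symbol $R'$ and set the translation to be $\mathrm{ifp}_{R',\vec{y}_1\cdots\vec{y}_s}\bigl(\bigwedge_{i=1}^{s}\phi_\delta(\vec{y}_i,\vec{z})\wedge\phi^{-\mathcal{I}}\bigr)$, where inside $\phi^{-\mathcal{I}}$ every $R$-atom is rewritten as an $R'$-atom on the unfolded blocks. A stage-by-stage induction on the inflationary iteration, using the inductive hypothesis on $\phi$ together with the congruence property, shows simultaneously that the $n$-th stage of $R'$ is $\equiv$-closed block-wise and that it projects, via $[\cdot]_\equiv$ in each block, onto the $n$-th stage of $R$ in $\mathcal{I}(\mathfrak{A},\vec{a})$. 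For $\mathrm{slv}_p(\vec{x}\vec{y}.\phi)$, the matrix $S_\phi$ over the interpreted structure lifts to the matrix $S_{\phi^{-\mathcal{I}}}$ over $\mathfrak{A}$ indexed by pairs of block-tuples satisfying the relativisation; by the $\equiv$-invariance of $\phi^{-\mathcal{I}}$ inherited from the inductive hypothesis, rows and columns of $S_{\phi^{-\mathcal{I}}}$ indexed by $\equiv$-equivalent block-tuples coincide, so the enlarged system is solvable over $\mathbb{Z}_p$ iff the original is, and we may take the translation to be $\mathrm{slv}_p(\vec{Y}\vec{Y}'.\,\textrm{relativisation}\wedge\phi^{-\mathcal{I}})$.

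The main obstacle is the fixed-point step, which requires a nested induction: at each stage of the iteration one must check both that $R'$ remains $\equiv$-closed in each block and that its members correspond precisely to the stage of the original fixed point on $\mathcal{I}(\mathfrak{A},\vec{a})$. The solvability case reduces, once the congruence-based bookkeeping is in place, to the elementary observation that duplicating identical rows and columns of a $\{0,1\}$-matrix over $\mathbb{Z}_p$ does not affect solvability of $S\vec{a}=\mathbf{1}$, since a solution to the original can be copied across the duplicated indices and vice versa.
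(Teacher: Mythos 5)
Your proposal takes the approach the paper itself invokes---structural induction on $\psi$, with the textbook translation for atomic formulae, boolean connectives and relativised first-order quantifiers, extended by a nested induction for inflationary fixed points and a matrix-blowup argument for the solvability quantifier. The $\equiv$-invariance invariant is the right thing to carry through the induction, and your stage-by-stage argument for $\mathrm{ifp}$ correctly reconciles the iteration in $\mathfrak{A}$ with the iteration in the quotient. The paper defers this proof to textbooks and declares the extension routine, and in broad strokes your write-up confirms that.

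There is, however, a genuine slip in the solvability case. Your translation $\mathrm{slv}_p(\vec{Y}\vec{Y}'.\ \mathrm{rel}\wedge\phi^{-\mathcal{I}})$ does not correctly relativise. The semantics of $\mathrm{slv}_p$ fixes the right-hand side to the all-ones vector $\mathbf{1}$, and the quantified block-tuples $\vec{Y},\vec{Y}'$ range over \emph{all} of $V^{di}\times V^{dj}$, not just the domain of the interpretation. Conjoining the relativisation therefore makes the entry $0$ for every row index $\vec{Q}$ that fails $\phi_\delta$ blockwise, so every such row of the translated system is all-zero, i.e.\ demands $0=1$. Whenever the domain is a proper subset of $V^d$ (the typical situation) the translated formula is false regardless of whether the original one holds. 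The fix is standard but must be made explicit: either route through the observation the paper records in Section~4.1 via Lemma~\ref{dhk}, which lets one define solvability of an arbitrary system $A\vec{x}=\vec{b}$ and hence set the right-hand side to $0$ on non-domain rows, or replace the matrix formula by
\[
\bigl(\mathrm{rel}(\vec{Y},\vec{z})\wedge\mathrm{rel}(\vec{Y}',\vec{z})\wedge\phi^{-\mathcal{I}}(\vec{Y},\vec{Y}',\vec{z})\bigr)\ \vee\ \bigl(\lnot\mathrm{rel}(\vec{Y},\vec{z})\wedge\vec{Y}=\vec{Y}'\bigr),
\]
so that each non-domain row carries a single $1$ on its own non-domain column and is trivially satisfiable, while the domain block is the class-size blowup of $S_\phi$ to which your duplicate-rows-and-columns observation correctly applies. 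With that correction the case goes through; the rest of the argument is sound.
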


\section{Refinement operators and proof systems with algebraic rules}\label{sec:operators}
In this section we give an overview of the refinement operators of interest and briefly describe the proof systems mentioned in the introduction. In particular, we recall the invertible map and counting logics operators and introduce the solvability operators.
\subsection{The invertible map tests}
The equivalence relations on tuples of vertices induced by the
invertible map tests have been originally introduced in \cite{dho},
under the guise of a pebble game with algebraic rules on a pair of
graphs.  In \cite{dgp}, it was shown that these equivalences
correspond to indistinguishability in an extension of first order
logic with linear algebraic operators (Theorem~2).  Algorithmically,
one can find these equivalence classes by computing a fixed point of
the \emph{invertible map operators} $\IM_{k,r}^\mathbb{F}$, as defined
in Sections~8 of \cite{addv}.  Each of these is a $k$-refinement
operator such that for $\gamma\in\mathcal{P}(V^k)$, the colour of
$\vec{v}\in V^k$ in the partition $\IM_{k,r}^\mathbb{F}\circ\gamma$ is
given by a tuple whose entries are $\gamma(\vec{v})$ and the
equivalence classes under matrix conjugation of the
$(\vec{i},\vec{v})$-character vectors of $\gamma$, for all
$\vec{i}\in[k]^{(2r)}$. We shall not need the definitions of these
operators, as their \emph{stability conditions} are simpler to state and more useful for our purpose. That is, $\gamma$ is $\IM_{k,r}^{\mathbb{F}}$-stable if for all $\vec{u},\vec{v}\in V^k$ and $\vec{i}\in[k]^{(2r)}$
$$
\gamma(\vec{u})=\gamma(\vec{v})\implies \exists M\in \mathrm{GL}_{V^r}(\mathbb{F}),M\chi_{\sigma}M^{-1}=\xi_{\sigma}\,\,\forall \sigma\in\im(\gamma)
$$
where $(\chi_\sigma)_{\sigma\in\im(\gamma)}$ and $(\xi_\sigma)_{\sigma\in\im(\gamma)}$ are the $(\vec{i},\vec{v})$ and $(\vec{i},\vec{u})$-character vectors of $\gamma$ respectively.

For a graph $\Gamma$ on $V$, let $\alpha_{k,\Gamma}$ be a canonical labelled partition of $V^k$ into atomic types of $\Gamma$.\footnote{By \emph{canonical} we mean \emph{invariant under isomorphism}. That is, if $\Gamma$ and $\Gamma'$ are graphs on $V$ and $V'$ respectively and $\vec{u}\in V^k$ and $\vec{v}\in (V')^k$, then $\alpha_{k,\Gamma}(\vec{u})=\alpha_{k,\Gamma'}(\vec{v})$ if, and only if, the mapping $u_i\rightarrow v_i$ is an isomorphism of the subgraphs induced by the vertices in $\vec{u}$ and $\vec{v}$.}
Define the $k$-refinement operator $\IM_k^\mathbb{F}$ so that for $\gamma\in\mathcal{P}(V^k)$, the colour of $\vec{v}\in V^k$ in $\IM_k^\mathbb{F}\circ\gamma$ is given by a tuple whose entries are the colours of $\vec{v}$ in $\IM_{k,r}^\mathbb{F}\circ\gamma$, for all $r<k/2$. Formally,
$$
\IM_k^\mathbb{F}\circ\gamma(\vec{v})=(\IM_{k,1}^\mathbb{F}\circ\gamma(\vec{v}),\IM_{k,2}^\mathbb{F}\circ\gamma(\vec{v}),\hdots,\IM_{k,\lfloor k/2\rfloor}^\mathbb{F}\circ\gamma(\vec{v})).
$$
Then, the output of the $k$-invertible map test over $\mathbb{F}$ is the labelled partition $[\alpha_{k,\Gamma}]^{\IM_k^\mathbb{F}}$. It is explained in Proposition $4.6$ in \cite{addv} how one can obtain this partition in time $|V|^{O(k)}$ by iteratively applying $\IM_k^\mathbb{F}$ to $\alpha_{k,\Gamma}$. For $\vec{u},\vec{v}\in V^l$ and $l\leq k$, we say that the $k$-invertible map over $\mathbb{F}$ \emph{distinguishes} $\vec{u}$ and $\vec{v}$ if they are in different equivalence classes of the partition $\pr_l[\alpha_{k,\Gamma}]^{\IM_k^\mathbb{F}}$.

Finally, we note that for a fixed characteristic, the choice of $\mathbb{F}$ is irrelevant: indeed the matrices $A_1,\hdots,A_m\in\mathrm{Mat}_V(\mathbb{F})$ and $B_1,\hdots,B_m\in\mathrm{Mat}_V(\mathbb{F})$ are simultaneously similar over $\mathbb{F}$ if, and only if, they are simultaneously similar over any extension of $\mathbb{F}$ \cite{pazzis}. Since the entries of the character vectors are $01$-matrices, we may assume, without loss of generality, that $\mathbb{F}$ is a prime field. Hereafter, we shall then indicate the operators $\IM_{k,r}^\mathbb{F}$ and $\IM_k^\mathbb{F}$ by $\IM_{k,r}^c$ and $\IM_k^c$ respectively, where $c$ is the characteristic of $\mathbb{F}$. 

\subsection{Counting logics operators}
It is useful to express the partition of $k$-tuples into equivalence
classes under finite variable counting logics as the fixed point of a
refinement operator. For this purpose, the $k$-refinement operators
$\C_{k,r}$, for $r<k$, have been defined so that for any
$\gamma\in\mathcal{P}(V^k)$, the colour of $\vec{v}\in V^k$ in
$\C_{k,r}\circ\gamma$ is given by a tuple whose entries are
$\gamma(\vec{v})$ and the multisets of colours in $\gamma$ of the
tuples which can be obtained by substituting an $r$-tuple into
$\vec{v}$ (see Section 4 of \cite{addv}). In particular, $\gamma$ is
$\C_{k,r}$-stable if, and only if, for all $\vec{i}\in [k]^{(r)}$ and
$\sigma\in\im(\gamma)$, the size of the set $\{\vec{x}\in V^r \mid
\gamma(\vec{v}\langle\vec{i},\vec{x}\rangle)=\sigma\}$ is independent
of the choice of $\vec{v}$ from within its equivalence class in
$\gamma$. As such, for a graph $\Gamma$ on $V$, $\vec{u},\vec{v}\in
V^k$ are in the same equivalence class of
$[\alpha_{k,\Gamma}]^{\C_{k,1}}$ if, and only if, there are no
$\mathcal{C}_k$ formulae distinguishing
$(\mathfrak{A}_\Gamma,\vec{z}\mapsto\vec{u})$ from
$(\mathfrak{A}_\Gamma,\vec{z}\mapsto\vec{v})$. 
The combinatorial properties of $\C_{k,r}$-stable partitions can be
used to show the relation between the distinguishing powers of finite
variable fragments of counting logics and the invertible map tests.
In short, the invertible map test over fields of characteristic zero
is not more distinguishing than counting logic, but over fields of
positive characteristic it is.  To be precise, 
 with $\Gamma,\vec{u},\vec{v}$ as above, the following holds:
\begin{displayquote}
For any field $\mathbb{F}$, if the $k$-invertible map test over $\mathbb{F}$ does not distinguish $\vec{u}$ and $\vec{v}$ in $\Gamma$, then there are no $ \mathcal{C}_{k-1}$-formulae distinguishing $(\mathfrak{A}_\Gamma,\vec{z}\mapsto\vec{u})$ from $(\mathfrak{A}_\Gamma,\vec{z}\mapsto\vec{v})$.

If the $k$-invertible map test over $\mathbb{Q}$ distinguishes $\vec{u}$ from $\vec{v}$ in $\Gamma$, then there is some $\mathcal{C}_{2k-1}$-formula distinguishing $(\mathfrak{A}_\Gamma,\vec{z}\mapsto\vec{u})$ from $(\mathfrak{A}_\Gamma,\vec{z}\mapsto\vec{v})$. \footnote{This is a direct consequence of the following generalizations of Lemmata 7.1 and 7.3 in \cite{addv} respectively: for all $k,r\in \nats$ with $2r<k$, 
\begin{enumerate}
\item The $k$-projection of a graph-like $\IM_{k+r,r}^c$-stable partition is $\C_{k,r}$-stable for any characteristic $c$.
\item The $k$-projection of a graph-like $\C_{k+r,r}$-stable partition is $\IM_{k,r}^0$-stable.
\end{enumerate}
The authors prove it only for the case $r=1$, but a similar argument holds for any $r\in\nats$.}
\end{displayquote}

\subsection{Solvability operators}
Central to this paper are finite variable fragments of the extension of first order logic with solvability quantifiers over a fixed field. In order to construct a refinement operator whose stable points reflect the properties of this logic, we consider a weakened version of the invertible map operators, whose action on labelled partitions can be computed solely by solving systems of linear equations. To achieve this, we proceed by defining an equivalence relation $\sim_{\mathrm{sol}}$ on the character vectors, which can be seen as a relaxation of the conjugation relation $\sim$.

Let $\mathfrak{P}_V(\mathbb{F})=\{A\in \mathrm{Mat}_V(\mathbb{F})
\mid \sum_{w\in V} A_{wv}=\sum_{w\in V}A_{uw}=1, \forall u,v\in V\}$. Or, equivalently, $\mathfrak{P}_V(\mathbb{F})$ is the set of matrices $A\in\mathrm{Mat}_V(\mathbb{F})$ such that $\mathbf{1}_{V}$ is an eigenvector of both $A$ and $A^t$, with corresponding eigenvalue $1$. For a set $I$, let
$$
\mathcal{X}_V^I(\mathbb{F})=\{\vec{A}\in \mathrm{Mat}_V(\mathbb{F})^I \mid \sum_{s\in I}A_s=\mathbb{J}_V\;\mathrm{and}\;\forall i\in I, \exists j, A_i^t=A_j \}.
$$
Note that if $\gamma$ is invariant and $\vec{i}\in [k]^{(2r)}$, then any $(\vec{i},\vec{v})$- character $\vec{\chi}$ of $\gamma$ is an element of $\mathcal{X}_V^{\im(\gamma)}(\mathbb{F})$, since
\begin{equation}\label{sum}
\sum_{\sigma\in\im(\gamma)}\chi_{\sigma}=\mathbb{J}_{V^r}
\end{equation}
and by invariance of $\gamma$, for any $\sigma\in\im(\gamma)$ there is some $\sigma'$ such that $(\chi_\sigma)^t=\chi_{\sigma'}$.

Define the relation $\sim_\mathrm{sol}$ on $\mathcal{X}_V^I(\mathbb{F})$ as follows: $\vec{A}\sim_\mathrm{sol}\vec{B}$ if there is some $S\in \mathfrak{P}_V(\mathbb{F})$ such that $A_iS=SB_i$ for all $i\in I$.

\begin{lemma}
$\sim_\mathrm{sol}$ is an equivalence relation on $\mathcal{X}_V^I(\mathbb{F})$.
\end{lemma}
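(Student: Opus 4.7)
The plan is to verify the three axioms of an equivalence relation, relying on two structural facts: $\mathfrak{P}_V(\mathbb{F})$ is closed under the identity, matrix multiplication, and transposition (all of which follow immediately from the eigenvector characterization $M\mathbf{1}_V = \mathbf{1}_V$ and $\mathbf{1}_V^t M = \mathbf{1}_V^t$), and any $\vec{A}\in\mathcal{X}_V^I(\mathbb{F})$ satisfies the entry-wise transposition closure built into its definition.

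Reflexivity is witnessed by $S = I_V$, which lies in $\mathfrak{P}_V(\mathbb{F})$ and trivially intertwines $\vec{A}$ with itself. Transitivity follows by composing witnesses: if $A_i S_1 = S_1 B_i$ and $B_i S_2 = S_2 C_i$ for all $i\in I$, then $A_i(S_1 S_2) = (S_1 S_2) C_i$, while $S_1 S_2 \in \mathfrak{P}_V(\mathbb{F})$ by the direct computation $(S_1 S_2)\mathbf{1}_V = \mathbf{1}_V$ and $\mathbf{1}_V^t(S_1 S_2) = \mathbf{1}_V^t$.

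For symmetry, given $A_i S = S B_i$ for all $i$ with $S \in \mathfrak{P}_V(\mathbb{F})$, I would take $T = S^t \in \mathfrak{P}_V(\mathbb{F})$ as the candidate inverse intertwiner. Transposing the hypothesis yields $S^t A_i^t = B_i^t S^t$ for every $i$. To convert this into the desired relation $B_j T = T A_j$, I would invoke the transposition-closure property of $\mathcal{X}_V^I$: pick an involution $\iota$ on $I$ realizing simultaneously $A_i^t = A_{\iota(i)}$ and $B_i^t = B_{\iota(i)}$, and reindex $j = \iota(i)$ to obtain $B_j S^t = S^t A_j$ for all $j\in I$.

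The main obstacle is precisely this last step: the existence of a single involution $\iota$ witnessing the transpose-closures of both $\vec{A}$ and $\vec{B}$, since a priori those closures only supply two independently chosen involutions. I expect to resolve this by exploiting the intertwining equation itself, which forces a compatibility between the $A$-side and $B$-side transposition symmetries, together with the flexibility available when matrices appear with multiplicity in the tuples; in the canonical application to $(\vec{i},\vec{v})$-character vectors this is automatic, since the involution on $\im(\gamma)$ is determined by the invariance of $\gamma$ and hence shared by every such tuple.
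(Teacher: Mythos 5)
Your approach matches the paper's exactly: reflexivity via the identity, symmetry via $T=S^t$, transitivity by composing witnesses. Reflexivity and transitivity are fine, and the closure computations for $\mathfrak{P}_V(\mathbb{F})$ are correct.

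The subtlety you flag in the symmetry step is real, and worth a remark. The paper dispatches it with the phrase ``thus, from the definition of $\mathcal{X}_V^I$,'' which only makes sense if one reads the definition of $\mathcal{X}_V^I(\mathbb{F})$ as being relative to a \emph{fixed} involution $\iota:I\to I$, so that \emph{every} $\vec{A}\in\mathcal{X}_V^I(\mathbb{F})$ satisfies $A_i^t=A_{\iota(i)}$ with the \emph{same} $\iota$. Under that reading the reindexing is immediate: transposing $A_iS=SB_i$ gives $S^tA_{\iota(i)}=B_{\iota(i)}S^t$, and since $\iota$ is a bijection of $I$ this is exactly $S^tA_j=B_jS^t$ for all $j$. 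This is also what the application supplies: for $(\vec{i},\vec{v})$-character vectors the index set is $\im(\gamma)$ and the involution $\sigma\mapsto\sigma^t$ is determined by the invariant partition $\gamma$ alone, independently of $\vec{v}$, so $\vec{\chi}$ and $\vec{\xi}$ automatically share it. You do not need to ``exploit the intertwining equation to force compatibility'' or rely on multiplicities --- and indeed I do not think that route closes the gap in general if $\iota$ were allowed to vary with the tuple. The clean fix is simply to make the fixed involution explicit in the definition of $\mathcal{X}_V^I(\mathbb{F})$, after which your proof (and the paper's) is complete.
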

\begin{proof}
Clearly, $\vec{A}\sim_\mathrm{sol}\vec{A}$, since $\mathbb{I}_V\in \mathfrak{P}_V(\mathbb{F})$ and $A_i\mathbb{I}_V=\mathbb{I}_VA_i$ for all $i\in I$. Suppose $\vec{A}\sim_\mathrm{sol}\vec{B}$. Let $S\in \mathfrak{P}_V(\mathbb{F})$ satisfy $A_iS=SB_i$ for all $i\in I$. Then $B_i^tS^t=S^tA_i^t$ and thus, from the definition of $\mathcal{X}_V^I$, $B_iS^t=S^tA_i$ for all $i\in I$. Since $S^t\in \mathfrak{P}_V(\mathbb{F})$, $\vec{B}\sim_\mathrm{sol}\vec{A}$. Finally, suppose $\vec{A}\sim_\mathrm{sol}\vec{B}$ and $\vec{B}\sim_\mathrm{sol}\vec{C}$. Let $S,T\in \mathfrak{P}_V(\mathbb{F})$ satisfy $A_iS=SB_i$ an $B_iT=TC_i$ for all $i\in I$. Then $A_iST=SB_iT=STC_i$. Since $S,T,S^t$ and $T^t$ must all have $\mathbf{1}_{V^r}$ as eigenvector, with corresponding eigenvalue $1$, so must $ST$ and $(ST)^t$. Hence, $ST\in \mathfrak{P}_V(\mathbb{F})$ and $\vec{A}\sim_\mathrm{sol}\vec{C}$.
\end{proof}

For $k,r\in\nats$ with $2r<k$, a field $\mathbb{F}$, and an invariant  $\gamma\in\mathcal{P}(V^k)$, we define the \emph{solvability operators} $\Sol_{k,r}^\mathbb{F}$ by setting $\Sol_{k,r}^{\mathbb{F}}\circ\gamma$ to be the labelled partition for which the colour of $\vec{v}\in V^k$ is a tuple whose entries are $\gamma(\vec{v})$ and the equivalence classes under the relation $\sim_\mathrm{sol}$ of the $(\vec{i},\vec{v})$-character vectors of $\gamma$, for all $\vec{i}\in[k]^{(2r)}$. Formally:
$$
\begin{matrix}
\Sol_{k,r}^{\mathbb{F}}\circ\gamma: & V^k& \rightarrow & \im(\gamma)\times(\mathcal{X}_{V^r}^{\im(\gamma)}(\mathbb{F})/\sim_\mathrm{sol})^{[k]^{(2r)}}\\
& \vec{v}&\mapsto & (\gamma(\vec{v}),(\vec{\chi}_{\vec{i}})_{\vec{i}\in [k]^{(2r)}}),
\end{matrix}
$$
where $\vec{\chi}_{\vec{i}}$ is the $(\vec{i},\vec{v})$-character vector of $\gamma$.

As before, since the entries of the matrices in the character vector are all $0$ and $1$, we may restrict $\mathbb{F}$ to being a prime field without loss of generality, and denote $\Sol_{k,r}^\mathbb{F}$ by $\Sol_{k,r}^c$, where $c=\mathrm{char}(\mathbb{F})$.

Observe that $\Sol_{k,r}^c$ is monotone on the class of invariant partitions of $V^k$ and is thus a $k$-refinement operator when considered with this domain restriction. Indeed, if $\gamma,\rho\in \mathcal{P}(V^k)$ are invariant and $\gamma\preceq \rho$, then for every $\alpha\in \im(\gamma)$ there is some $I_\alpha\subseteq \im(\rho)$ such that $\gamma^{-1}(\sigma)=\bigcup_{\beta\in I_\sigma} \rho^{-1}(\beta)$. Fix $\vec{i}\in [k]^{(2r)}$ and $\vec{v}\in V^k$, and let $\vec{\chi}$ and $\vec{\xi}$ be the $(\vec{i},\vec{v})$-character vectors of $\gamma$ and $\rho$ respectively. Then
$$
\chi_{\alpha}=\sum_{\beta\in I_{\alpha}}\xi_{\beta}.
$$
It readily follows that if $\Sol_{k,r}^c\circ\rho(\vec{u})=\Sol_{k,r}^c\circ\rho(\vec{v})$, then $\Sol_{k,r}^c\circ\gamma(\vec{u})=\Sol_{k,r}^c\circ\gamma(\vec{v})$ and thus, $\Sol_{k,r}^c\circ\gamma\preceq \Sol_{k,r}^c\circ\rho$.

In a similar fashion to Proposition $4.6$ in \cite{addv}, one can show that $\Sol_{k,r}^c$ are graph-like refinement operators. Thus, for any invariant $\gamma\in\mathcal{P}(V^k)$, $[\gamma]^{\Sol_{k,r}^c}$ is the unique coarsest $\Sol_{k,r}^c$-stable partition refining $\gamma$, up to relabelling of its equivalence classes.

For the remainder of this section, we assume that $\gamma\in\mathcal{P}(V^k)$ is invariant and that $\vec{\chi}$ and $\vec{\xi}$ are the $(\vec{i},\vec{v})$ and $(\vec{i},\vec{u})$-character vectors of $\gamma$ respectively, for some fixed $\vec{i}\in [k]^{(2r)}$. The following is a direct consequence of the definition of $\Sol_{k,r}^c$.
\begin{proposition}
Let $\mathbb{F}$ be the prime field of characteristic $c$. For all $k,r\in \mathbb{N}$, with $2r\leq k$, $\Sol_{k,r}^c\circ\gamma(\vec{u})=\Sol_{k,r}^c\circ\gamma(\vec{v})$ if, and only if, $\gamma(\vec{u})=\gamma(\vec{v})$ and for each $\vec{i}\in[k]^{(2r)}$ there exist some $M\in \mathfrak{P}_{V^r}(\mathbb{F})$ such that for all $\sigma\in\im(\gamma)$
$$
\chi_{\sigma}M=M\xi_{\sigma}.
$$
In particular, $\gamma$ is $\Sol_{k,r}^c$-stable if, and only if, for all $\vec{u},\vec{v}\in V^k$ and $\vec{i}\in[k]^{(2r)}$
$$
\gamma(\vec{u})=\gamma(\vec{v})\implies \exists M\in \mathfrak{P}_{V^r}(\mathbb{F}),\chi_{\sigma}M=M\xi_{\sigma}\,\,\forall \sigma\in\im(\gamma).
$$
\end{proposition}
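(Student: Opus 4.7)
The proposition is essentially a tautology obtained by composing two definitions, so my plan is simply to state which definitions to unpack and in what order. First I would recall the defining formula for $\Sol_{k,r}^c\circ\gamma$: its colour on $\vec{v}$ is the pair $(\gamma(\vec{v}),(\vec{\chi}_{\vec{i}})_{\vec{i}\in[k]^{(2r)}})$ whose second component is a tuple of $\sim_{\mathrm{sol}}$-equivalence classes of the $(\vec{i},\vec{v})$-character vectors. Hence two such pairs coincide exactly when $\gamma(\vec{u})=\gamma(\vec{v})$ and, for every $\vec{i}\in[k]^{(2r)}$, the $(\vec{i},\vec{u})$- and $(\vec{i},\vec{v})$-character vectors are $\sim_{\mathrm{sol}}$-related. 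Before applying the definition of $\sim_{\mathrm{sol}}$, I would briefly verify that these character vectors actually lie in $\mathcal{X}_{V^r}^{\im(\gamma)}(\mathbb{F})$: equation~\eqref{sum} provides the summation condition $\sum_\sigma\chi_\sigma=\mathbb{J}_{V^r}$, and the invariance of $\gamma$ ensures that the family of matrices is closed under transposition. Once this membership is in place, the definition of $\sim_{\mathrm{sol}}$ turns the equivalence statement into the existence of some $M\in\mathfrak{P}_{V^r}(\mathbb{F})$ with $\chi_\sigma M=M\xi_\sigma$ for all $\sigma\in\im(\gamma)$, which is precisely the stated biconditional.

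For the ``in particular'' clause I would argue from the definition of $R$-stability: $\gamma$ is $\Sol_{k,r}^c$-stable iff $\Sol_{k,r}^c\circ\gamma$ induces the same partition of $V^k$ as $\gamma$. One inclusion is automatic, because $\gamma(\vec{v})$ occurs as the first component of $\Sol_{k,r}^c\circ\gamma(\vec{v})$, so $\Sol_{k,r}^c\circ\gamma$ always refines $\gamma$; this is the refinement axiom (R1). The remaining direction, namely $\gamma(\vec{u})=\gamma(\vec{v})\Rightarrow\Sol_{k,r}^c\circ\gamma(\vec{u})=\Sol_{k,r}^c\circ\gamma(\vec{v})$, is rewritten by applying the first part and becomes exactly the existence-of-$M$ condition in the statement. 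There is no real obstacle to overcome here; the only care needed is to distinguish between the colour that records a $\sim_{\mathrm{sol}}$-class and the algebraic matrix-equation description of when two character vectors belong to the same such class.
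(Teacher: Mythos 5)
Your proof is correct and follows exactly the route the paper takes: the paper offers no explicit argument, simply remarking that the proposition is ``a direct consequence of the definition of $\Sol_{k,r}^c$,'' and your unpacking of the colour tuple, the membership check for $\mathcal{X}_{V^r}^{\im(\gamma)}(\mathbb{F})$, the definition of $\sim_{\mathrm{sol}}$, and the refinement axiom (R1) for the ``in particular'' clause is precisely the intended chain of reasoning.
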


We now show some properties of the operator $\Sol_{k,r}^c$ in order to establish the link between the above and finite variable fragments of the extension of first-order logic with solvability quantifiers.
\begin{lemma}\label{sec3}
An $\IM_{k,r}^c$-stable partition is $\Sol_{k,r}^c$-stable.
\end{lemma}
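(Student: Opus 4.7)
The plan is to take the invertible conjugating matrix given by $\IM_{k,r}^c$-stability and massage it into a matrix in $\mathfrak{P}_{V^r}(\mathbb{F})$ that intertwines the character vectors in the sense required by $\Sol_{k,r}^c$-stability. Fix $\vec{u},\vec{v}\in V^k$ with $\gamma(\vec{u})=\gamma(\vec{v})$ and $\vec{i}\in [k]^{(2r)}$, and let $\vec{\chi},\vec{\xi}$ be the associated character vectors of $\gamma$. By hypothesis there is $M\in \mathrm{GL}_{V^r}(\mathbb{F})$ with $M\chi_\sigma M^{-1}=\xi_\sigma$ for every $\sigma\in\im(\gamma)$; equivalently, setting $N=M^{-1}$, the relation $\chi_\sigma N = N\xi_\sigma$ holds for every $\sigma$.

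The next step is to exploit equation (\ref{sum}), which guarantees $\sum_{\sigma}\chi_\sigma=\sum_\sigma \xi_\sigma=\mathbb{J}_{V^r}$. Summing $\chi_\sigma N=N\xi_\sigma$ over $\sigma\in\im(\gamma)$ yields $\mathbb{J}_{V^r}N = N\mathbb{J}_{V^r}$. Writing $\mathbb{J}_{V^r}=\mathbf{1}_{V^r}\mathbf{1}_{V^r}^t$, the left-hand side is $\mathbf{1}_{V^r}(N^t\mathbf{1}_{V^r})^t$, whose rows are all equal to the vector of column sums of $N$, while the right-hand side $N\mathbf{1}_{V^r}\mathbf{1}_{V^r}^t$ has all columns equal to the vector of row sums of $N$. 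For these matrices to coincide, there must exist a single scalar $\alpha\in\mathbb{F}$ with $N\mathbf{1}_{V^r}=\alpha\mathbf{1}_{V^r}$ and $N^t\mathbf{1}_{V^r}=\alpha\mathbf{1}_{V^r}$.

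Since $N$ is invertible and $\mathbf{1}_{V^r}\neq 0$, the scalar $\alpha$ must be nonzero. Setting $M' = \alpha^{-1}N$, we then have $M'\mathbf{1}_{V^r}=\mathbf{1}_{V^r}$ and $(M')^t\mathbf{1}_{V^r}=\mathbf{1}_{V^r}$, so $M'\in \mathfrak{P}_{V^r}(\mathbb{F})$. Multiplying $\chi_\sigma N=N\xi_\sigma$ by $\alpha^{-1}$ gives $\chi_\sigma M'=M'\xi_\sigma$ for every $\sigma\in\im(\gamma)$, which is exactly the $\Sol_{k,r}^c$-stability condition from the preceding proposition.

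The only genuine step that needs care is the passage from the invertible intertwiner to a doubly stochastic one; but as sketched above, the row- and column-sum constraints built into $\mathcal{X}_V^I(\mathbb{F})$ via (\ref{sum}) automatically force $N$ to have constant row and column sums, and invertibility rules out the degenerate case $\alpha=0$, so no further work is required.
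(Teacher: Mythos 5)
Your proof is correct and follows essentially the same route as the paper's: sum the intertwining relations over $\sigma\in\im(\gamma)$ to get that the conjugating matrix commutes with $\mathbb{J}_{V^r}$, deduce from this that $\mathbf{1}_{V^r}$ is a common eigenvector of the matrix and its transpose with the same eigenvalue, use invertibility to ensure the eigenvalue is nonzero, and rescale to land in $\mathfrak{P}_{V^r}(\mathbb{F})$. The only cosmetic difference is that you work with $N=M^{-1}$ (correctly accounting for the orientation of the stated $\IM_{k,r}$-stability condition) and spell out why $\mathbb{J}N=N\mathbb{J}$ forces constant and equal row and column sums, a step the paper leaves implicit.
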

\begin{proof}
Let $\mathbb{F}$ be the prime field of characteristic $c$. Suppose $\gamma$ is $\IM_{k,r}^c$-stable and let $\gamma(\vec{u})=\gamma(\vec{v})$. Then, for all $\vec{i}\in [k]^{(2r)}$, there is some $M\in GL_{V^r}(\mathbb{F})$ such that $M^{-1}\chi_{\sigma}M=\xi_{\sigma}$ for all $\sigma\in\im(\gamma)$. By equation \ref{sum}, $M\mathbb{J}_{V^r}=\mathbb{J}_{V^r}M$ and thus, both $M$ and $M^t$ have $\mathbf{1}_{V^r}$ as eigenvector with corresponding non-zero eigenvalue $\lambda\in\mathbb{F}$. Since, $\frac{1}{\lambda}M\in\mathfrak{P}_{V^r}(\mathbb{F})$, the result follows.
\end{proof}

\begin{lemma}
If $\gamma$ is $\Sol_{k,r}^c$-stable the following hold:
\begin{enumerate}
\item If $\gamma(\vec{u})=\gamma(\vec{v})$, and $\gamma$ is
  graph-like, then for all $\vec{i}\in[k]^{(2r)}$ and
  $\sigma\in\im(\gamma)$, $\{\vec{w}\in V^r \mid \gamma(\vec{u}\langle
  \vec{i},\vec{w}\cdot\vec{w}\rangle )=\sigma\}$ is non-empty if,
  and only if, $\{\vec{w}\in V^r \mid \gamma(\vec{v}\langle \vec{i},\vec{w}\cdot\vec{w}\rangle )=\sigma\}$ is non-empty.
\item If $c=0$, then $\gamma$ is $\C_{k,2r}$-stable.
\end{enumerate}
\end{lemma}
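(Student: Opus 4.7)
The plan is to invoke the proposition characterising $\Sol_{k,r}^c$-stability: whenever $\gamma(\vec{u})=\gamma(\vec{v})$, for each $\vec{i}\in[k]^{(2r)}$ there exists $M\in\mathfrak{P}_{V^r}(\mathbb{F})$ with $\chi_\sigma M = M\xi_\sigma$ for every $\sigma\in\im(\gamma)$, where $\vec{\chi}$ and $\vec{\xi}$ are the $(\vec{i},\vec{v})$- and $(\vec{i},\vec{u})$-character vectors. I would then prove the two conclusions by exploiting the two defining properties of $M$ (its rows and columns each sum to $1$) separately.

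For part (1), I would first use the graph-likeness of $\gamma$ to observe that the equality type of any tuple $\vec{v}\langle\vec{i},\vec{x}\cdot\vec{y}\rangle$ is determined by its $\gamma$-colour together with $\vec{i}$. This gives a dichotomy for each $\sigma\in\im(\gamma)$: either every pair $(\vec{x},\vec{y})$ contributing to $\chi_\sigma$ satisfies $\vec{x}=\vec{y}$, in which case both $\chi_\sigma$ and $\xi_\sigma$ are diagonal matrices, or no such pair does, in which case both have identically zero diagonal. Non-emptiness of $\{\vec{w}\mid\gamma(\vec{v}\langle\vec{i},\vec{w}\cdot\vec{w}\rangle)=\sigma\}$ is thus equivalent to the corresponding character matrix being a nonzero diagonal matrix. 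Given this, suppose $\xi_\sigma$ is diagonal with $(\xi_\sigma)_{\vec{w}_0,\vec{w}_0}=1$; then the $\vec{w}_0$-th column of $M\xi_\sigma$ equals $M_{\cdot,\vec{w}_0}$, which is nonzero in $\mathbb{F}^{V^r}$ because its entries sum to $1$ (using $\mathbf{1}^tM=\mathbf{1}^t$). Therefore $\chi_\sigma M\neq 0$, whence $\chi_\sigma\neq 0$. The reverse implication is obtained symmetrically by applying the sol-stability hypothesis to the pair $(\vec{v},\vec{u})$.

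For part (2), since $\chi_\sigma$ and $\xi_\sigma$ are $01$-matrices, the size of $\{\vec{w}\in V^{2r}\mid\gamma(\vec{v}\langle\vec{i},\vec{w}\rangle)=\sigma\}$ equals $\mathbf{1}_{V^r}^t\chi_\sigma\mathbf{1}_{V^r}$, and analogously for $\vec{u}$ and $\xi_\sigma$. Left-multiplying $\chi_\sigma M = M\xi_\sigma$ by $\mathbf{1}^t$ and right-multiplying by $\mathbf{1}$, together with $\mathbf{1}^tM=\mathbf{1}^t$ and $M\mathbf{1}=\mathbf{1}$, yields $\mathbf{1}^t\chi_\sigma\mathbf{1}=\mathbf{1}^t\xi_\sigma\mathbf{1}$ as elements of $\mathbb{F}$. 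When $c=0$ this is an equality of non-negative integers, which is precisely the $\C_{k,2r}$-stability condition recalled earlier.

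The main obstacle is conceptual rather than computational: unlike in the $\IM_{k,r}^c$ setting, the matrix $M$ witnessing $\sim_{\mathrm{sol}}$-equivalence need not be invertible, so the conjugation argument is unavailable. The graph-like hypothesis in part (1) is exactly what permits reduction to the diagonal case, where the non-vanishing of individual rows and columns of $M$ --- which is immediate from its row- and column-sum conditions --- suffices. In positive characteristic, part (2) is expected to break down because $\mathbf{1}^t\chi_\sigma\mathbf{1}=\mathbf{1}^t\xi_\sigma\mathbf{1}$ only holds modulo $c$.
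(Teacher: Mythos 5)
Your proposal is correct and follows essentially the same route as the paper's proof: for (1), graph-likeness forces the character matrices indexed by the relevant $\sigma$ to be diagonal, and the column-sum condition on $M\in\mathfrak{P}_{V^r}(\mathbb{F})$ propagates nonzeroness through $\chi_\sigma M = M\xi_\sigma$; for (2), pre- and post-multiplying the intertwining relation by the all-ones vector (or, in the paper, by $\mathbb{J}$) shows $\chi_\sigma$ and $\xi_\sigma$ have the same number of $1$s over a characteristic-zero field. Your exposition of the equality-type dichotomy is in fact somewhat more careful than the paper's terse statement of this step.
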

Note that if $c=0$, the second statement implies the first (refer to Section $4$ and $5$ of \cite{addv} for more details on properties of $\C_{k,r}$-stability).
\begin{proof}
Suppose $\{\vec{w}\in V^r \mid \gamma(\vec{u}\langle
\vec{i},(\vec{w}\cdot\vec{w})\rangle )=\sigma\}$ is non-empty. Since
$\gamma$ is graph-like, $\chi_{\sigma}$ must have all the non-zero
entries on the diagonal. Hence, for any
$M\in\mathfrak{P}_{V^r}(\mathbb{F})$, $\chi_{\sigma}M$ and,
consequently $M\xi_{\sigma}$ must be non-zero. As such, $\{\vec{w}\in
V^r \mid \gamma(\vec{v}\langle \vec{i},(\vec{w}\cdot\vec{w})\rangle )=\sigma\}$ is non-empty. The converse can be argued by symmetry, thus showing $(1)$.

For $(2)$, it suffices to show that: if $A,B\in\mathrm{Mat}_V(\mathbb{F})$ are $01$-matrices and $AM=MB$ for some $M\in\mathfrak{P}_V(\mathbb{F})$ then $A$ and $B$ have the same number of non-zero entries if $\mathrm{char}(\mathbb{F})=0$. Indeed, note that if $\alpha$ and $\beta$ are the number of non-zero entries of $A$ and $B$ respectively, then $\alpha\mathbb{J}_V=\mathbb{J}_VA\mathbb{J}_V$ and $\beta\mathbb{J}_V=\mathbb{J}_VB\mathbb{J}_V$. Since $M\mathbb{J}_V=\mathbb{J}_VM=\mathbb{J}_V$,
$$
\alpha\mathbb{J}_V=\mathbb{J}_VA\mathbb{J}_V=\mathbb{J}_VAM\mathbb{J}_V=\mathbb{J}_VMB\mathbb{J}_V=\mathbb{J}_VB\mathbb{J}_V=\beta\mathbb{J}_V,
$$
whence $\alpha=\beta$.
\end{proof}
In particular, if $r=1$, statement $(1)$ above implies that there is some $x\in V$ such that $\pr_{k-1}\gamma(\pr_{k-1}\vec{u}\langle i,x\rangle)=\sigma$ if, and only if, there is some $y\in V$, such that $\pr_{k-1}\gamma(\pr_{k-1}\vec{v}\langle i,y\rangle)=\sigma$. Furthermore, from $(2)$ and Lemma $5.7$ in \cite{addv}, $\pr_{k-1}\gamma$ is $\C_{k-1}$ stable.
\begin{corollary}\label{basiscase}
Let $\gamma=[\alpha_{k,\Gamma}]^{\Sol_{k,1}^c}$ for some graph $\Gamma$ on $V$. If $\pr_{k-1}\gamma(\vec{u})=\pr_{k-1}\gamma(\vec{v})$, there are no first-order formulae distinguishing $(\mathfrak{A}_\Gamma,\vec{z}\mapsto\vec{u})$ from $(\mathfrak{A}_\Gamma,\vec{z}\mapsto\vec{v})$. In addition, if $c=0$, there are no $\mathcal{C}_{k-1}$-formulae distinguishing  $(\mathfrak{A}_\Gamma,\vec{z}\mapsto\vec{u})$ from $(\mathfrak{A}_\Gamma,\vec{z}\mapsto\vec{v})$.
\end{corollary}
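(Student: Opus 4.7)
The plan is to derive the corollary directly from the preceding lemma and the note following it. Since $\alpha_{k,\Gamma}$ is graph-like by construction and $\Sol_{k,1}^c$ was shown to be a graph-like refinement operator, the fixed point $\gamma = [\alpha_{k,\Gamma}]^{\Sol_{k,1}^c}$ is itself graph-like and $\Sol_{k,1}^c$-stable, so the hypotheses of the preceding lemma hold at $r=1$.

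For the first assertion (in any characteristic $c$), I appeal to the consequence of statement~(1) spelled out in the note. It yields the following back-and-forth property: for all $\vec{u}',\vec{v}' \in V^{k-1}$ with $\pr_{k-1}\gamma(\vec{u}') = \pr_{k-1}\gamma(\vec{v}')$, every $i \in [k-1]$, and every $x \in V$, setting $\sigma := \pr_{k-1}\gamma(\vec{u}'\langle i, x\rangle)$ produces, by the note applied with $\vec{u}$ and $\vec{v}$ interchanged, some $y \in V$ with $\pr_{k-1}\gamma(\vec{v}'\langle i, y\rangle) = \sigma$. This is precisely Duplicator's winning condition in the $(k-1)$-pebble Ehrenfeucht--Fra\"{\i}ss\'e game played on $(\mathfrak{A}_\Gamma, \vec{u}')$ and $(\mathfrak{A}_\Gamma, \vec{v}')$, so a routine induction on formula structure, using $\pr_{k-1}\gamma$-equivalence as the invariant maintained through the Boolean and quantifier cases, shows that no first-order formula in at most $k-1$ variables distinguishes the two pointed structures.

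For the second assertion, assume $c=0$. The note also records, via statement~(2) of the preceding lemma together with Lemma~5.7 of~\cite{addv}, that $\pr_{k-1}\gamma$ is $\C_{k-1,1}$-stable. Since $\gamma$ refines $\alpha_{k,\Gamma}$, its projection $\pr_{k-1}\gamma$ refines $\alpha_{k-1,\Gamma}$. Combining this with the $\C_{k-1,1}$-stability of $\pr_{k-1}\gamma$ and the universality of $[\alpha_{k-1,\Gamma}]^{\C_{k-1,1}}$ as the $\preceq$-minimal $\C_{k-1,1}$-stable refinement of $\alpha_{k-1,\Gamma}$ forces $[\alpha_{k-1,\Gamma}]^{\C_{k-1,1}} \preceq \pr_{k-1}\gamma$. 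Hence $\pr_{k-1}\gamma$-equivalence implies $[\alpha_{k-1,\Gamma}]^{\C_{k-1,1}}$-equivalence, which by the characterization recalled in the Counting logics operators subsection coincides with $\mathcal{C}_{k-1}$-indistinguishability.

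The whole argument is a bookkeeping exercise stitching together the preceding lemma, its note, and the standard translations between combinatorial stability and logical equivalence, so there is no substantive obstacle. The one point that requires care is recognising that the note's "existence of $x$ iff existence of $y$" formulation is in fact the $(k-1)$-pebble back-and-forth condition needed to drive the quantifier step of the induction in the first part.
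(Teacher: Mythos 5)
Your proposal is correct and takes essentially the same approach as the paper: the paper states the corollary without a proof environment, treating it as an immediate consequence of the preceding lemma and the remark following it (the back-and-forth consequence of statement~(1) at $r=1$, and the $\C_{k-1,1}$-stability deduced from statement~(2) and Lemma~5.7 of \cite{addv}). You supply the routine bookkeeping the paper leaves implicit — graph-likeness of the fixed point, the passage to padded $k$-tuples, the $(k-1)$-pebble game induction, and the $\preceq$-minimality argument identifying $[\alpha_{k-1,\Gamma}]^{\C_{k-1,1}}$ with $\mathcal{C}_{k-1}$-indistinguishability — all in line with the paper's intent.
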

Similarly to the invertible map operators, we define the $k$-refinement operator $\Sol_k^c$ so that for $\gamma\in\mathcal{P}(V^k)$, the colour of $\vec{v}\in V^k$ in $\Sol_k^c\circ\gamma$ is given by a tuple whose entries are the colours of $\vec{v}$ in $\Sol_{k,r}^c\circ\gamma$, for all $r<k/2$; that is,
$$
\Sol_k^c\circ\gamma(\vec{v})=(\Sol_{k,1}^c\circ\gamma(\vec{v}),\Sol_{k,2}^c\circ\gamma(\vec{v}),\hdots,\Sol_{k,\lfloor k/2\rfloor}^c\circ\gamma(\vec{v})).
$$
The next statement will be the crux of our main results.
\begin{theorem}\label{fund}
For any prime number $p$, if $\gamma=[\alpha_{k,\Gamma}]^{\Sol_k^p}$ and $\gamma(\vec{u})=\gamma(\vec{v})$, there are no FOS$_{k-1}(p)$ formulae distinguishing $(\mathfrak{A}_\Gamma,\vec{z}\mapsto\vec{u})$ from $(\mathfrak{A}_\Gamma,\vec{z}\mapsto\vec{v})$.
\end{theorem}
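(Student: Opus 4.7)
The plan is by induction on the structure of FOS$_{k-1}(p)$-formulas. Since $\gamma$ refines $\alpha_{k,\Gamma}$ and is $\Sol_{k,r}^p$-stable and graph-like for every $r<k/2$, I will prove the following strengthening: for every FOS$_{k-1}(p)$-formula $\phi$, every assignment of its free variables to positions in $[k-1]$, and every pair $\vec{u},\vec{v}\in V^k$ with $\gamma(\vec{u})=\gamma(\vec{v})$, the formula $\phi$ has the same truth value under the two induced assignments. Note that $\gamma(\vec{u})=\gamma(\vec{v})$ entails $\pr_{k-1}\gamma(\pr_{k-1}\vec{u})=\pr_{k-1}\gamma(\pr_{k-1}\vec{v})$.

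The atomic case is immediate since $\gamma$ refines $\alpha_{k,\Gamma}$, and Boolean combinations are routine. For a first-order quantifier $\phi=\exists y.\psi(y,\vec{z})$, pick a position $i\in[k-1]$ disjoint from those of $\vec{z}$; the $r=1$ consequence of the lemma preceding Corollary~\ref{basiscase} (which follows from $\Sol_{k,1}^p$-stability and graph-likeness) provides, for any witness $a$ of $\psi$ on the $\vec{u}$-side, an element $b$ with $\pr_{k-1}\gamma(\pr_{k-1}\vec{u}\langle i,a\rangle)=\pr_{k-1}\gamma(\pr_{k-1}\vec{v}\langle i,b\rangle)$. Lifting these projections to $\gamma$-equivalent $k$-tuples and invoking the inductive hypothesis on $\psi$ yields the corresponding witness on the $\vec{v}$-side.

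The crux is the solvability quantifier step. Let $\phi=\mathrm{slv}_p(\vec{x}\vec{y}.\psi(\vec{x},\vec{y},\vec{z}))$, with $|\vec{x}|=|\vec{y}|=r$ (assumed equal after padding as in the preliminaries) and $|\vec{z}|=\ell$, so $2r+\ell\leq k-1$. Place $\vec{z}$ at positions $1,\dots,\ell$ and let $\vec{i}=(\ell+1,\dots,\ell+2r)\in[k-1]^{(2r)}$; since $r<k/2$, the operator $\Sol_{k,r}^p$ is a component of $\Sol_k^p$. Applying the inductive hypothesis to $\psi$ on pairs $\vec{u}\langle\vec{i},\vec{x}\cdot\vec{y}\rangle$ and $\vec{v}\langle\vec{i},\vec{x}'\cdot\vec{y}'\rangle$ of common $\gamma$-color $\sigma$ shows that the truth value of $\psi(\vec{x},\vec{y},\pr_\ell\vec{u})$ depends only on $\sigma$, and agrees with that of $\psi(\vec{x}',\vec{y}',\pr_\ell\vec{v})$; call this common value $c_\sigma\in\{0,1\}$. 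Writing $\vec{\xi},\vec{\chi}$ for the $(\vec{i},\vec{u})$- and $(\vec{i},\vec{v})$-character vectors of $\gamma$,
$$
S^{\pr_\ell\vec{u}}_\psi=\sum_{\sigma\in\im(\gamma)}c_\sigma\xi_\sigma,\qquad S^{\pr_\ell\vec{v}}_\psi=\sum_{\sigma\in\im(\gamma)}c_\sigma\chi_\sigma.
$$
The $\Sol_{k,r}^p$-stability of $\gamma$ furnishes $M\in\mathfrak{P}_{V^r}(\mathbb{Z}_p)$ with $\chi_\sigma M=M\xi_\sigma$ for every $\sigma$, whence $S^{\pr_\ell\vec{v}}_\psi M=MS^{\pr_\ell\vec{u}}_\psi$. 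Any solution $\vec{a}$ of $S^{\pr_\ell\vec{u}}_\psi\vec{a}=\mathbf{1}$ then produces a solution $M\vec{a}$ of $S^{\pr_\ell\vec{v}}_\psi(M\vec{a})=M\mathbf{1}=\mathbf{1}$, using that $M$ fixes the all-ones vector. The converse transfer follows by symmetry of $\sim_\mathrm{sol}$.

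The main obstacle lies in this last step: one must correctly position the quantified variables $\vec{x},\vec{y}$ within the $k$-tuple so that $\vec{i}\in[k-1]^{(2r)}$ and is disjoint from those of $\vec{z}$; verify that the coefficients $c_\sigma$ are well-defined (the only delicate point being when $\xi_\sigma=0$ and $\chi_\sigma\neq 0$ or vice versa, which can be resolved by choosing $c_\sigma$ coherently on the side where the character is zero); and align the matrix $S_\psi$ with the linear combination of character vectors determined by the inductive hypothesis applied symmetrically to both sides. Once this setup is in place, the transfer of solutions is a direct consequence of $M$ fixing $\mathbf{1}$ in $\mathfrak{P}_{V^r}(\mathbb{Z}_p)$.
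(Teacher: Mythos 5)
Your proposal is correct and takes essentially the same approach as the paper's proof: induction on the structure of FOS$(p)$ formulae, with the base case delegated to Corollary~\ref{basiscase} and the solvability-quantifier step handled by expressing the matrices $S^{\vec{u}}_\psi$ and $S^{\vec{v}}_\psi$ as identical linear combinations of character-vector entries (your $c_\sigma$ coefficients playing the role of the paper's index set $I\subseteq\im(\gamma)$) and then transferring solutions via the doubly-stochastic matrix $M$ furnished by $\Sol_{k,r}^p$-stability. The few places where you spell things out more fully than the paper — the explicit lift from $\pr_{k-1}\gamma$ to $\gamma$-equivalent $k$-tuples in the first-order quantifier case, and the treatment of colours $\sigma$ for which $\xi_\sigma=0$ but $\chi_\sigma\neq 0$ — are genuine details the paper glosses over, but they do not change the argument.
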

\begin{proof}
We proceed by induction on the structure of FOS$(p)$ formulae
$\phi(\vec{z})$, where $\vec{z}$ is a $k$-tuple of pairwise distinct
variables. If $\phi(\vec{z})$ contains only atomic formulae, boolean
connectives and first order quantifiers, the statement holds by
Corollary \ref{basiscase}. Assume that for some $\phi(\vec{z})\in
\text{FOS}_k(p)$,
$\mathfrak{A}_\Gamma\models\phi(\vec{u})\iff\mathfrak{A}_\Gamma\models\phi(\vec{v})$,
and suppose $(\mathfrak{A}_\Gamma,\vec{z}\mapsto\vec{v})\models
\mathrm{slv}_p[\vec{x}\vec{y}.\phi(\vec{z}\langle
\vec{i},\vec{x}\cdot\vec{y}\rangle)]$, where $\vec{i}\in[k]^{(2r)}$
and $\vec{x},\vec{y}$ are $r$-tuples of distinct variables (distinct
from variables in the tuple $\vec{z}$). Let $S^{\vec{v}}$ be the adjacency matrix of the relation $\{(\vec{a},\vec{b})\in V^r\times V^r\mid (\mathfrak{A}_\Gamma,\vec{z}\mapsto\vec{v})\models\phi(\vec{v}\langle\vec{i},\vec{a}\cdot\vec{b}\rangle)\}$, and similarly define $S^{\vec{u}}$. Then, there is some $\vec{a}\in \mathbb{Z}_p^{V^r}$ such that $S^{\vec{v}}\vec{a}=\mathbf{1}_{V^r}$. By the induction hypothesis, $S^{\vec{u}}=\sum_{\sigma\in I}\chi_{\sigma}$ and $S^{\vec{v}}=\sum_{\sigma\in I}\xi_{\sigma}$ for some $I\subseteq\im(\gamma)$. Since there exists $M\in\mathfrak{P}_{V^r}(\mathbb{Z}_p)$ such that $\chi_{\sigma}M=M\xi_{\sigma}$ for all $\sigma\in\im(\gamma)$, we have
$$
S^{\vec{v}}\vec{a}=\mathbf{1}_{V^r}\implies MS^{\vec{v}}\vec{a}=\mathbf{1}_{V^r}\implies S^{\vec{u}}(M\vec{a})=\mathbf{1}_{V^r},
$$
from which it follows that $(\mathfrak{A}_\Gamma,\vec{z}\mapsto\vec{u})\models\mathrm{slv}_p[\vec{x}\vec{y}.\phi(\vec{z}\langle\vec{i},\vec{x}\cdot\vec{y}\rangle)]$. Using a symmetric argument, we conclude that $\mathrm{slv}_p[\vec{x}\vec{y}.\phi(\vec{z}\langle\vec{i},\vec{x}\cdot\vec{y}\rangle)]$ does not distinguish $(\mathfrak{A}_\Gamma,\vec{z}\mapsto\vec{u})$ from $(\mathfrak{A}_\Gamma,\vec{z}\mapsto\vec{v})$ and the result follows by induction.
\end{proof}

\subsection{Polynomial, monomial and Nullstellensatz calculi}\label{calculi}
The idea behind the proof systems we are concerned with is that of
encoding Boolean formulae by multivariate polynomials over a field and
determining, via inference rules, if these do not have a common
root. In such cases, we conclude that the formulae are
inconsistent. The \emph{polynomial calculus} (PC), \emph{monomial
  calculus} (MC) and \emph{Nullstellensatz calculus} (NC) are proof
systems of this kind that have been studied in the literature.  The PC inference rules for a set of axioms $A\subset \mathbb{F}[x_1,...,x_n]$ are as follows:
\begin{enumerate}
\item $\frac{}{f}$ for all $f\in A$.
\item \emph{Multiplication rule}: $\frac{f}{xf}$ for all derived polynomials $f$, and variables $x\in\{x_1,...,x_n\}$.
\item \emph{Linearity rule}: $\frac{f,g}{\lambda f+\mu g}$ for all derived polynomials $f,g$ and $\lambda,\mu\in\mathbb{F}$.
\end{enumerate}
The inference rules for MC are obtained by restricting $f$ in the
multiplication rule to be an axiom times a monomial or a monomial. By
further restricting $f$ to be an axiom times a monomial, one obtains
the NC inference rules. The \emph{degree} of a PC (or MC or NC)
derivation is the maximum degree of all polynomials involved in the
derivation, and a PC (or MC or NC, respectively) \emph{refutation} of
$A$ is a derivation of $1$ from $A$ using PC (or MC or NC, respectively) rules. We are really interested in roots where the variables are assigned $01$-values, so as to encode the Boolean framework. To enforce this, we tacitly assume that the set of axioms always includes the polynomials $x^2-x$ for all $x\in\{x_1,\hdots,x_n\}$. As such, we may restrict our focus to multilinear polynomials exclusively: indeed, if, for distinct variables $x$ and $y$, we may derive $x^2y$, we can also derive $x^2y-y(x^2-x)=xy$.

It is easy to check that all three proof systems are sound and, by Hilbert's weak Nullstellensatz Theorem\footnote{This is a fundamental result in algebra which states that the polynomials of an ideal in $\mathbb{F}[x_1,...x_n]$ have a common root in the algebraic closure of $\mathbb{F}$ if, and only if, $1$ is not in the ideal.}, complete. That is, there is a refutation in each of the proof systems if, and only if, the polynomials in the axioms do not have a common root. In general, the degree of refutations can be unbounded, which suggests that such a quantity is a reasonable complexity parameter to study. We denote by PC$_k$ the proof system using the same inference rules as PC with the added constraint that all derivations must have degree at most $k$, and we use the same convention for MC$_k$ and NC$_k$. Though these proof systems with bounded degree are not complete, their refutations can be decided in polynomial time in the number of variables.

For a graph $\Gamma$ on $V$ we define $\mathrm{Ax}(\Gamma)\subseteq\mathbb{F}[\{x_{uv}|u,v\in V\}]$ to be the set of axioms containing the following polynomials:
\begin{itemize}
\item[\textbf{A1.}] $\sum_{u\in V}x_{uv}-1\;\,\textrm{ for all $v\in V$.}$
\item[\textbf{A2.}] $\sum_{u\in V}x_{vu}-1\;\,\textrm{ for all $v\in V$.}$
\item[\textbf{A3.}] $x_{uv}x_{u'v'}\;\,\textrm{ if the map $u\mapsto u',v\mapsto v'$ is not a local isomorphism in $\Gamma$.}$
\item[\textbf{A4.}] $x_{uv}^2-x_{uv}\;\,\textrm{for all $u,v\in V$.}$
\end{itemize}
For $\vec{u},\vec{v}\in V^k$, we further define $\mathrm{Ax}(\Gamma_{\vec{u}\rightarrow \vec{v}})$ to contain the above plus the following:
\begin{itemize}
\item[\textbf{A5.}] $x_{v_iu_i}-1\;\,\textrm{for all $i\in [k]$.}$
\end{itemize}

Note that any of these proof systems has a refutation of $\mathrm{Ax}(\Gamma_{\vec{u}\rightarrow\vec{v}})$ if, and only if, $\vec{u}$ and $\vec{v}$ are in different equivalence classes of the $k$-orbit partition for $\Gamma$
. Furthermore, any of the proof systems has a degree $d$ refutation of $\mathrm{Ax}(\Gamma_{\vec{u}\rightarrow\vec{v}})$ if, and only if, it allows a degree $d$ derivation from $\mathrm{Ax}(\Gamma)$ of the monomial $\prod_{i\in J}x_{v_iu_i}$ for some subset $J\subseteq [k]$ of size at most $d$. Since all coefficients of the elements of $\mathrm{Ax}(\Gamma_{\vec{u}\rightarrow\vec{v}})$ are $0$ and $\pm1$, refutations can be decided by restricting the use of the linearity rule with $\lambda,\mu$ belonging to the prime subfield of $\mathbb{F}$. When considering these axioms, we may then assume, without loss of generality, that $\mathbb{F}$ is a prime field.

Let $\equiv_{\PC_k}^c$ be the relation on $V^k$, where $\vec{u}\equiv_{\PC_k}^c\vec{v}$ if there is no degree $k$ PC refutation of $\mathrm{Ax}(\Gamma_{\vec{u}\rightarrow\vec{v}})$ over the prime field of characteristic $c$, and similarly define $\equiv_{\MC_k}^c$ and $\equiv_{\NC_k}^c$.
\begin{lemma}
	$\equiv_{\PC_k}^c$, $\equiv_{\MC_k}^c$ and $\equiv_{\NC_k}^c$ are equivalence relations on $V^k$.
\end{lemma}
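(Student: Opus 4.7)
The plan is to verify the three defining properties of an equivalence relation separately; because the inference rules of MC and NC are restrictions of those of PC, the manipulations below apply uniformly to all three calculi.

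\emph{Reflexivity.} For any $\vec u$, the identity assignment $x_{ab}:=[a=b]$ is a common zero of $\mathrm{Ax}(\Gamma_{\vec u\to\vec u})$: \textbf{A1}, \textbf{A2} and \textbf{A4} become tautologies, \textbf{A3} holds because the identity is a local isomorphism of $\Gamma$, and \textbf{A5} holds because $x_{u_iu_i}=1$. By soundness of each calculus no refutation exists, so $\vec u\equiv\vec u$.

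\emph{Symmetry.} The involution $\tau\colon x_{ab}\mapsto x_{ba}$ on variables acts on axioms as follows: it swaps \textbf{A1} with \textbf{A2}, fixes \textbf{A4}, fixes \textbf{A3} (each $x_{uv}x_{u'v'}$ is sent to $x_{vu}x_{v'u'}$, whose A3 condition is the same as the original, merely relabelled), and carries the \textbf{A5}-axioms of $\mathrm{Ax}(\Gamma_{\vec u\to\vec v})$ bijectively onto those of $\mathrm{Ax}(\Gamma_{\vec v\to\vec u})$. Applied pointwise to a refutation, $\tau$ preserves degree and sends each instance of the multiplication and linearity rules (and the restrictions defining MC and NC) to a valid instance of the same kind. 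Hence refutations transfer, giving $\vec v\equiv\vec u$ from $\vec u\equiv\vec v$.

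\emph{Transitivity}---the main obstacle. I would use the characterization stated just before the lemma: $\vec u\not\equiv\vec v$ at degree $k$ is equivalent to the existence of a degree-$\leq k$ derivation from $\mathrm{Ax}(\Gamma)$ of some monomial $m_{\vec u,\vec v}^{J}=\prod_{i\in J}x_{v_iu_i}$ with $|J|\leq k$. Assume $\vec u\equiv\vec v$, $\vec v\equiv\vec w$, but $\vec u\not\equiv\vec w$. Starting from a derivation of $m_{\vec u,\vec w}^{J}$, the aim is to construct a derivation of some $m_{\vec u,\vec v}^{J'}$ or $m_{\vec v,\vec w}^{J''}$, contradicting one of the hypotheses. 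A natural syntactic attempt is, for some $i\in J$, to multiply $m_{\vec u,\vec w}^{J}$ by a row-sum axiom $\sum_a x_{v_ia}-1$ so as to bring a $v_i$-variable into the derivation, then apply \textbf{A3} to prune products for which the joint partial map is not a local isomorphism of $\Gamma$, thereby isolating factors involving $x_{v_iu_i}$ or $x_{w_iv_i}$. The difficulty is to stay within degree $k$, which is tight when $|J|=k$. To bypass this, I would pass to the dual formulation: non-refutability at degree $k$ is equivalent to the existence of a linear functional on polynomials of degree $\leq k$ that annihilates the subspace of polynomials derivable at that degree and does not vanish on $m_{\vec u,\vec v}^{J}$ (for each relevant $J$). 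Gluing functionals witnessing $\vec u\equiv\vec v$ and $\vec v\equiv\vec w$---via an averaging/tensor operation over $V$ that mirrors the product of permutation matrices---should yield the desired functional for $(\vec u,\vec w)$. I expect the technical heart of the proof to lie precisely in this gluing step, together with the bookkeeping required to accommodate the restricted multiplication rules of MC and NC.
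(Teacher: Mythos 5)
Your reflexivity and symmetry arguments are correct and essentially identical to the paper's: reflexivity via soundness and the common root given by the identity matrix, symmetry via the involution swapping $x_{ab}$ with $x_{ba}$, which permutes the axiom set and hence preserves refutations.

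Transitivity is where your proposal has a genuine gap. You correctly recognize that the direct syntactic attempt (multiplying the derived monomial $m^J_{\vec u,\vec w}$ by row-sum axioms and pruning via \textbf{A3}) overshoots the degree bound when $|J|=k$, and you then retreat to a dual formulation via linear functionals with a ``gluing/averaging'' step that you explicitly leave unspecified. As written, this is a plan, not a proof: the claimed tensor/averaging operation that mirrors the product of permutation matrices is neither defined nor shown to produce a functional that annihilates the degree-$\le k$ derivable subspace while remaining nonzero on the target monomial, and it is not clear how it accommodates the restricted multiplication rules of MC and NC. The paper's actual argument avoids both degree bookkeeping and functionals: from $\vec v\equiv\vec w$ one obtains a well-defined partial map $\pi\colon v_i\mapsto w_i$ (if $v_i=v_j$ with $w_i\ne w_j$, then $x_{v_iw_i}x_{v_jw_j}$ is an \textbf{A3}-axiom, giving an immediate degree-$2$ refutation and contradicting $\vec v\equiv\vec w$). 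This $\pi$ is extended to a relabelling $A\mapsto A^\pi$ of index sets of multilinear monomials, and one shows by induction on the PC (resp.\ MC, NC) rules that there is a degree-$k$ derivation of $f$ from $\mathrm{Ax}(\Gamma)$ if and only if there is one of $f^\pi$, the relabelling being an involution. Since the relabelling sends $X_{\{(u_i,v_i)\mid i\in[k]\}}$ to $X_{\{(u_i,w_i)\mid i\in[k]\}}$, the non-derivability coming from $\vec u\equiv\vec v$ transfers to give $\vec u\equiv\vec w$. This is a purely syntactic, degree-preserving renaming argument, quite different from the semantic gluing you propose, and it is the step your proposal is missing.
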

We prove this statement only for $\equiv_{\PC_k}^c$, since a similar argument applies in the other two cases. As remarked above, we may restrict our focus to multilinear polynomials exclusively and denote the multilinear monomial $x_{a_1}x_{a_2}\hdots x_{a_m}$ by $X_A$ where $A=\{a_1,a_2,\hdots a_m\}$.
\begin{proof}
	Clearly, $\vec{u}\equiv_{\PC_k}^c\vec{u}$. Indeed, the polynomials in the ideal generated by $\mathrm{Ax}(\Gamma_{\vec{u}\rightarrow\vec{u}})$ have the common root  $x_{rs}=\delta_{rs}$, where $\delta_{rs}$ is the Kronecker delta. Hence, the ideal generated by $\mathrm{Ax}(\Gamma_{\vec{u}\rightarrow\vec{u}})$ is non-trivial. The set $\mathrm{Ax}(\Gamma)$ is invariant under the transformation $x_{rs}\rightarrow x_{sr}$ for all $r,s\in V$. Thus, $\vec{u}\equiv_{\PC_k}^c\vec{v}\implies\vec{v}\equiv_{\PC_k}^c\vec{u}$.
	
	Suppose $\vec{u}\equiv_{\PC_k}^c\vec{v}$ and $\vec{v}\equiv_{\PC_k}^c\vec{w}$. Let $\pi$ be the map $v_i\rightarrow w_i$ for $i\in[k]$. Note that $\pi$ is well defined, for if $v_i=v_j$ for some $i\neq j$ and $\pi(v_i)\neq \pi(v_j)$, then $x_{v_iw_i}x_{v_jw_j}\in\mathrm{Ax}(\Gamma)$ - a contradiction. For $A\subseteq V^2$, define $A^\pi$ as follows:
	$$
	A^\pi=\begin{cases}
	\{(r,\pi(s))\mid (r,s)\in A\}\;\;\textrm{ if for all $(r,s)\in A$ there is some $j$ such that $s=v_j$;}\\
	\{(r,\pi^{-1}(s))\mid (r,s\in A)\}\;\,\textrm{ if for all
          $(r,s)\in A$ there is some $j$ such that $s=w_j$; and}\\
	A\;\;\textrm{otherwise.}
	\end{cases}
	$$
	For a multilinear polynomial $f=\sum a_AX_A$, let $p^\pi=\sum
        a_AX_{A^\pi}$. We show by induction on the $\PC$ inference
        rules that there is a $\PC_k$ derivation of $p$ if, and only
        if, there is $\PC_k$ derivation of $f^\pi$. Indeed, if $f$ is
        in $\mathrm{Ax}(\Gamma)$, then so is $f^\pi$. Suppose there is a $\PC_k$ derivation of $f,g,f^\pi$ and $g^\pi$. Then there is a $\PC_k$ derivation of $(\lambda f+\mu g)^\pi=\lambda f^\pi+\mu g^\pi$. Finally, suppose the degree of $f$ is less than $k$, and suppose, without loss of generality that $f=X_A$ for some $A\subseteq V^2$. Then, there is a $\PC_k$ derivation of $X_{A\cup\{(r,s)\}}$ for any $r,s\in V$. By checking case by case, it follows that there is a $\PC_k$ derivation of $(X_{A\cup\{(r,s)\}})^\pi$. Since $(A^\pi)^\pi=A$ and hence, $(f^\pi)^\pi=f$, there is a $\PC_k$ derivation of $f$ if, and only if, there is a $\PC_k$ derivation of $f^\pi$. In particular, since there is no $\PC_k$ derivation of $X_{\{(v_i,w_i)\mid i\in[k]\}}$, there is no derivation of $(X_{\{(u_i,v_i)\mid i\in[k]\}})^\pi=X_{\{(u_i,w_i)\mid i\in[k]\}}$. Whence, $\vec{u}\equiv_{\PC_k}^c\vec{w}$.
\end{proof}

It is easy to see that the relation $\equiv_{\PC_k}^c$ refines $\equiv_{\MC_k}^c$ which, in turn, is a refinement of $\equiv_{\NC_k}^c$, since any $\NC_k$ refutation is a $\MC_k$ refutation which is also a $\PC_k$ refutation. More precisely:

\begin{lemma}
	For any graph on $V$ and $\vec{u},\vec{v}\in V^k$, $\vec{u}\equiv_{\PC_k}^c\vec{v}\implies\vec{u}\equiv_{\MC_k}^c\vec{v}\implies\vec{u}\equiv_{\NC_k}^c\vec{v}$.
\end{lemma}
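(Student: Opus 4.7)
The plan is to deduce both implications from a single syntactic observation: the inference rules of $\NC$ are a strict restriction of those of $\MC$, which in turn restrict those of $\PC$. Hence the containment of derivation systems is immediate, and the lemma follows by taking the contrapositive.

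Concretely, I would first verify directly from the definitions in Section~\ref{calculi} that the rule set of each proof system is contained in the next. The axiom rule and the linearity rule are formulated identically in $\NC$, $\MC$, and $\PC$, so only the multiplication rule requires attention. In $\PC$ it may be applied to any derived polynomial $f$; in $\MC$ the multiplicand $f$ must be either an axiom times a monomial or a bare monomial; in $\NC$ only the former case is permitted. Thus every application of an $\NC$ rule is a legal application of an $\MC$ rule, and every application of an $\MC$ rule is a legal application of a $\PC$ rule. The notion of degree is the same in all three systems, so the degree bound $k$ is preserved.

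From this I would conclude that every $\NC_k$ refutation of a set of axioms is automatically an $\MC_k$ refutation of the same set, and every $\MC_k$ refutation is a $\PC_k$ refutation. Applying this to the axiom set $\mathrm{Ax}(\Gamma_{\vec{u}\rightarrow\vec{v}})$ and contrapositing gives exactly the two implications in the statement: if there is no degree-$k$ $\PC$ refutation over the prime field of characteristic $c$, then a fortiori there is no degree-$k$ $\MC$ refutation (otherwise the latter would be a $\PC$ refutation), giving $\vec{u}\equiv_{\PC_k}^c\vec{v}\implies\vec{u}\equiv_{\MC_k}^c\vec{v}$; the implication $\vec{u}\equiv_{\MC_k}^c\vec{v}\implies\vec{u}\equiv_{\NC_k}^c\vec{v}$ is obtained identically.

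There is no genuine obstacle here, since the argument is purely a matter of unwinding definitions, and indeed the containment of rule sets is essentially the remark the authors make in the paragraph preceding the lemma. The only point that deserves care is ensuring that the degree bound $k$ is interpreted uniformly across the three systems (which it is, since degree is defined as the maximum degree among all polynomials appearing in the derivation, irrespective of which rules produced them), so that a refutation witnessing $\vec{u}\not\equiv_{\NC_k}^c\vec{v}$ is accepted without rescaling its degree as a witness for $\vec{u}\not\equiv_{\MC_k}^c\vec{v}$, and similarly for $\MC_k$ versus $\PC_k$.
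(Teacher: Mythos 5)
Your argument is correct and is exactly the one the paper intends: the lemma is stated as an immediate consequence of the remark that any $\NC_k$ refutation is an $\MC_k$ refutation, which is in turn a $\PC_k$ refutation, since the rule sets are nested and the degree measure is uniform. Nothing further is needed.
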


For $c=0$, Grohe et.\ al.\ have characterized these equivalence relations in terms of counting logics. More precisely:
\begin{displayquote}
Let $\Gamma$ be a graph on $V$ and $\vec{u},\vec{v}\in V^k$. Then $\vec{u}\equiv_{\MC_k}^0\vec{v}$ if, and only if, no $\mathcal{C}_k$ formula distinguishes $(\mathfrak{A}_\Gamma,\vec{z}\mapsto\vec{u})$ from $(\mathfrak{A}_\Gamma,\vec{z}\mapsto\vec{v})$ (Theorem $4.4$ in \cite{berkh}). Furthermore, if $\vec{u}\not\equiv_{\PC_k}^0\vec{v}$, there a  $k'=O(k)$, such that some $\mathcal{C}_{k'}$ formula distinguishes $(\mathfrak{A}_\Gamma,\vec{z}\mapsto\vec{u})$ from $(\mathfrak{A}_\Gamma,\vec{z}\mapsto\vec{v})$ (Theorem 6.6 in \cite{g2p2}).
\end{displayquote}

In this paper, we aim at a similar characterization for the equivalence relations induced by the proof systems in fields of positive characteristic.  The corresponding logic we consider are the finite variable fragments of the extension of first-order logic with solvability quantifiers.  This allows us to relate the distinguishing power of these relations to that of the equivalence relations defined by the invertible map tests.

\section{Definability of monomial calculus refutations over finite fields}\label{sec:monomial}

At the core of the proof of Theorem \ref{thm:main1} is the definability of monomial calculus refutations in FPS$(p)$. More precisely, the main objective of this section is to prove the following statement.
\begin{lemma}\label{4.1}
	Let $\mathfrak{A}$ be a structural encoding of a finite set of polynomials $P$ of degree at most $d$, over a finite field $\mathbb{F}$ of positive characteristic $p$. For any $d,k\in\nats$ there is a $\mathrm{FPS}(p)$ formula $\phi_{d,k}$ such that $\mathfrak{A}\models\phi_{d,k}$ if, and only if, there is an $\MC_k$ refutation of $P$ over $\mathbb{F}$. \footnote{Note that it is possible to drop the dependence on $d$, but this would involve a more cumbersome definition of $\mathfrak{A}$ and subsequent interpretations. Since elements of $\mathrm{Ax}(\Gamma)$ have degree at most $2$, this statement is sufficient for our purpose.}
\end{lemma}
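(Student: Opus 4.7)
The strategy is to characterise $\MC_k$-derivability as the fixed point of a linear-algebraic closure process and then express that fixed point in FPS$(p)$, using the solvability quantifier to test membership in evolving linear spans. Let $V_k$ denote the $\mathbb{F}$-linear span of the axiom-monomial products $\{a\cdot m : a\in P,\ m\ \text{a monomial},\ \deg(am)\leq k\}$, and write $\mathcal{M}(W)$ for the set of monomials of degree at most $k$ that lie in a subspace $W$. Define inductively $W_0 = V_k$ and $W_{i+1} = V_k + \operatorname{span}_{\mathbb{F}}\{xM : M\in\mathcal{M}(W_i),\ x\ \text{a variable},\ \deg(xM)\leq k\}$. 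A direct induction on MC derivations shows that $\bigcup_i W_i$ is exactly the subspace of $\MC_k$-derivable polynomials: each axiom-times-monomial is derivable by iterated multiplication and the linearity rule preserves the span, while the restricted multiplication rule applied to a monomial $M\in\mathcal{M}(W_i)$ produces $xM\in W_{i+1}$. Hence an $\MC_k$-refutation of $P$ exists iff $1\in\bigcup_i W_i$. The iteration is genuinely needed: one can derive a low-degree monomial $M$ as a cancellation among degree-$k$ axiom-monomial products and then legally multiply $M$ by a fresh variable, producing elements of $\MC_k$-derivability that need not lie in $V_k$ itself.

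For fixed $d$ and $k$, monomials of degree at most $k$ and axiom-monomial pairs $(a,m)$ with $\deg(am)\leq k$ are uniformly first-order definable by bounded tuples over the universe of $\mathfrak{A}$. A logical interpretation therefore produces a matrix $A$ over $\mathbb{F}$ whose rows are indexed by such monomials, columns by such pairs, and $A_{M,(a,m)}$ is the coefficient of $M$ in $a\cdot m$. When $\mathbb{F}=\mathbb{F}_{p^r}$ with $r>1$, a further interpretation replaces each $\mathbb{F}$-entry by its $r\times r$ regular representation over $\mathbb{Z}_p$ obtained from a basis of $\mathbb{F}$ recovered from the encoding in $\mathfrak{A}$, reducing every $\mathbb{F}$-linear solvability question to an equivalent $\mathbb{Z}_p$-linear one suitable for $\mathrm{slv}_p$; the quantifier's apparent restriction to right-hand side $\mathbf{1}$ is removed by the standard padding argument recalled in Section~3.2 of \cite{dgp}.

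To capture the fixed-point iteration of the first paragraph in FPS$(p)$, I introduce an inflationary fixed-point predicate $R(M)$ on monomials of degree at most $k$ whose defining body asserts, via an $\mathrm{slv}_p$-formula, that $e_M$ lies in the $\mathbb{F}$-column span of $A$ augmented by the vectors $\{e_{xN} : N\ \text{a monomial with}\ R(N),\ \deg(xN)\leq k\}$. Since $R$ ranges over a polynomially-sized set and grows monotonically, the fixed point is reached in polynomially many stages; by the first-paragraph characterisation, at stabilisation $R$ enumerates precisely the $\MC_k$-derivable monomials, so the desired formula is $\phi_{d,k} := R(\mathbf{1})$. The hard part is the bookkeeping of Step~3: ensuring that the matrix presented to $\mathrm{slv}_p$ is correctly parameterised by the current stage of $R$—effectively nesting the solvability quantifier inside an inflationary fixed point—and that the base-field extension to $\mathbb{Z}_p$ composes cleanly with the fixed-point semantics. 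Once this uniformity is verified via the Interpretation Lemma, the equivalence between satisfaction of $\phi_{d,k}$ in $\mathfrak{A}$ and the existence of an $\MC_k$-refutation of $P$ follows from the characterisation of Step~1.
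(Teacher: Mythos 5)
Your proof is correct and takes essentially the same route as the paper's: $\MC_k$-derivability is characterised as the limit of an increasing chain of $\mathbb{F}$-spans with a canonical monomial generating set at each stage, span-membership is rendered as an $\mathrm{slv}_p$-formula through logical interpretations, and the iteration is captured by an inflationary fixed point whose body nests the solvability quantifier. The only cosmetic differences are that you mark derivable monomials $R(M)$ directly rather than closing under the subset relation $\mathtt{Sub}$ as the paper does, and that you spell out the $\mathbb{F}_{p^r}\to\mathbb{Z}_p$ regular-representation step which the paper elides by assuming $\mathbb{F}=\mathbb{Z}_p$ at the outset.
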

For the sake of argument, we assume that $\mathbb{F}=\mathbb{Z}_p$. We first recall how to express the solvability of linear equations with coefficients other than $0$ and $1$ and explain the meaning of \emph{structural encoding} of a set of polynomials.
\subsection{Defining solvability of linear equations over finite fields}\label{solsec}
For each prime number $p$, let $\mathtt{LIN}_p$ be a relational vocabulary with the following symbols:
\begin{enumerate}
	\item A binary relational symbol $\texttt{A}_q$ for each $q\in\mathbb{Z}_p$.
	\item A unary relational symbol $\texttt{b}_q$ for each $q\in\mathbb{Z}_p$.
\end{enumerate}
Let $A\in\mathrm{Mat}_{E\times V}(\mathbb{Z}_p)$ and $\vec{b}\in\mathbb{Z}_p^E$. A $\mathtt{LIN}_p$-structure $\mathfrak{A}$ with universe $V\cup E$ ($V$ for variables and $E$ for equations) encodes the system of linear equations $A\vec{x}=\vec{b}$ if, for all $e\in E,v\in V$, $\mathfrak{A}\models\texttt{A}_q(e,v)$ if $A_{ev}=q$ and $\mathfrak{A}\models\texttt{b}_q(e)$ if $b_e=q$.

Recall Lemma 4.1 in \cite{defin}.
\begin{lemma}\label{dhk}
	There is a quantifier free interpretation $\mathcal{I}$ of $\mathtt{LIN}_p$ into $\mathtt{LIN}_p$ such that if $\mathfrak{A}$ encodes the system of linear equations $A\vec{x}=\vec{b}$, then:
	\begin{enumerate}
		\item $\mathcal{I}(\mathfrak{A})$ encodes a system of linear equations $A'\vec{y}=\mathbf{1}$, where $\mathbf{1}$ is the all $1$s vector of appropriate length and $A'$ is a $01$-matrix.
		\item $A'\vec{y}=\mathbf{1}$ has a solution if, and only if, $A\vec{x}=\vec{b}$ has a solution.
	\end{enumerate}
\end{lemma}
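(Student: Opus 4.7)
My plan is to exhibit the interpretation by a two-gadget construction that, given any equation $\sum_v A_{ev} x_v = b_e$ over $\mathbb{Z}_p$, replaces it by an equivalent set of equations having only $01$-coefficients on the left and $1$ on the right. I use two gadgets together with a preliminary scaling step: the preliminary step pre-multiplies each equation $e$ with $b_e \neq 0$ by the fixed constant $b_e^{-1} \in \mathbb{Z}_p$, so its right-hand side becomes $1$. The \emph{right-hand side gadget} then handles the remaining equations $e$ with $b_e = 0$ by introducing a single fresh global variable $w$, adding the equation $w = 1$, and rewriting each $b_e = 0$ equation as $\sum_v A_{ev} x_v + w = 1$. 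Finally the \emph{coefficient gadget}, applied once per pair $(e, v)$ whose current coefficient on $x_v$ in equation $e$ is some $q \in \{2, \ldots, p-1\}$, introduces $q - 1$ fresh copy variables $y_{e,v,1}, \ldots, y_{e,v,q-1}$ together with auxiliaries $z_{e,v,i}$ and the $01$-coefficient, right-hand-side $1$ equations $y_{e,v,i} + z_{e,v,i} = 1$ and $x_v + z_{e,v,i} = 1$, which jointly force $y_{e,v,i} = x_v$; the original term $q \cdot x_v$ is then rewritten as $x_v + y_{e,v,1} + \cdots + y_{e,v,q-1}$.

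Soundness of both gadgets is immediate. Any solution $(x_v)$ of the original system extends to a solution of the new one by setting $w := 1$, $y_{e,v,i} := x_v$, and $z_{e,v,i} := 1 - x_v$. Conversely any solution of the new system has $w = 1$ forced by the equation $w = 1$ and $y_{e,v,i} = x_v$ forced by the pair of gadget equations, so restricting to the original variables yields a solution of the original system. Property (2) follows, and (1) is immediate by inspection, since every equation produced by the construction has a $01$-coefficient row and right-hand side $1$.

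To package this as a quantifier-free interpretation of $\mathtt{LIN}_p$ into $\mathtt{LIN}_p$, I would use a fixed arity $d$ (independent of the input) and define the universe of $\mathcal{I}(\mathfrak{A})$ as a disjoint union of finitely many \emph{sorts}: original variables, original (scaled) equations, the single global $w$ and its defining equation, and for each pair $(e, v)$ a bounded family of auxiliaries $y_{e,v,i}, z_{e,v,i}$ together with their two defining equations. Each sort is carved out by a quantifier-free formula on $d$-tuples; the index $i$ ranges over a fixed set of size at most $p - 2$, which is a constant because $p$ is a fixed prime, and can be encoded by conditioning on which input relation $\mathtt{A}_{q'}(e, v)$ holds and by making selected coordinates of the tuple equal or distinct. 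The output relations $\mathtt{A}_0, \mathtt{A}_1, \mathtt{b}_0, \mathtt{b}_1$ are then given by a quantifier-free case split on sorts, and $\mathtt{A}_q, \mathtt{b}_q$ for $q \notin \{0, 1\}$ are set to the empty relation.

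The hardest part is the administrative bookkeeping of the preceding paragraph: one must choose the tuple representation so that each gadget variable and each gadget equation corresponds to a unique definable tuple, and so that the incidences dictated by the gadgets coincide exactly with the intended $\mathtt{A}_1$-relation on $\mathcal{I}(\mathfrak{A})$. Because $p$ is fixed, both the number of sorts and the maximum number of auxiliaries per $(e, v)$ are absolute constants, so the case split is finite and the resulting interpretation is genuinely quantifier-free; once the encoding is pinned down, verifying properties (1) and (2) reduces to the gadget analysis already given.
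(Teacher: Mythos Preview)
The paper does not actually prove this lemma: it is quoted verbatim as ``Lemma~4.1 in \cite{defin}'' and used as a black box, so there is no in-paper argument to compare your proposal against. Your gadget construction is a standard and correct way to establish the result. The scaling step (multiply equation $e$ by $b_e^{-1}$ when $b_e\neq 0$), the slack-variable trick for $b_e=0$, and the variable-copying gadget for coefficients $q\ge 2$ are all sound, and the fact that $p$ is fixed is exactly what makes the case split finite and hence expressible by a quantifier-free interpretation of bounded arity. The one point worth being explicit about in a full write-up is how a \emph{single} global element $w$ (and its defining equation) is manufactured inside an interpretation: you need the equivalence formula $\phi_\equiv$ to collapse an entire definable family of $d$-tuples to a single point, and you should note that this requires the input universe to be non-empty (which is harmless, since an empty $\mathtt{LIN}_p$-structure encodes the empty, trivially solvable system).
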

Thus, $\mathcal{I}(\mathfrak{A})\models\mathrm{slv}_p(xy.\mathtt{A}_1(x,y))$ if, and only if, $A\vec{x}=\vec{b}$ has a solution and hence, from the Interpretation Lemma, there is a FOS$(p)$ formula $\Phi$ such that $\mathfrak{A}\models\Phi$ if, and only if, the system of linear equations encoded by $\mathfrak{A}$ has a solution.

\subsection{Proof of Lemma \ref{4.1}}
Monomial calculus refutations of bounded degree can be understood as the following procedure, where the input is a finite set of multilinear polynomials $P$ from the ring $\mathbb{F}[x_1,\hdots,x_r]$, and the output is $\texttt{REFUTE}$ or $\texttt{NOREFUTE}$. We denote the multilinear monomial $x_{a_1}x_{a_2}\hdots x_{a_r}$ by $X_A$ where $A=\{a_1,a_2\hdots,a_r\}$, so that for a polynomial $f$, $X_Af=x_{a_1}x_{a_2}\hdots x_{a_r}f$. In this form, $X_\emptyset$ denotes $1$.

\begin{enumerate}
	\item[INPUT.] $P\subset \mathbb{F}[x_1,\hdots,x_r]$
	\item[OUTPUT.] \texttt{REFUTE} or \texttt{NOREFUTE}.
	\item Initialize $\mathcal{S}=\{X_Af\mid f\in P, \mathrm{deg}(X_Af)\leq k\}$.
	\item \texttt{while} $\mathrm{span}_\mathbb{F}\mathcal{S}$ has changed since last round or $1\notin \mathrm{span}_\mathbb{F}\mathcal{S}$ \texttt{do lines 3-4}
	\item $\mathcal{M}\leftarrow \{A\subseteq[r]\mid |A|< k, X_A\in\mathrm{span}_\mathbb{F}\mathcal{S}\}.$
	\item $\mathcal{S}\leftarrow\mathcal{S}\cup \{X_B\mid |B|\leq k, \exists A\in \mathcal{M}, A\subseteq B\}.$
	\item \texttt{If} $1\in \mathrm{span}_\mathbb{F}\mathcal{S}$ \texttt{output REFUTE}
	\item \texttt{else output NOREFUTE}.
\end{enumerate}
Put otherwise, the procedure computes the inflationary fixed point of the set of monomials within the $\mathbb{F}$-span of $\mathcal{S}$ and checks whether it contains the constant $1$ or its dimension has increased. This does not require to store in memory the set $\mathrm{span}_\mathbb{F}\mathcal{S}$ (whose size is exponential in the input), but can be done by checking the solvability of linear equations. The number of iterations of the \texttt{while} loop is at most the number of multilinear monomials of degree at most $k$, thus ensuring that the procedure runs, overall, in polynomial time. Crucially, at each iteration of the \texttt{while} loop, the $\mathbb{F}$-span of $\mathcal{S}$ has a \emph{canonical} generating set. As illustrated below, this property is really what allows definability of the above procedure in FPS$(p)$.

Recall that we are assuming that $\mathbb{F}=\mathbb{Z}_p$. By viewing polynomials over $\mathbb{Z}_p$ as vectors in the standard basis given by monomials, we encode $P$ as a structure $\mathfrak{A}$ with universe $V$ over the vocabulary $\mathtt{POLY}_p=(\mathtt{Var}, \mathtt{U},  \mathtt{C}_0,\mathtt{C}_1,\hdots,\mathtt{C}_{p-1})$, where:
\begin{enumerate}
	\item $V=P\cup \{x_i\mid i\in[r]\}\cup\{1\}$.
	\item $\mathtt{Var}$ and $\mathtt{U}$ are unary relational symbol with $\mathfrak{A}\models \mathtt{Var}(v)$ if, and only if $v\in\{x_i\mid i\in[r]\}$, and $\mathfrak{A}\models\mathtt{U}(v)$ if, and only if, $v=1$.
	\item $\mathtt{C}_q$ for $q\in\mathbb{Z}_p$ are $(d+1)$-ary relations, where $d$ is the maximal degree of the polynomials in $P$. These encode $P$ in matrix form; that is, $\mathfrak{A}\models \mathtt{C}_q(u,v_1,\hdots, v_d)$ if, and only if, $u\in P$, each $v_i$ is equal to a variable or $1$, and the coefficient of the monomial $v_1v_2\hdots v_d$ in $u$ is equal to $q$.
\end{enumerate}

Note that there is a formula $\delta_m$ identifying the polynomials in $P$ of degree $m$. Indeed, let $\kappa_{m,t}(v_1,\hdots,v_t)$ for each $t\geq m$ be a formula satisfied by $\mathfrak{A}$ if, and only if, $v_1,\hdots,v_t$ assume exactly $m$ distinct values from $\{x_i\mid i\in[r]\}$, and define
$$
\hat\delta_m(u)=\bigvee_{q\in\mathbb{Z}_p^*}\exists z_1\hdots z_d.\mathtt{C}_q(u,z_1,\hdots,z_d)\land \kappa_{m,d}(z_1,\hdots,z_d)
$$
Then $\delta_m(u)=\hat\delta_m(u)\land\lnot\hat\delta_{m+1}(u)$ is satisfied by $\mathfrak{A}$ if, and only if, $u\in P$ has degree $m$.

Next, we use a $k$-ary interpretation to obtain a structure whose universe is the set of multilinear polynomials of degree at most $k$. Formally, let $\mathtt{POLY}_p^*=(\mathtt{A}_0,\mathtt{A}_1,\hdots,\mathtt{A}_{p-1},\mathtt{U}^*, \mathtt{Mon},\mathtt{Sub})$ be a vocabulary where $\mathtt{Mon}$ and $\mathtt{U}^*$ are unary relational symbols, $\mathtt{Sub}$ is a binary relational symbol, and $\mathtt{A}_q$ are as in the vocabulary $\mathtt{LIN}_p$. One can then define an interpretation $\mathcal{J}$ from $\mathtt{POLY}_p$ into $\mathtt{POLY}^*_p$ such that:
\begin{enumerate}
	\item The universe of $\mathcal{J}(\mathfrak{A})$ are elements $\vec{v}\in V^k$ such that
	\begin{equation}\label{form}
		\mathfrak{A}\models\bigwedge_{i\in[k]}\mathtt{U}(v_i)\lor\bigwedge_{i\in[k]}\mathtt{Mon}(v_i) \lor \bigvee_{m\in[k]}\delta_m(v_1)\land \kappa_{k-m,k}(v_2,\hdots,v_k),
	\end{equation}
	and the equivalence relation $\equiv_\mathcal{J}$ on $V^k$ defined by the equivalence formula of $\mathcal{J}$ is as follows: $\vec{u}\equiv_\mathcal{J}\vec{v}$ if, and only if, the set of entries of $\vec{u}$ is the same as that of $\vec{v}$. Put otherwise, formula (\ref{form}) ensures that the elements of the universe of $\mathcal{J}(\mathfrak{A})$ are tuples $\vec{v}$ where either all $v_i$ are equal to $1$, all $v_i$ are variables, or $v_1$ is a polynomial in $P$ and $v_2,\hdots,v_k$ are either all equal to $1$ or are variables such that $|\{v_i\mid 2\leq i\leq k\}|\leq k-\mathrm{deg}(v_1)$. The relation $\equiv_\mathcal{J}$ partitions the universe into equivalence classes uniquely determined by the set of entries of each tuple. If the entries of $\vec{v}$ are all equal to $1$, we indicate its class by $X_\emptyset$, and similarly, if $v_i=x_{a_i}$ for all $i\in[k]$, we indicate its class by $X_A$ where $A=\{a_i\mid i\in[k]\}$. If $v_1=f$ for some $f\in P$ and $v_2,\hdots,v_k$ are all equal to $1$, we indicate the class of $\vec{v}$ by $f$ and if $v_i=a_i$ for $2\leq i\leq k$, we indicate its class by the pair $(X_A,f)$ where $A=\{a_i\mid 2\leq i\leq k\}$.
	\item $\mathfrak{A}\models\mathtt{U}(\vec{v})$ and $\mathfrak{A}\models\mathtt{Mon}(\vec{v})$  if, and only if, the equivalence class of $\vec{v}$ is $X_\emptyset$ and $X_A$ with $|A|\geq 1$ respectively.
	\item $\mathfrak{A}\models\mathtt{A}_q(\vec{u},\vec{v})$ if, and only if, the equivalence class of $\vec{u}$ and $\vec{v}$ are $X_A$ and $(X_B,f)$ for some $f\in P$ respectively, and the coefficient of $X_A$ in $X_Bf$ equals $q$. 
	\item $\mathfrak{A}\models \mathtt{Sub}(\vec{u},\vec{v})$ if, and only if, the equivalence classes of $\vec{u}$ and $\vec{v}$ are $X_A$ and $X_B$ respectively and $A\subseteq B$.
\end{enumerate}
We leave the details to the reader to check that the above relations on $V^k$ can be defined by first-order formulae.

Thus far, we have shown that the \texttt{INPUT} of the monomial calculus procedure and the initial set $\mathcal{S}$ in line $1$ can be defined in first-order logic. The last thing required for proving Lemma \ref{4.1} is showing the definability of monomials in the set $\mathcal{S}$ \emph{after} each iteration of the \texttt{while} loop, which is where the solvability quatifier and fixed-point operator come into play. In what follows, set $\mathcal{T}=\{X_Af\mid f\in P, A\subseteq [r], \textrm{deg}(X_Af)\leq k\}$.
\begin{proof}[Proof of Lemma \ref{4.1}] Let $\psi(z)$ be some FPS$(p)$ formula over the vocabulary \texttt{POLY}*$_p$. There is an interpretation $\mathcal{K}(t)$ of $\mathtt{POLY}_p^*$ into $\mathtt{LIN}_p$ such that $\mathcal{K}(\mathcal{J}(\mathfrak{A}), t\mapsto X_A)$ encodes the system of linear equations determining whether $X_A$ is in the $\mathbb{Z}_p$-span of the set $$\mathcal{T}_\psi=\mathcal{T}\cup\{X_B\mid(\mathfrak{A},z\mapsto X_B)\models\psi(z)\land\mathtt{Mon}(z)\}.$$
By Lemma \ref{dhk}, there is a FPS$(p)$ formula $\theta_\psi(z)$, depending on $\psi$, such that $(\mathcal{J}(\mathfrak{A}), z\mapsto X_A)\models \theta_\psi(z)$ if, and only if, the monomial $X_A$ is in the $\mathbb{Z}_p$-span of $\mathcal{T}_\psi$.
	
In particular, replacing $\psi$ with some unary relational variable $Z$, then $$(\mathcal{J}(\mathfrak{A}),z\mapsto X_A)\models \exists y.\theta_Z(y)\land\mathtt{Sub}(y,z)$$
holds if, and only if, there is some $B\subseteq A$ such that $X_B$ is in the $\mathbb{Z}_p$-span of $\mathcal{T}_Z$. Whence,
\begin{equation}\label{eq:sol}
(\mathcal{J}(\mathfrak{A}),z'\mapsto X_\emptyset)\models \mathrm{ifp}_{Z,z}(\exists y.\theta_Z(y)\land\mathtt{Sub}(y,z))(z')
\end{equation}
if, and only if, there is an $\MC_k$ refutation of $P$ over $\mathbb{Z}_p$. By applying the Interpretation Lemma to the above formula and the interpretation $\mathcal{J}$, the desired result follows.
\end{proof}
Note that in formula \ref{eq:sol}, the solvability quantifier is included in the formula $\theta_Z(y)$.
\begin{corollary}\label{cory}
For any $k\in \nats$, there is some $k'$ such that for any graph $\Gamma$ on $V$ and $\vec{u},\vec{v}\in V^{k}$, if there is a $\MC_k$ refutation of $\mathrm{Ax}(\Gamma_{\vec{u}\rightarrow \vec{v}})$ over $\mathbb{Z}_p$, then there is some FOS$_{k'}(p)$ formula $\phi(\vec{z})$ distinguishing $(\mathfrak{A}_\Gamma, \vec{z}\mapsto\vec{u})$ from $(\mathfrak{A}_\Gamma,\vec{z}\mapsto\vec{v})$.
\end{corollary}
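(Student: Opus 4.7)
The corollary should follow from Lemma \ref{4.1} by combining a first-order interpretation with the FPS$(p)$-to-FOS$(p)$ embedding discussed in the paper. My first step would be to construct a first-order $[\mathtt{S},\mathtt{POLY}_p]$-interpretation $\mathcal{M}$ with $2k$ parameters $\vec{z}_1,\vec{z}_2$ such that $\mathcal{M}(\mathfrak{A}_\Gamma,\vec{z}_1\cdot\vec{z}_2)$ is the structural encoding of $\mathrm{Ax}(\Gamma_{\vec{z}_1\rightarrow\vec{z}_2})$; this is routine, with a universe built from copies of Cartesian powers of $V$ (indexing variables, monomials, and polynomials), where axioms A1--A4 are definable from $\Gamma$ alone and only A5 uses the parameters. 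Applying the Interpretation Lemma to the FPS$(p)$-sentence $\phi_{d,k}$ of Lemma \ref{4.1} (with $d=2$) would yield an FPS$(p)$ formula $\Psi(\vec{z}_1,\vec{z}_2)$ over $\mathfrak{A}_\Gamma$ with $2k$ free variables such that $\mathfrak{A}_\Gamma\models\Psi(\vec{u},\vec{v})$ iff an $\MC_k$ refutation of $\mathrm{Ax}(\Gamma_{\vec{u}\rightarrow\vec{v}})$ exists.

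Next, since $\Psi$ defines the complement of $\equiv_{\MC_k}^p$ on $V^k$ and the hypothesis places $\vec{u}$ and $\vec{v}$ in distinct classes, I would extract from $\Psi$ a distinguishing formula $\phi^*(\vec{z})$ in only $k$ free variables by building a canonical FPS$(p)$-definable labelling of $V^k$ that refines $\equiv_{\MC_k}^p$. The plan is a Weisfeiler--Leman-style inflationary fixed point: start with $\alpha_{k,\Gamma}$, and at each round refine the colour of $\vec{z}$ by appending, for each currently visible colour $c$, an invariant (e.g.\ the cardinality modulo $p$, or more generally a solvability-based invariant) of $\{\vec{y}\in\mathrm{colour}\ c : \neg\Psi(\vec{z},\vec{y})\}$. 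The stable colouring refines $\equiv_{\MC_k}^p$, so distinct classes receive distinct labels and hence $\phi^*(\vec{u})\neq\phi^*(\vec{v})$.

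Finally, by the paper's remark that any FPS$(p)$ formula with $m$ variables has the same distinguishing power as an FOS$(p)$ formula with $2m$ variables on finite structures, $\phi^*$ translates to the desired FOS$_{k'}(p)$ formula $\phi(\vec{z})$ with $k'=O(k)$ distinguishing $(\mathfrak{A}_\Gamma,\vec{z}\mapsto\vec{u})$ from $(\mathfrak{A}_\Gamma,\vec{z}\mapsto\vec{v})$.

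The main obstacle will be the canonization step: producing, from the $2k$-free-variable relation $\Psi$, a $k$-free-variable invariant that genuinely separates distinct $\equiv_{\MC_k}^p$-classes. This is analogous to the canonization underlying the counting-logic characterization of $\MC_k$-equivalence in characteristic zero (Theorem $4.4$ of \cite{berkh}); the refinement here must additionally exploit solvability quantifiers to reflect the linear-algebraic character of MC derivations in characteristic $p$.
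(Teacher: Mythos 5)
Your proposal follows the same overall scaffolding as the paper: interpret $\mathfrak{A}_\Gamma$ together with the parameter tuples into the structural encoding of $\mathrm{Ax}(\Gamma_{\vec{z}_1\rightarrow\vec{z}_2})$, apply Lemma~\ref{4.1} via the Interpretation Lemma to obtain an FPS$(p)$ formula $\Psi(\vec{z}_1,\vec{z}_2)$ with $2k$ free variables defining $\not\equiv_{\MC_k}^p$, then invoke the FPS$(p)$-to-FOS$(p)$ embedding. So the first and last steps match the paper's (terse) proof. The substance lies in the middle step you flag as ``the main obstacle,'' and there your argument is not carried through.

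You propose a Weisfeiler--Leman-style inflationary fixed point that, at each round, appends to the colour of $\vec{z}$ some invariants (cardinality modulo $p$, or a solvability invariant) of the sets $\{\vec{y}\ \text{of colour}\ c : \lnot\Psi(\vec{z},\vec{y})\}$, and then assert without proof that ``the stable colouring refines $\equiv_{\MC_k}^p$.'' That assertion is exactly the missing piece. A WL-style closure of a relation under such aggregate invariants need not separate tuples in distinct classes of the relation: if two classes $[\vec{u}]\neq[\vec{v}]$ meet every currently visible colour in equinumerous (mod $p$) or ``solvability-indistinguishable'' ways, the refinement stabilizes without separating them. In general, $\mathcal{L}$-definability of an equivalence relation as a $2k$-ary relation is strictly weaker than $\mathcal{L}$-definability of a labelled partition into its classes with $k$ free variables (for instance, equality on a structureless two-element universe is definable, but its singleton classes are not separated by any formula). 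What saves $\equiv_{\MC_k}^p$ is that it is coarsened by the orbit partition and arises from a canonical refinement process, but your proof does not exploit either fact to justify the claim. To close the gap one needs either (a) a direct game-theoretic/combinatorial argument, in the spirit of Theorem~4.4 of \cite{berkh}, that $\equiv_{\MC_k}^p$ coincides (up to a constant blowup in variables) with FOS$(p)$-indistinguishability on $k$-tuples, or (b) an argument that the MC procedure defines in FPS$(p)$ a canonical $k$-ary labelled partition refining $\equiv_{\MC_k}^p$ (e.g.\ by showing the set $\mathcal{M}$ of derived monomials induces such a labelling), rather than merely the $2k$-ary relation. The paper's own one-line proof leaves the same step implicit; your contribution is to have identified the right gap, but not to have filled it.
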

\begin{proof}
By Lemma \ref{4.1} the equivalence classes of $\equiv_{\MC_k}^p$ are definable in FPS$(p)$ and hence, by the embedding of FPS$(p)$ in infinitary FOS$(p)$, the statement follows.
\end{proof}
Theorem \ref{thm:main1} readily follows.
\begin{proof}[Proof of Theorem \ref{thm:main1}]
Statement $(1)$ follows directly from the relation between the distinguishing power of counting logics, polynomial calculus and the invertible map tests highlighted in Section $3$.

Let $\vec{u},\vec{v}\in V^k$ and suppose $\mathrm{Ax}(\Gamma_{\vec{v}\rightarrow \vec{u}})$ has a $\MC_k$ refutation over $\mathbb{Z}_p$. By Corollary \ref{cory} there is a FOS$_k(p)$ formula $\phi(\vec{z})$ distinguishing the structures $(\mathfrak{A}_\Gamma,\vec{z}\mapsto\vec{u})$ and $(\mathfrak{A}_\Gamma,\vec{z}\mapsto\vec{v})$. Then, by Theorem \ref{fund}, $\vec{u},\vec{v}$ are in different equivalence classes of $\pr_k[\alpha_{k',\Gamma}]^{\Sol_{k'}^p}$. Since $[\alpha_{k',\Gamma}]^{\IM_{k'}^p}\succeq[\alpha_{k',\Gamma}]^{\Sol_{k'}^p}$ (by Lemma \ref{sec3}), statement $(2)$ is implied.
\end{proof}

\section{Nullstellensatz refutations and $\Sol_k^p$-stability}\label{sec:nullstellensatz}

The focus of this section is the proof of the following statement.
\begin{lemma}\label{secondone}
	Let $\Gamma$ be a graph on $V$ and let $\gamma\in\mathcal{P}(V^k)$. If for all $\vec{u},\vec{v}\in V^k$, $\gamma(\vec{u})=\gamma(\vec{v})$ if, and only if, $\vec{u}\equiv_{\NC_k}^p\vec{v}$, then $\gamma$ is $\Sol_k^p$-stable.
\end{lemma}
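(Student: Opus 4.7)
The plan is, for each $r$ with $2r < k$ and each $\vec{i} \in [k]^{(2r)}$, to exhibit for any $\vec{u}, \vec{v} \in V^k$ with $\gamma(\vec{u}) = \gamma(\vec{v})$ an intertwiner $M \in \mathfrak{P}_{V^r}(\mathbb{Z}_p)$ satisfying $\chi_\sigma M = M \xi_\sigma$ for every $\sigma \in \im(\gamma)$, where $(\chi_\sigma), (\xi_\sigma)$ are the $(\vec{i}, \vec{v})$- and $(\vec{i}, \vec{u})$-character vectors of $\gamma$. These verifications together establish $\Sol_{k,r}^p$-stability for every admissible $r$, and hence $\Sol_k^p$-stability.

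The translation between NC refutations and iso-monomial derivations recalled in Section~\ref{calculi} turns the hypothesis $\gamma(\vec{u}) = \gamma(\vec{v})$ (equivalently $\vec{u} \equiv_{\NC_k}^p \vec{v}$) into the statement that $\prod_{i=1}^k x_{v_i u_i}$ does not lie in the $\mathbb{Z}_p$-subspace $D \subseteq \mathbb{Z}_p[\{x_{uv}\}]_{\leq k}$ generated by $\{m \cdot f : f \in \mathrm{Ax}(\Gamma),\ m\text{ a monomial},\ \deg(mf) \leq k\}$. Finite-dimensional linear algebra then gives a $\mathbb{Z}_p$-linear functional $L : \mathbb{Z}_p[\{x_{uv}\}]_{\leq k} \to \mathbb{Z}_p$ with $L|_D = 0$ and $L(\prod_i x_{v_i u_i}) = 1$. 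I plan to use $L$ to define $M$ explicitly via evaluations at carefully chosen degree-$k$ polynomials and then verify the stochastic and intertwining conditions via polynomial identities modulo $D$.

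Letting $I_1, I_2 \subseteq [k]$ denote the two halves of $\vec{i}$, a natural candidate for the entries of $M$ is
\begin{equation*}
M_{\vec{p}\vec{q}} = L(Q_{\vec{p}\vec{q}}), \qquad Q_{\vec{p}\vec{q}} = \prod_{j \in [k] \setminus (I_1 \cup I_2)} x_{v_j u_j} \cdot \prod_{s=1}^{r} x_{p_s\, u_{i_{2,s}}} \cdot \prod_{s=1}^{r} x_{v_{i_{1,s}}\, q_s},
\end{equation*}
a degree-$k$ iso-monomial encoding the substituted pair $(\vec{v}\langle\vec{i}_2, \vec{p}\rangle,\ \vec{u}\langle\vec{i}_1, \vec{q}\rangle)$. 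The fact that $M$, possibly after rescaling by its common row/column sum, belongs to $\mathfrak{P}_{V^r}(\mathbb{Z}_p)$ will be obtained in two steps: first, any matrix satisfying the intertwining with $(\chi_\sigma), (\xi_\sigma)$ automatically satisfies $JM = MJ$ (using $\sum_\sigma \chi_\sigma = \sum_\sigma \xi_\sigma = \mathbb{J}_{V^r}$), forcing all row sums and column sums to coincide with a common scalar; and second, the row-sum expansion $\sum_{\vec{q}} Q_{\vec{p}\vec{q}}$ reduces modulo $D$ via axiom A2 and the telescoping identity $\prod_s A_s - 1 = \sum_s (A_s - 1) \prod_{s' < s} A_{s'}$ (with $A_s := \sum_a x_{v_{i_{1,s}}\, a}$), which together with A1 and A3 yields a common nonzero value normalizable to $1$. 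The intertwining identity $\chi_\sigma M = M \xi_\sigma$ reduces to showing that
\begin{equation*}
\sum_{\vec{y}} [\gamma(\vec{v}\langle\vec{i}, \vec{x}\cdot\vec{y}\rangle) = \sigma] \cdot Q_{\vec{y}\vec{z}} - \sum_{\vec{y}} Q_{\vec{x}\vec{y}} \cdot [\gamma(\vec{u}\langle\vec{i}, \vec{y}\cdot\vec{z}\rangle) = \sigma] \in D,
\end{equation*}
which is proved by pairing iso-monomials whose $\gamma$-classes (equivalently NC-equivalence classes) agree and invoking A3 to eliminate terms corresponding to assignments that fail to be local isomorphisms.

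The principal obstacle is tight degree accounting. Since $Q_{\vec{p}\vec{q}}$ already saturates the degree budget of $k$, every manipulation modulo $D$ must avoid any unplanned multiplication. The hypothesis $2r < k$ provides exactly the necessary slack: the $k - 2r \geq 1$ "native" positions $j \notin I_1 \cup I_2$ allow inserting axiom multiples of the form $m \cdot (\sum_a x_{\ast a} - 1)$, each of total degree at most $k$, into the mod-$D$ reductions. The technical heart of the proof is the verification of the intertwining identity: it requires showing that the alternative iso-monomials $Q_{\vec{p}\vec{q}}$ assemble, modulo $D$, into precisely the linear dependencies demanded by both the intertwining and the stochastic constraints, with every intermediate NC-derivation staying within degree $k$. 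Once these identities modulo $D$ are verified, the property $L|_D = 0$ translates them directly into the desired matrix equations for $M$, completing the proof.
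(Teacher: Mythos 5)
Your overall strategy—replacing the paper's Fredholm-alternative argument with a dual linear functional $L$ vanishing on the $\mathrm{Ax}(\Gamma)$-span $D$—is a legitimate reformulation, and in fact the paper's proof is exactly the primal version of this duality: it shows the polynomials encoding the row/column-sum constraints and the intertwining equations (the paper's (5.1) and (5.2)) are $\NC_k$-derivable from $\mathrm{Ax}(\Gamma_{\vec{v}\rightarrow\vec{u}})$, and then observes that non-refutability plus linear algebra forces the corresponding system (with degree-$r$ monomials $X_{\vec{p}\vec{q}}$ treated as variables) to have a common root.

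However, your specific candidate $Q_{\vec{p}\vec{q}}$ is not the right monomial, and the proof fails there. Your $Q_{\vec{p}\vec{q}}$ encodes the partial map $p_s\mapsto u_{i_{2,s}}$ and $v_{i_{1,s}}\mapsto q_s$; it does \emph{not} encode $\vec{p}\mapsto\vec{q}$. A quick sanity check exposes the problem: take $L$ to be the evaluation functional of an actual isomorphism $\phi$ with $\phi(\vec{v})=\vec{u}$ (a perfectly admissible $L$ when it exists). Then $M_{\vec{p}\vec{q}} = L(Q_{\vec{p}\vec{q}})$ is nonzero only at the single entry $\vec{p}^\ast=(v_{i_{2,s}})_s$, $\vec{q}^\ast=(u_{i_{1,s}})_s$, i.e.\ $M$ is a rank-one impulse matrix. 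That cannot be rescaled into $\mathfrak{P}_{V^r}(\mathbb{Z}_p)$, and your claim that ``the row-sum expansion $\ldots$ yields a common nonzero value normalizable to $1$'' is therefore false with this $Q$. The intended monomial should encode $\vec{p}\mapsto\vec{q}$ alongside the fixed part, e.g.\ $Q'_{\vec{p}\vec{q}}=\prod_{j\notin I_1\cup I_2}x_{v_ju_j}\cdot\prod_s x_{p_sq_s}$ (degree $k-r$), or—cleaner still and matching the paper's grain—work with the functional annihilating the $\mathrm{Ax}(\Gamma_{\vec{v}\rightarrow\vec{u}})$-span and set $M_{\vec{p}\vec{q}}=L(X_{\vec{p}\vec{q}})$ directly, at which point what you need is precisely that (5.1) and (5.2) lie in that span. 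That membership is the technical core of the paper's proof (Lemmas~\ref{ll0}, \ref{ll1}, \ref{thirdone}), and your proposal only gestures at it; once the correct $Q$ is in place you would still have to supply the analogues of those derivations, at which point the argument becomes essentially the paper's, merely read through the dual functional rather than through the Fredholm alternative.
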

Theorems \ref{thm:main2} and \ref{thm:main3} are its direct consequences.

\subsection{Proof of Lemma \ref{secondone}}
In what follows, $\gamma\in\mathcal{P}(V^k)$ satisfies the assumptions of Lemma \ref{secondone}, $\vec{u},\vec{v}\in V^k$ and $\vec{\chi}$ and $\vec{\xi}$ are the $(\vec{i},\vec{u})$ and $(\vec{i},\vec{v})$-character vectors of $\gamma$ respectively, where we may assume without loss of generality that $\vec{i}=(k,k-1,\hdots,k-2r+1)\in[k]^{(2r)}$, for some $r$ with $2r<k$. For $\vec{w},\vec{z}\in V^l$, denote by $X_{\vec{w}\vec{z}}$ the monomial $x_{w_1z_1}x_{w_2z_2}\hdots x_{w_lz_l}$ (which need not be multilinear).

Recall that $\gamma$ is $\Sol_{k,r}^p$-stable, if, and only if, for every $\vec{u},\vec{v}$ there is some matrix $T\in\mathfrak{P}_{V^r}(\mathbb{Z}_p)$ such that $\chi_\sigma T=T\xi_\sigma$, for all $\sigma\in\im (\gamma)$.   That is, the following system of equations is solvable in the variables $T_{\vec{w}\vec{z}}$:
$$
\sum_{\vec{a}\in V^r}(\chi_{\sigma})_{\vec{w}\vec{a}}T_{\vec{a}\vec{z}}-\sum_{\vec{a}\in V^r} T_{\vec{w}\vec{a}}(\xi_\sigma)_{\vec{a}\vec{z}}=0 \;\; \textrm{for  $\sigma\in\im(\gamma)$, $\vec{w},\vec{z}\in V^r$}
$$
$$
\sum_{\vec{a}\in V^r}T_{\vec{w}\vec{a}}-1=0\;\;\textrm{and}\;\;\sum_{\vec{a}\in V^r}T_{\vec{a}\vec{z}}-1=0\;\;\textrm{for $\vec{w},\vec{z}\in V^r$},
$$
where $(\chi_\sigma)_{\vec{w}\vec{a}}$ is equal to $1$ if $\gamma(\vec{u}\langle \vec{i},\vec{w}\cdot\vec{a}\rangle)=\sigma$ and $0$ otherwise (similarly for $\xi_\sigma$).
 
We show that there is a $\NC_k$ derivation from $\mathrm{Ax}(\Gamma_{\vec{v}\rightarrow \vec{u}})$ over $\mathbb{Z}_p$ of the following multilinear polynomials (Lemma \ref{thirdone}):
\begin{equation}\label{der1}
	\sum_{\vec{a}\mid \gamma(\vec{u}\langle\vec{i},\vec{w}\cdot\vec{a}\rangle)=\sigma}X_{\vec{a}\vec{z}}-\sum_{\vec{a}\mid \gamma(\vec{v}\langle\vec{i},\vec{a}\cdot\vec{z}\rangle)=\sigma}X_{\vec{w}\vec{a}}\;\; \textrm{for  $\sigma\in\im(\gamma)$, $\vec{w},\vec{z}\in V^r$}
\end{equation} 
\begin{equation}\label{der2}
	\sum_{\vec{a}\in V^r}X_{\vec{w}\vec{a}}-1\;\;\textrm{and}\;\;\sum_{\vec{a}\in V^r}X_{\vec{a}\vec{z}}-1\;\;\textrm{for $\vec{w},\vec{z}\in V^r$}.
\end{equation}
We may view each monomial (apart from the constant term) as a distinct linear variable, so that all of the above are linear polynomials. Since $\gamma(\vec{u})=\gamma(\vec{v})$, there is no $\NC_k$ refutation of $\mathrm{Ax}(\Gamma_{\vec{v}\rightarrow \vec{u}})$ and hence, no linear combination of \ref{der1} and \ref{der2} gives the constant polynomial $1$. It follows that if viewed as linear polynomials, \ref{der1} and \ref{der2} have a common root, thus showing Lemma \ref{secondone}.

\begin{lemma}\label{ll0}
	If $\gamma(\vec{u}\langle \vec{i},\vec{w}\cdot\vec{z}\rangle)\neq\gamma(\vec{v}\langle \vec{i},\vec{w}'\cdot\vec{z'}\rangle)$, then there is a $\NC_k$ derivation over $\mathbb{Z}_p$ from $\mathrm{Ax}(\Gamma_{\vec{v}\rightarrow \vec{u}})$ of $X_{\vec{w}\vec{w}'}X_{\vec{z}\vec{z}'}$.
\end{lemma}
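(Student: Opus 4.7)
The plan is to exploit the standard equivalence between $\NC_k$-refutations and monomial derivations: a degree-$k$ $\NC_k$-refutation of $\mathrm{Ax}(\Gamma_{\vec{a}\to\vec{b}})$ exists if and only if, for some $J\subseteq[k]$, the monomial $\prod_{j\in J}x_{b_ja_j}$ admits a degree-$k$ $\NC_k$-derivation from $\mathrm{Ax}(\Gamma)$. Feeding the hypothesis into this equivalence, applied to the modified tuples $\vec{u}\langle\vec{i},\vec{w}\cdot\vec{z}\rangle$ and $\vec{v}\langle\vec{i},\vec{w}'\cdot\vec{z}'\rangle$, produces a degree-$k$ $\NC_k$-derivation from $\mathrm{Ax}(\Gamma)$ of a monomial
\[
Q\;=\;\prod_{j\in J\setminus\vec{i}}x_{u_jv_j}\cdot\prod_{s\in S}x_{(\vec{w}\cdot\vec{z})_s(\vec{w}'\cdot\vec{z}')_s},
\]
where $S=\{s\in[2r]\mid i_s\in J\}\subseteq[2r]$. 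The strategy is then to strip off the ``$u_jv_j$'' factors using the A5 axioms of $\mathrm{Ax}(\Gamma_{\vec{v}\to\vec{u}})$ and to complete the second product to the full range $[2r]$.

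For the stripping step, I apply the telescoping identity
\[
\prod_{j\in J\setminus\vec{i}}x_{u_jv_j}-1\;=\;\sum_{j\in J\setminus\vec{i}}\biggl(\prod_{\substack{l\in J\setminus\vec{i}\\l<j}}x_{u_lv_l}\biggr)(x_{u_jv_j}-1),
\]
which is an $\NC_k$-derivation of degree $|J\setminus\vec{i}|$ from the A5 axioms $x_{u_jv_j}-1$ of $\mathrm{Ax}(\Gamma_{\vec{v}\to\vec{u}})$. Multiplying by the partial product $M_S=\prod_{s\in S}x_{(\vec{w}\cdot\vec{z})_s(\vec{w}'\cdot\vec{z}')_s}$ gives a degree-$|J|$ derivation of $Q-M_S$ from A5; subtracting it from the given derivation of $Q$ produces a degree-$k$ $\NC_k$-derivation of $M_S$ from $\mathrm{Ax}(\Gamma_{\vec{v}\to\vec{u}})$. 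In the favourable case $S=[2r]$ we are done, since then $M_S=X_{\vec{w}\vec{w}'}X_{\vec{z}\vec{z}'}$.

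The main obstacle is the case $S\subsetneq[2r]$, because directly multiplying the derivation of $M_S$ by the missing factors would raise the degree to $k+2r-|S|$, exceeding $k$. The plan is to argue that $S=[2r]$ may be assumed without loss of generality: the standing assumption $\gamma(\vec{u})=\gamma(\vec{v})$ from the setup of Lemma~\ref{secondone} already precludes $S$ being empty, since otherwise $Q$ itself would refute $\mathrm{Ax}(\Gamma_{\vec{v}\to\vec{u}})$. Pushing this further, for any $s_0\notin S$ the refutation underlying $Q$ does not use the A5 axiom at position $i_{s_0}$, and so actually refutes a weaker axiom set; combining this observation with the sum axioms A1 and A2, together with the multilinearization afforded by A4, the missing factor should be insertable without a net degree increase. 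Executing this insertion while preserving the degree bound is the principal technical difficulty; if it cannot be carried out directly, a different choice of initial refutation guaranteeing $\vec{i}\subseteq J$ from the outset would complete the argument.
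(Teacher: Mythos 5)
Your overall strategy — telescope using the A5 axioms and then subtract a monomial whose derivability comes from the hypothesis — is the same as the paper's, but you part ways at a crucial point. You invoke the ``for some $J\subseteq[k]$'' form of the monomial lemma, so your starting monomial $Q$ may be a proper sub-monomial, and you then need $\vec{i}\subseteq J$ (equivalently $S=[2r]$) to finish. The paper avoids this entirely: it asserts that, from the hypothesis $\gamma(\vec{u}\langle\vec{i},\vec{w}\cdot\vec{z}\rangle)\neq\gamma(\vec{v}\langle\vec{i},\vec{w}'\cdot\vec{z}'\rangle)$, the \emph{full} degree-$k$ monomial $Y X_{\vec{u}'\vec{v}'}$ (with $Y=X_{\vec{w}\vec{w}'}X_{\vec{z}\vec{z}'}$ and $X_{\vec{u}'\vec{v}'}=\prod_{j\leq k-2r}x_{u_jv_j}$) is $\NC_k$-derivable from $\mathrm{Ax}(\Gamma_{\vec{v}\to\vec{u}})$; it then telescopes $Y(X_{\vec{u}'\vec{v}'}-1)$ over the entire range $[k-2r]$ — each summand $Y(x_{u_1v_1}\cdots x_{u_{j-1}v_{j-1}})(x_{u_jv_j}-1)$ is a monomial times an A5 axiom of degree $\leq k$ — and subtracts. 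Once you have the full monomial, the case analysis on $S$ never arises.

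The gap you flag is therefore real but in the wrong place. Your proposed patch does not work as sketched: the observation that $S\neq\emptyset$ says nothing about $S=[2r]$, and ``inserting'' missing factors via A1, A2, A4 will in general push the degree above $k$ exactly as you fear. What you actually need to establish (and what the paper implicitly uses) is that the degree-$k$ $\NC$-refutation of the modified axiom set yields a degree-$k$ $\NC$-derivation of the full monomial $\prod_{j\in[k]}x_{a_jb_j}$ from $\mathrm{Ax}(\Gamma)$, not merely of $\prod_{j\in J}x_{a_jb_j}$ for some $J$. If you grant that strengthening, your telescope step (applied to all of $[k-2r]$ rather than $J\setminus\vec{i}$) reproduces the paper's proof; if you do not, neither the ``weaker axiom set'' nor the ``re-choose the refutation so that $\vec{i}\subseteq J$'' idea is justified, and the argument is incomplete.
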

\begin{proof}
	Set $Y=X_{\vec{w}\vec{w}'}X_{\vec{z}\vec{z}'}$ and let $X_{\vec{u}'\vec{v}'}$ be the degree $k-2r$ monomial where $u'_j=u_j$ and $v'_j=v_j$ for $j\in[k-2r]$. Then, there is a $\NC_k$ derivation of $Y(X_{\vec{u}'\vec{v}'}-1)$, for indeed
	$$
	Y(X_{\vec{u}'\vec{v}'}-1)=Y(x_{u_1v_1}-1)+Yx_{u_1v_1}(x_{u_2v_2}-1)+\hdots +Y(x_{u_1v_1}\hdots x_{u_{k-2r-1}v_{k-2r-1}})(x_{u_{k-2r}v_{k-2r}}-1).
	$$
	By assumption, $\gamma(\vec{u}\langle \vec{i},\vec{w}\cdot\vec{z}\rangle)\neq\gamma(\vec{v}\langle \vec{i},\vec{w}'\cdot\vec{z'}\rangle)$ and hence, there is an $\NC_k$ derivation of $YX_{\vec{u}'\vec{v}'}$. Subtracting the latter from $Y(X_{\vec{u}'\vec{v}'}-1)$ yields the desired statement.
\end{proof}
\begin{lemma}\label{ll1}
	For any $\vec{w},\vec{z}\in V^r$ and $\vec{s}\in V^t$ there is a $\NC_{t+r}$ derivation over $\mathbb{Z}_p$ from $\mathrm{Ax}(\Gamma_{\vec{v}\rightarrow \vec{u}})$ of
	$$
	X_{\vec{w}\vec{z}}\big(\sum_{\vec{a}\in V^t}X_{\vec{s}\vec{a}}-1\big)\;\;\textrm{and}\;\;X_{\vec{w}\vec{z}}\big(\sum_{\vec{a}\in V^t}X_{\vec{a}\vec{s}}-1\big).
	$$
\end{lemma}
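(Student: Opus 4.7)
The plan is to proceed by induction on $t$, treating the two polynomials symmetrically: the first via axiom A2 (the row-sum axioms) and the second via A1 (the column-sum axioms). I describe the argument for $X_{\vec{w}\vec{z}}\big(\sum_{\vec{a}\in V^t}X_{\vec{s}\vec{a}}-1\big)$; the other case is the same with the row- and column-indices swapped.

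For the base case $t=1$, the target polynomial $X_{\vec{w}\vec{z}}\big(\sum_{a\in V}x_{s_1 a}-1\big)$ is exactly the A2 axiom for $v=s_1$ multiplied by the monomial $X_{\vec{w}\vec{z}}$, which is a one-step $\NC$ derivation of degree $r+1$.

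For the inductive step, I would use the telescoping identity
$$
X_{\vec{w}\vec{z}}\Big(\sum_{\vec{a}\in V^t}X_{\vec{s}\vec{a}}-1\Big) \;=\; X_{\vec{w}\vec{z}}\Big(\sum_{a_1\in V}x_{s_1 a_1}-1\Big) \;+\; \sum_{a_1\in V} X_{\vec{w}\cdot s_1,\,\vec{z}\cdot a_1}\Big(\sum_{\vec{a}'\in V^{t-1}}X_{(s_2,\ldots,s_t)\vec{a}'}-1\Big),
$$
verified by expanding both sides and observing that the terms $X_{\vec{w}\vec{z}}\sum_{a_1}x_{s_1 a_1}$ cancel. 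The first summand on the right is a one-step $\NC$ derivation of degree $r+1$ as in the base case; each term in the second sum is exactly of the form handled by the inductive hypothesis, with parameters $r+1$ (for the pair $\vec{w}\cdot s_1,\,\vec{z}\cdot a_1$) and $t-1$ (for $(s_2,\ldots,s_t)$), and so admits an $\NC$ derivation of degree $(r+1)+(t-1)=r+t$. Since $\NC$ derivations are $\mathbb{Z}_p$-linear combinations of products $(\text{axiom})\cdot(\text{monomial})$, the $\mathbb{Z}_p$-linear combination dictated by the identity assembles these pieces into the desired $\NC_{r+t}$ derivation.

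I do not anticipate any substantive obstacle here: the whole argument is a routine induction once the telescoping identity is noticed. The only point requiring a little care is the degree bookkeeping, which happens to balance exactly---each recursive call lengthens $(\vec{w},\vec{z})$ by one while shortening $\vec{s}$ by one, so $r+t$ is invariant under the recursion and every polynomial appearing in the derivation has degree at most $r+t$.
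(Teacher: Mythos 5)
Your proof is correct and takes essentially the same inductive route as the paper: induction on $t$, with the $t=1$ base case being a monomial times an axiom. One small presentational difference: you peel off the leading coordinate of $\vec{s}$ and apply the inductive hypothesis with parameters $(r+1,\,t-1)$, whereas the paper appends a trailing coordinate $s'$ and applies the hypothesis at $(r,\,t)$ before multiplying through by variables; your telescoping identity is exact as written, whereas the paper's displayed equality in the inductive step silently drops the base-case term $X_{\vec{w}\vec{z}}\bigl(\sum_{a'\in V}x_{s'a'}-1\bigr)$ (a harmless slip, easily repaired by one more application of the base case).
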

\begin{proof}
	We proceed by induction on $t$. For $t=1$, $X_{\vec{w}\vec{z}}\big(\sum_{a\in V}x_{sa}-1\big)$ is the product of a monomial and an axiom, so has a $\NC_{r+1}$ derivation.
	
	Assume $X_{\vec{w}\vec{z}}\big(\sum_{\vec{a}\in V^t}X_{\vec{s}\vec{a}}-1\big)$ has a $\NC_{t+r}$ derivation. It can be easily verified that if a polynomial $f$ has a $\NC_r$ derivation from some set of axioms, then $xf$ has a $\NC_{r+1}$ derivation for any variable $x$. Thus, $x_{s'a'}X_{\vec{w}\vec{z}}\big(\sum_{\vec{a}\in V^t}X_{\vec{s}\vec{a}}-1\big)$ has a $\NC_{t+r+1}$ derivation. Finally
	$$
	\sum_{a'\in V}x_{s'a'}X_{\vec{w}\vec{z}}\big(\sum_{\vec{a}\in V^t}X_{\vec{s}\vec{a}}-1\big)=X_{\vec{w}\vec{z}}\big(\sum_{\vec{a}\in V^t, a'\in V}X_{(\vec{s}\cdot s')(\vec{a}\cdot a')}-1\big)=X_{\vec{w}\vec{z}}\big(\sum_{\vec{a}\in V^{t+1}}X_{(\vec{s}\cdot s')\vec{a}}-1\big)
	$$
	as required.
\end{proof}

\begin{lemma}\label{thirdone}
	There is a $\NC_k$ derivation over $\mathbb{Z}_p$ from $\mathrm{Ax}(\Gamma_{\vec{v}\rightarrow \vec{u}})$ of the polynomials in formulae (\ref{der1}) and (\ref{der2}).
\end{lemma}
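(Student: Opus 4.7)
The plan is to derive formulas (\ref{der2}) and (\ref{der1}) in sequence, leveraging axioms A1, A2 together with Lemmas \ref{ll0} and \ref{ll1}.

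For formula (\ref{der2}), I would argue by induction on $r$: the base case $\sum_{a\in V}x_{w_1 a}-1$ is exactly axiom A2, and to pass from $t$ to $t+1$ I multiply the axiom $\sum_a x_{w_{t+1}a}-1$ by each monomial $\prod_{i=1}^{t} x_{w_i a_i}$ (using NC's multiplication rule), sum over $a_1,\dots,a_t\in V$, and combine with the inductive hypothesis via the linearity rule. This yields $\sum_{\vec{a}\in V^r}X_{\vec{w}\vec{a}}-1$ in degree $r\le\lfloor k/2\rfloor<k$, and the column-sum variant is symmetric. This is essentially the argument of Lemma \ref{ll1} specialized to absent $X_{\vec{w}\vec{z}}$.

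For formula (\ref{der1}), fix $\sigma\in\im(\gamma)$ and compute the double sum $\sum_{\vec{a},\vec{b}\in V^r}X_{\vec{a}\vec{z}}X_{\vec{w}\vec{b}}$ in two different ways. First, multiply the derived polynomial $\sum_{\vec{b}\in V^r}X_{\vec{w}\vec{b}}-1$ by $X_{\vec{a}\vec{z}}$ for each $\vec{a}$ with $\gamma(\vec{u}\langle\vec{i},\vec{w}\cdot\vec{a}\rangle)=\sigma$ and sum over such $\vec{a}$, obtaining a derivation of
$$P_1=\sum_{\substack{\vec{a}:\gamma(\vec{u}\langle\vec{i},\vec{w}\cdot\vec{a}\rangle)=\sigma\\\vec{b}\in V^r}}X_{\vec{a}\vec{z}}X_{\vec{w}\vec{b}}-\sum_{\vec{a}:\gamma(\vec{u}\langle\vec{i},\vec{w}\cdot\vec{a}\rangle)=\sigma}X_{\vec{a}\vec{z}}.$$
Symmetrically, multiplying $\sum_{\vec{a}\in V^r}X_{\vec{a}\vec{z}}-1$ by $X_{\vec{w}\vec{b}}$ for each $\vec{b}$ with $\gamma(\vec{v}\langle\vec{i},\vec{b}\cdot\vec{z}\rangle)=\sigma$ and summing gives
$$P_2=\sum_{\substack{\vec{a}\in V^r\\\vec{b}:\gamma(\vec{v}\langle\vec{i},\vec{b}\cdot\vec{z}\rangle)=\sigma}}X_{\vec{w}\vec{b}}X_{\vec{a}\vec{z}}-\sum_{\vec{b}:\gamma(\vec{v}\langle\vec{i},\vec{b}\cdot\vec{z}\rangle)=\sigma}X_{\vec{w}\vec{b}}.$$
Now Lemma \ref{ll0} applies: whenever $\gamma(\vec{u}\langle\vec{i},\vec{w}\cdot\vec{a}\rangle)\neq\gamma(\vec{v}\langle\vec{i},\vec{b}\cdot\vec{z}\rangle)$ the mixed monomial $X_{\vec{w}\vec{b}}X_{\vec{a}\vec{z}}$ is itself derivable. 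Consequently, modulo derivable polynomials, both double sums collapse to the same polynomial $\sum_{(\vec{a},\vec{b})\in S}X_{\vec{w}\vec{b}}X_{\vec{a}\vec{z}}$, where $S$ is the set of pairs with both $\gamma$-values equal to $\sigma$. Hence $P_1-P_2$ yields a derivation of
$$\sum_{\vec{b}:\gamma(\vec{v}\langle\vec{i},\vec{b}\cdot\vec{z}\rangle)=\sigma}X_{\vec{w}\vec{b}}-\sum_{\vec{a}:\gamma(\vec{u}\langle\vec{i},\vec{w}\cdot\vec{a}\rangle)=\sigma}X_{\vec{a}\vec{z}},$$
which is the negative of formula (\ref{der1}). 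All intermediate polynomials have degree at most $2r<k$, so everything stays inside $\NC_k$.

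The one point requiring care, and the main bookkeeping obstacle, is that NC's multiplication rule formally applies only to an axiom, not to arbitrary derived polynomials: the steps above multiply already-derived sums by $X_{\vec{a}\vec{z}}$ or $X_{\vec{w}\vec{b}}$. This is legitimate because every NC-derivation is a $\mathbb{Z}_p$-linear combination of (axiom $\times$ monomial) terms, and multiplying such a combination by a further monomial just enlarges each monomial factor, preserving the NC form (with the $x^2-x$ axioms used to keep things multilinear). Once this closure is in place the construction goes through within the stated degree bound.
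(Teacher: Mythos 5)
Your argument is correct and follows essentially the same route as the paper: derive (\ref{der2}) by the same induction underlying Lemma~\ref{ll1}, multiply by the relevant monomials $X_{\vec{a}\vec{z}}$ (i.e., apply Lemma~\ref{ll1}), use Lemma~\ref{ll0} to discard the mixed monomials whose $\gamma$-types disagree, and subtract the two resulting sums to obtain (\ref{der1}). The only difference is cosmetic — you sum over $\vec{a}\in\mathcal{N}(\vec{u},\vec{w})$ first and then prune by Lemma~\ref{ll0}, whereas the paper prunes inside each factor before summing — and your closing remark about closure of NC derivations under monomial multiplication is exactly what Lemma~\ref{ll1} packages.
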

\begin{proof}
	For $\vec{a}\in V^r$, set $\mathcal{N}(\vec{u},\vec{w})=\{\vec{a}\in V^r\mid \gamma(\vec{u}\langle \vec{i},\vec{w}\cdot\vec{a}\rangle)=\sigma\}$ and $\mathcal{N}(\vec{v},\vec{z})=\{\vec{a}\in V^r\mid\gamma(\vec{v}\langle \vec{i},\vec{a}\cdot\vec{z}\rangle)=\sigma\}$ (note the slight asymmetry). By Lemma \ref{ll1}, there is a $\NC_{2r}$ (and hence, $\NC_k$, since $2r<k$) derivation of
	$$
	X_{\vec{a}\vec{z}}\big(\sum_{\vec{a}'\in V^r}X_{\vec{s}\vec{a}'}-1\big)
	$$
	for every $\vec{a},\vec{s},\vec{z}\in V^r$. By subtracting from the above all monomials $X_{\vec{a}\vec{z}}X_{\vec{s}\vec{a}'}$ for which $\vec{a}'\notin \mathcal{N}(\vec{v},\vec{z})$ (which have a $\NC_k$ derivation by Lemma \ref{ll0}) one gets
$$
X_{\vec{a}\vec{z}}\big(\sum_{\vec{a}'\in \mathcal{N}(\vec{v},\vec{z})}X_{\vec{s}\vec{a}'}-1\big).
$$
Adding these for all $\vec{a}\in\mathcal{N}(\vec{u},\vec{w})$ yields
	\begin{equation}\label{derr1}
		\sum_{\vec{a}\in \mathcal{N}(\vec{u},\vec{w})}X_{\vec{a}\vec{z}}\big(\sum_{\vec{a}'\in \mathcal{N}(\vec{v},\vec{z})}X_{\vec{s}\vec{a}'}-1\big).
	\end{equation}
A similar argument shows that there is a $\NC_k$ derivation of
\begin{equation}\label{derr11}
	\sum_{\vec{a}\in \mathcal{N}(\vec{v},\vec{z})}X_{\vec{w}\vec{a}}\big(\sum_{\vec{a}'\in \mathcal{N}(\vec{u},\vec{w})}X_{\vec{a}'\vec{s}}-1\big).
\end{equation}
Subtracting (\ref{derr1}) from (\ref{derr11}) yields (\ref{der1}).

The polynomials in (\ref{der2}) can be derived by setting $r=0$ in Lemma \ref{ll1}.
\end{proof}
\begin{proof}[Proof of Theorem \ref{thm:main2}]
	If $\mathbb{F}=\mathbb{Q}$, and $X_{\vec{u}\vec{v}}$ has a $\PC_k$ derivation over $\mathbb{Q}$ from $\mathrm{Ax}(\Gamma)$, there is some $k'\in\mathbb{N}$ and a $\mathcal{C}_{k'}$ formula distinguishing $(\mathfrak{A}_\Gamma,\vec{z}\mapsto\vec{u})$ from $(\mathfrak{A}_\Gamma,\vec{z}\mapsto\vec{v})$ by Theorem 6.6 in \cite{g2p2}. By Corollary \ref{basiscase}, $\vec{u}$ and $\vec{v}$ are distinguished by $[\alpha_{k'+1,\Gamma}]^{\Sol_{k'+1}^0}$. It follows from Theorem \ref{fund} and Lemma \ref{secondone} that there is a $\NC_{k'+2}$ derivation over $\mathbb{Q}$ from $\mathrm{Ax}(\Gamma)$ of $X_{\vec{u}\vec{v}}$, as required by statement $(1)$.
	
	If $\mathbb{F}=\mathbb{Z}_p$ and $X_{\vec{u}\vec{v}}$ has a $\MC_k$ derivation over $\mathbb{Z}_p$ from $\mathrm{Ax}(\Gamma)$, there is some $k'$ and an $\mathrm{FOS}_{k'}(p)$ formula distinguishing $(\mathfrak{A}_\Gamma,\vec{z}\mapsto\vec{u})$ from $(\mathfrak{A}_\Gamma,\vec{z}\mapsto\vec{v})$ (Lemma \ref{4.1}). Theorem \ref{fund} and Lemma \ref{secondone} imply that there is a $\NC_{k'+1}$ derivation over $\mathbb{Z}_p$ from $\mathrm{Ax}(\Gamma)$ of $X_{\vec{u}\vec{v}}$, thus showing statement (2).
\end{proof}
\subsection{Generalized Cai-F\"urer-Immerman constructions}
In 1992, Cai, F\"urer and Immerman provided, for $k\geq 1$, a family of pairs of non-isomorphic graphs $(\mathcal{G}_k,\mathcal{H}_k)$ which cannot be distinguished by $\mathcal{C}_k$-formulae. Furthermore, the construction of such graphs ensures that $\mathcal{G}_k$ and $\mathcal{H}_k$ only have $O(k)$ vertices and require $\mathcal{C}$-formulae with $\Omega(k)$ variables in order to be distinguished, thus showing an optimal lower bound (up to constant factors) on the number of variables required to define such structures in the logic.

It is known, however, that $\mathcal{G}_k$ and $\mathcal{H}_k$ can be distinguished in poly($k$)-time. On the one hand, the construction ensures that both graphs, for any $k$, are $3$-regular, and isomorphism of graphs with bounded valency can be decided in polynomial time (see \cite{luks}). Alternatively, the isomorphism problem between $\mathcal{G}_k$ and $\mathcal{H}_k$ really encodes a system of linear equations over $\mathbb{Z}_2$, the solvability of which can be decided in polynomial time.

These structures can be generalized to encode a systems of linear equations over an arbitrary finite field (or, indeed, an arbitrary finite Abelian group \cite{holm}). Loosely speaking, the \emph{generalized Cai-F\"urer-Immerman} construction for the field $\mathbb{Z}_p$ provides, for each $k\in\mathbb{N}$, $p$ non-isomorphic $3$-regular graphs $\mathcal{G}_k^{(1)},\hdots,\mathcal{G}_k^{(p)}$. These delimit the power of well known linear algebra based polynomial-time approximations of graph isomorphism, as can be seen in the following known results, which we have paraphrased in terms of the equivalent problem of computing the orbits of the induced action of the automorphism group. Let $\Gamma_{k,p}$ be the disjoint union of the graphs $\mathcal{G}_k^{(1)},\hdots,\mathcal{G}_k^{(p)}$ and let $V$ denote its vertex set.
\begin{theorem}[Theorems 8.1 and 8.2 in \cite{dgp}]\label{thim}
	If $\Gamma=\Gamma_{k,p}$ and $q\neq p$, then the equivalence classes of $[\alpha_{k,\Gamma}]^{\IM_k^q}$ do not coincide with those of the $k$-orbit partition for $\Gamma$. If $q=p$, the equivalence classes of $[\alpha_{3,\Gamma}]^{\IM_3^q}$ coincide with those of the $3$-orbit partition for $\Gamma$.
\end{theorem}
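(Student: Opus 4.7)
The plan is to prove the two halves of Theorem~\ref{thim} separately: the negative statement for $q\neq p$ and the positive statement for $q=p$ at $k=3$. In both cases, one exploits the explicit gadget structure of the generalized CFI construction, but the two halves use this structure in opposite ways.

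For the negative half, I would exhibit, for each $k\in\nats$, two $k$-tuples $\vec{u},\vec{v}\in V^k$ lying in distinct orbits of $\Aut(\Gamma_{k,p})$ but agreeing in $[\alpha_{k,\Gamma_{k,p}}]^{\IM_k^q}$.  Natural candidates are tuples that name corresponding gadget positions within two different components $\mathcal{G}_k^{(a)}$ and $\mathcal{G}_k^{(b)}$, $a\neq b$, under the evident bijection of vertex sets coming from the common base graph $H_k$.  To certify $\IM_k^q$-equivalence I would verify the stability condition directly: for every $\vec{i}\in[k]^{(2r)}$ with $2r<k$, construct one invertible matrix $M\in\mathrm{GL}_{V^r}(\mathbb{Z}_q)$ that simultaneously conjugates all $(\vec{i},\vec{v})$-character vectors of $[\alpha_{k,\Gamma}]^{\IM_k^q}$ to the corresponding $(\vec{i},\vec{u})$-character vectors. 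The matrix $M$ would be assembled as a tensor product of local gadget matrices, each realising the classical CFI ``twist swap'' recast as a $\mathbb{Z}_q$-linear isomorphism; the essential reason the construction succeeds is that the $\mathbb{Z}_p$-valued global discrepancy $a-b$ that distinguishes $\mathcal{G}_k^{(a)}$ from $\mathcal{G}_k^{(b)}$ lives in the $p$-torsion of $\mathbb{Z}$ and so vanishes under base change to $\mathbb{Z}_q$ with $\gcd(p,q)=1$.

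For the positive half, I would show that already $\IM_3^p$ separates the $3$-orbit partition of $\Gamma_{k,p}$.  First, with a single pebble placed at a gadget vertex of $\mathcal{G}_k^{(i)}$, I would argue that the $(\vec{i},\vec{v})$-character vectors, interpreted over $\mathbb{Z}_p$, encode $\mathbb{Z}_p$-linear combinations of the gadget constraints whose aggregate right-hand side is precisely the global twist $i$; since conjugation by an element of $\mathrm{GL}_V(\mathbb{Z}_p)$ preserves this value, $\IM_{3,1}^p$ already separates the $p$ connected components.  Within a single component $\mathcal{G}_k^{(i)}$, one then verifies that the residual refinement at $r=1$, combined with three pebbles allowing the test to probe all $\vec{i}\in[3]^{(2)}$ positions, is enough to distinguish any two $3$-tuples lying in distinct $\Aut(\mathcal{G}_k^{(i)})$-orbits: because $\Aut(\mathcal{G}_k^{(i)})$ acts as a translation group on the gadget variables, the orbit of a $3$-tuple is determined by $O(1)$ local gadget readings plus the already-separated global twist, and each such reading is recoverable by $\mathbb{Z}_p$-linear algebra over the character vectors.

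The main obstacle is producing the simultaneous conjugation $M$ in the negative half; per-character similarities come for free from the CFI folklore, but assembling them into one global invertible $M$ is the technical heart.  I would approach this inductively along a tree decomposition of the base graph $H_k$: at each bag, local gadget swaps contribute a tensor factor to $M$, and the fact that $H_k$ has treewidth $\Omega(k)$ is precisely what prevents the same construction from working once more than $k$ pebbles are available.  The positive half is by comparison a bookkeeping exercise, but it does require one delicate verification, namely that $3$ pebbles (rather than some larger constant) suffice; the argument there reduces to noting that each gadget of the generalized CFI construction has arity at most $3$ (being built from a $3$-regular base graph), so three pebbles are enough to isolate a single gadget's variables and let $\mathbb{Z}_p$-linear algebra do the rest.
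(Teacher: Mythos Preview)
The paper does not prove Theorem~\ref{thim}; it is quoted verbatim as ``Theorems~8.1 and~8.2 in~\cite{dgp}'' and used as a black box in the subsequent argument. There is therefore no proof in this paper to compare your proposal against.

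On the substance of your sketch: the high-level shape is reasonable and matches the folklore understanding of why the generalized CFI graphs behave this way, but it is not close to a proof. For the negative half, the assertion that the tensor-product construction along a tree decomposition yields a \emph{single} invertible $M$ that simultaneously conjugates all character matrices is exactly the hard part, and you have not explained why the local factors glue coherently or why invertibility survives the tensoring over $\mathbb{Z}_q$. The actual arguments in~\cite{dgp} (and the earlier treatment in Holm's thesis~\cite{holm}) proceed via the pebble-game characterisation of the invertible-map equivalence rather than by exhibiting $M$ directly, which sidesteps this difficulty. For the positive half, your claim that the global twist $i$ is a conjugation invariant of the character vectors over $\mathbb{Z}_p$ is the right idea, but ``since conjugation preserves this value'' is doing all the work and needs justification; the standard route is again through the game, or through the logical characterisation, showing that the twist is definable by a rank (or solvability) quantifier over $\mathbb{Z}_p$ with three variables. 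Your remark that $3$ pebbles suffice because the base graph is $3$-regular is essentially correct in spirit but would need to be made precise.
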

The next statement can be deduced from the proof of Theorems 6.1 and 6.2 in \cite{g2p2}.
\begin{theorem}\label{ggpp}
	The equivalence classes of $\equiv_{\PC_k}^p$ on $V^k$ do not coincide with those of the $k$-orbit partition for $\Gamma_{k,q}$ if $q\neq p$.
\end{theorem}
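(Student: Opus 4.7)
The plan is to import the lower bound of Grohe and Pago (\cite{g2p2}, Theorems 6.1 and 6.2) into the present notation. Those results show that when $p\neq q$, the natural algebraic encoding of non-isomorphism between two distinct components of the generalized CFI graph over $\mathbb{Z}_q$ admits no constant-degree PC refutation over $\mathbb{Z}_p$; and the axioms $\mathrm{Ax}(\Gamma_{k,q;\vec{u}\rightarrow\vec{v}})$ are, up to a degree-preserving reduction, exactly this encoding once $\vec{u},\vec{v}$ are chosen to lie in different components.

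First I would select $\vec{u},\vec{v}\in V^k$ whose entries lie in two distinct components $\mathcal{G}_k^{(i)},\mathcal{G}_k^{(j)}$ with $i\neq j$, chosen so that the induced atomic types agree as much as possible. Such tuples lie in different classes of the $k$-orbit partition of $\Gamma_{k,q}$, so by soundness and completeness a PC refutation of $\mathrm{Ax}(\Gamma_{k,q;\vec{u}\rightarrow\vec{v}})$ over $\mathbb{Z}_p$ certainly exists; the task is only to rule out refutations of degree at most $k$.

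Next I would invoke the reduction underlying \cite{g2p2}, Theorems 6.1--6.2: a degree-$k$ PC derivation of $1$ from $\mathrm{Ax}(\Gamma_{k,q;\vec{u}\rightarrow\vec{v}})$ can be transformed, by grouping the variables $x_{uv}$ according to the gadgets of the CFI construction and reading off the corresponding elements of $\mathbb{Z}_q$, into a low-degree PC refutation over $\mathbb{Z}_p$ of an unsolvable $\mathbb{Z}_q$-linear system. Combining this reduction with the classical lower bound that refuting an unsolvable $\mathbb{Z}_q$-system by PC in a different characteristic requires degree growing linearly in the number of equations, I would conclude that no degree-$k$ refutation can exist once $k$ is small relative to the CFI parameter. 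Hence $\vec{u}\equiv_{\PC_k}^p\vec{v}$, even though $\vec{u}$ and $\vec{v}$ lie in distinct orbits, which is precisely the claim.

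The main obstacle is verifying that the reduction is faithful and does preserve degree: one must extract the $\mathbb{Z}_q$-group structure from the combinatorial axioms \textbf{A1}--\textbf{A4} by using the doubly stochastic constraints together with the local-isomorphism axioms, and then argue that in characteristic $p$ the polynomials expressing gadget consistency still encode nontrivial elements of $\mathbb{Z}_q$. This step is the technical core of the Grohe--Pago argument; once it is in place, the theorem follows immediately from the definition of $\equiv_{\PC_k}^p$ and the structure of $\Gamma_{k,q}$.
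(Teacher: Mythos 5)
Your proposal takes essentially the same route as the paper: the paper itself gives no explicit argument and simply observes that the claim ``can be deduced from the proof of Theorems 6.1 and 6.2 in \cite{g2p2},'' which is precisely the reduction you sketch (a degree-$k$ PC refutation of $\mathrm{Ax}(\Gamma_{\vec{u}\rightarrow\vec{v}})$ over $\mathbb{Z}_p$ would yield a low-degree PC refutation over $\mathbb{Z}_p$ of an unsatisfiable $\mathbb{Z}_q$-linear system, contradicting the known degree lower bound when $q\neq p$). Your additional elaboration of the gadget-level translation is a reasonable gloss on what the cited Grohe--Pago argument does, and nothing you say conflicts with the paper's one-line deduction.
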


Originally, the generalized Cai-F\"urer-Immerman constructions were introduced to delimit the expressive power of the extension of fixed point logics with rank operators over a finite field FPR$(p)$ and, correspondingly, the distinguishing power of the extension of first order logic by said operators, FOR$(p)$ (see Chapter 7 of \cite{holm}).  The distinguishing power of the invertible map test in characteristic $p$ is at least that of FOR$(p)$, so in one direction Theorem~\ref{thim} provides a lower bound for this logic.  In the other direction, we can show that the orbit partition on $\Gamma_{k,p}$ can already be defined in FPR$(p)$.  Indeed, it can be defined in the apparently weaker logic FPS$(p)$, giving the following result.

\begin{theorem}\label{needed}
	Let $\Gamma=\Gamma_{k,p}$. If $p\neq q$, there exist $\vec{u},\vec{v}\in V^k$ in different equivalence classes of the $k$-orbit partition of $\Gamma$, such that there are no FOS$_k(q)$-formulae distinguishing $(\mathfrak{A}_\Gamma,\vec{z}\mapsto\vec{u})$ from $(\mathfrak{A}_\Gamma,\vec{z}\mapsto\vec{v})$. If $q=p$ and $\vec{u},\vec{v}\in V^2$, then $(\mathfrak{A}_\Gamma,\vec{z}\mapsto\vec{u})$ is distinguished from $(\mathfrak{A}_\Gamma,\vec{z}\mapsto\vec{v})$ by some FOS$_2(q)$-formula if, and only if, $\vec{u},\vec{v}$ are in different equivalence classes of the $2$-orbit partition of $\Gamma$.
\end{theorem}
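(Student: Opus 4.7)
The plan is to handle the two parts separately, both leveraging Theorem \ref{thim} together with the connection between invertible-map and solvability stability developed earlier.

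For Part 1 (with $p \neq q$), Theorem \ref{thim} gives that $[\alpha_{k,\Gamma}]^{\IM_k^q}$ is strictly coarser than the $k$-orbit partition of $\Gamma_{k,p}$, so we can choose $\vec{u}, \vec{v} \in V^k$ lying in distinct $k$-orbits but the same $\IM_k^q$-class. By Lemma \ref{sec3} (applied componentwise to each $r < k/2$), any $\IM_k^q$-stable partition is also $\Sol_k^q$-stable, hence $[\alpha_{k,\Gamma}]^{\IM_k^q}$ refines $[\alpha_{k,\Gamma}]^{\Sol_k^q}$ and so $\vec{u}, \vec{v}$ lie in the same class of the latter. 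Theorem \ref{fund} then yields FOS$_{k-1}(q)$-indistinguishability; to reach FOS$_k(q)$ I would use the robustness of the CFI lower bound across nearby levels — running the same reasoning for $\Gamma_{k,p}$ one level up via $[\alpha_{k+1,\Gamma}]^{\IM_{k+1}^q}$ produces a pair in distinct $(k+1)$-orbits lying in the same $\IM_{k+1}^q$-class, whose $k$-projections can be arranged to lie in distinct $k$-orbits, and Theorem \ref{fund} at level $k+1$ then delivers the claim.

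For Part 2 (with $q = p$ and $\vec{u}, \vec{v} \in V^2$), the forward direction is immediate since automorphisms preserve every logical formula. For the reverse direction I would exploit that $\Gamma_{k,p}$ is the disjoint union of the $p$ pairwise non-isomorphic components $\mathcal{G}_k^{(1)}, \dots, \mathcal{G}_k^{(p)}$: any 2-orbit is determined by (a) the ordered pair of components containing the two vertices, and (b) the within- or across-component orbit type. Both pieces of information should be definable in FOS$_2(p)$ — component identity via a 2-variable solvability quantifier extracting the $\mathbb{Z}_p$-parity invariant built into the CFI construction (this is where $q = p$ is essential), and the within- and across-component orbit types via the first-order-definable local structure of the $3$-regular CFI vertices. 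The fact that these data suffice is guaranteed by Theorem \ref{thim}: $[\alpha_{3,\Gamma}]^{\IM_3^p}$ equals the 3-orbit partition of $\Gamma_{k,p}$, whose projection to $V^2$ is the 2-orbit partition.

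The main obstacle lies in the reverse direction of Part 2. The 3-IM certificate produced by Theorem \ref{thim} a priori needs three free variables to encode a 3-tuple, whereas FOS$_2(p)$ allows only two. Bridging this gap requires exploiting the specific algebraic structure of the CFI graphs: the character vectors arising at level 3 on $\Gamma_{k,p}$ should reduce, thanks to the tightly constrained local symmetry at each vertex, to matrix-conjugation problems over $V \times V$ matrices expressible by a single solvability quantifier binding two variables. Making this reduction precise, and extracting from it an explicit FOS$_2(p)$-formula for the $\mathbb{Z}_p$-invariant that labels the components of $\Gamma_{k,p}$, is the technical heart of the proof.
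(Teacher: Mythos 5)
The paper provides no explicit proof of this theorem; it is asserted on the strength of the preceding discussion, namely that the distinguishing power of the invertible map test over $\mathbb{Z}_q$ bounds that of FOR$(q)$ and hence FOS$(q)$ from above, and that the orbit partition on $\Gamma_{k,p}$ is definable in FPS$(p)$, with the underlying CFI analysis deferred to \cite{holm} and \cite{dgp}. Your Part 1 correctly reconstructs the intended chain: Theorem \ref{thim} supplies a pair in distinct $k$-orbits but the same $\IM_k^q$-class, Lemma \ref{sec3} transfers this to $\Sol_k^q$-indistinguishability, and Theorem \ref{fund} then gives FOS$_{k-1}(q)$-indistinguishability. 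You rightly notice that this produces $k-1$ rather than the stated $k$; but your proposed fix of applying Theorem \ref{thim} at level $k+1$ is not supported by that theorem as stated, which speaks only of level $k$ for $\Gamma_{k,p}$. The CFI family does in fact have the slack you invoke, but this is a property of the construction that must be taken from \cite{dgp} or \cite{holm}, not a consequence of anything established in this paper, so that step is presently a claim rather than an argument.

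The genuine gap is the converse implication in Part 2, which you diagnose but do not close. You correctly observe that the forward implication is automatic, and you correctly isolate the difficulty, namely getting from the three-variable invariant underlying $\IM_3^p$ and $\Sol_3^p$ down to only two variables. Your plan --- extract the $\mathbb{Z}_p$-twist that separates the components $\mathcal{G}_k^{(i)}$ by a two-variable solvability quantifier, and recover the local orbit types first-order --- is a reasonable programme, but nothing you have written constructs the $\mathbb{Z}_p$-linear system whose solvability encodes the twist while staying within the two-variable budget. The paper's own hint, that the orbit partition is FPS$(p)$-definable, gives a priori only some FOS$_{k'}(p)$ with $k'$ governed by the width of the fixed-point formula; turning that into the tight two-variable bound of the theorem requires a concrete analysis of the CFI gadgets that neither you nor the paper carries out here. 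As it stands, the reverse direction of Part 2 is unproven, and until the system of equations with the required variable discipline is exhibited, the proposal does not establish the theorem.
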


This allows us to show Theorem \ref{thm:main3}.
\begin{proof}[Proof of Theorem \ref{thm:main3}]
	Let $\Gamma=\Gamma_{k,p}$ and $\vec{u},\vec{v}$ be as in the statement of the Theorem. By Theorem \ref{needed} and Theorem \ref{fund}, $\vec{u}$ and $\vec{v}$ belong to different equivalence classes of $[\alpha_{3,\Gamma}]^{\Sol_3^p}$. By Lemma \ref{secondone} $\vec{u}\not\equiv_{\NC_3}^p\vec{v}$ as required.
\end{proof}
Theorem \ref{thm:main3} generalizes Theorem 6.3 in \cite{berkh} to aribitrary positive characteristic. Combined with Theorem \ref{ggpp}, it also shows that bounded degree $\PC,\MC$ and $\NC$ refutations have a similar distinguishing power as the invertible map tests on the class of generalized Cai-F\"urer-Immerman constructions.

\section{Conclusions: where does polynomial calculus lie?}

The invertible map tests can be thought of a family of algorithms, each of which distinguishes tuples of vertices of graphs according to the most general linear algebraic invariants expressible with a bounded number of variables in a logic. As bounded degree $\PC,\MC$ and $\NC$ refutations can be decided solely by using basic field operations, one expects that the equivalences defined by the invertible map tests simulate those defined by the abovementioned proof systems. Theorem \ref{thm:main1} gives a partial proof of this conjecture, leaving it open as to whether the invertible map tests can simulate bounded degree $\PC$ refutations, when taken over some finite field.

Our approach to this question was to attempt to define the above proof systems in the simplest logics which could express the solvability of systems of linear equations. The proof of Lemma \ref{4.1} hints at the flaws of this choice. Deciding whether or not there is a $\PC_k$ refutation of $P\subseteq \mathbb{F}[x_1,\hdots,x_r]$ can be understood as the following procedure, similar to that in Section 4.2.

\begin{enumerate}
	\item[INPUT.] $P\subset \mathbb{F}[x_1,\hdots,x_r]$
	\item[OUTPUT.] \texttt{REFUTE} or \texttt{NOREFUTE}.
	\item Initialize $\mathcal{S}=\{X_Af\mid f\in P, \mathrm{deg}(X_Af)\leq k\}$.
	\item \texttt{while} $\mathrm{span}_\mathbb{F}\mathcal{S}$ has changed since last round or $1\notin \mathrm{span}_\mathbb{F}\mathcal{S}$ \texttt{do lines 3-4}
	\item Find a set $\mathcal{B}$ generating the $\mathbb{F}$-space $\{f\in\mathrm{span}_\mathbb{F}\mathcal{S}\mid \mathrm{deg}(f)<k\}$.
	\item $\mathcal{S}\leftarrow \mathcal{S}\cup \{X_Af\mid f\in\mathcal{B}, \mathrm{deg}(X_Af)\leq k\}$.
	\item \texttt{If} $1\in \mathrm{span}_\mathbb{F}\mathcal{S}$ \texttt{output REFUTE}
	\item \texttt{else output NOREFUTE}.
\end{enumerate}

This procedure runs in polynomial time, as one can find $\mathcal{B}$ by using Gaussian elimination (there is no need to store the set $\mathrm{span}_\mathbb{F}\mathcal{S}$ as a generating set suffices), and the number of iterations of the \texttt{while} loop is bounded by the number of monomials of degree at most $k$ in the variables $\{x_1,\hdots,x_r\}$. If $\mathbb{F}$ is finite, this procedure is a priori \emph{not} definable in FPS$(p)$, as it is not immediate whether there is a canonical choice for the set $\mathcal{B}$ (its counterpart in the procedure for monomial calculus refutations was the set $\mathcal{M}$ of monomials in the span of $\mathcal{S}$). Put otherwise, defining the set $\mathcal{B}$ requires defining the solution space of a system of linear equations over a field of positive characteristic $p$, rather than just determining the solvability of the system and this cannot be done in FPS$(p)$.\footnote{One can use the extended Cai-F\"urer-Immerman constructions to explain this.} The FPC definability of bounded degree polynomial calculus refutations over the field $\mathbb{Q}$ (Theorem 4.9 in \cite{g2p2}) relies in fact on the FPC definability of solution spaces of linear equations over $\mathbb{Q}$ (Theorem 4.11 in \cite{g2p2}). It is important to note, however, that we are only concerned with axioms of the form $\mathrm{Ax}(\Gamma_{\vec{u}\rightarrow \vec{v}})$, and such a restriction could potentially allow a canonical choice of $\mathcal{B}$.

Let us view the problem from the viewpoint of proof complexity. It follows from Lemma \ref{secondone}, that if $\PC_k$ refutations over $\mathbb{Z}_p$ are definable in FPS$(p)$, then for every $k$, there is some $k'$ such that if $\mathrm{Ax}(\Gamma_{\vec{u}\rightarrow \vec{v}})$ has a $\PC_k$ refutation over $\mathbb{Z}_p$, then it has a $\NC_{k'}$ refutation over $\mathbb{Z}_p$. It is known that for all $n$ and for any field, there is a set of axioms on $n(n+1)$ variables which can be refuted by $\PC_3$ but require degree $o(n)$ to be refuted by $\NC$ (Theorem $6$ in \cite{buss}). Furthermore, this lower bound can be shown to be optimal. On the other hand, Buss et.~al.\ have shown that $\NC$ derivations can be used to simulate \emph{tree-like} $\PC$ derivations (see Theorems 5.3 and 5.4 in \cite{bussetal}) with only a small increase in degree. For a set of axioms of the form $\mathrm{Ax}(\Gamma_{\vec{u}\rightarrow \vec{v}})$, it is not known if any $\PC_k$ refutation of such can be converted into a tree-like refutation without incurring in an unbounded increase in degree. 

Finally, let us justify our conjecture; namely, that for any field, the equivalences on tuples of vertices of graphs defined by the invertible map tests simulate those defined by bounded degree $\PC$ refutations. First, Theorems \ref{thim}, \ref{ggpp} and \ref{thm:main3} show that the invertible map tests and bounded degree $\PC$ refutations have a similar distinguishing power on the class of generalized Cai-F\"urer-Immerman graphs, thus suggesting that there should be a tight connection between them. Moreover, the procedure we have described above for deciding $\PC_k$ refutations of a set of polynomials is \emph{linear algebraic invariant} in the sense that after each iteration of the \texttt{while} loop, the $\mathbb{F}$-span of $\mathcal{S}$ is independent of the choice of the set $\mathcal{B}$. This seems to be a strong hint at the fact that extending fixed-point logic with a quantifier over some linear algebraic operator over $\mathbb{Z}_p$ should suffice to define $\PC_k$ refutations over the same field.

\paragraph{Acknowledgements.} We are grateful to Martin Grohe, Benedikt Pago and Gregory Wilsenach for many helpful discussions and suggestions.

\bibliographystyle{alpha}
\bibliography{calculus}
\end{document}